\newcolumntype{P}[1]{>{\centering\arraybackslash}p{#1}}
\newcolumntype{M}[1]{>{\centering\arraybackslash}m{#1}}
\newcolumntype{L}[1]{>{\raggedright\arraybackslash}m{#1}}
\newcolumntype{C}[1]{>{\centering\arraybackslash}m{#1}}
\newcolumntype{R}[1]{>{\raggedleft\arraybackslash}m{#1}}
\definecolor{upmaroon}{rgb}{0.48, 0.07, 0.07}
\definecolor{royalazure}{rgb}{0.0, 0.22, 0.66}
\definecolor{pakistangreen}{rgb}{0.0, 0.4, 0.0}
\theoremstyle{plain}                                
\newtheorem{theorem}{Theorem}[section]
\newtheorem{defn}[theorem]{Definition}
\newtheorem{exmpl}[theorem]{Example}   
\newtheorem{lemma}[theorem]{Lemma}
\newtheorem{assumptions}[subsection]{Assumption}
\setlist[itemize]{noitemsep, topsep=0pt}
\def\iid{independent and identically distributed}
\newcommand{\eps}{\varepsilon}
\newcommand{\R}{\mathbb{R}} 
\newcommand{\one}{\mathds{1}}%{\mathbb{1}}%{\mathds{1}}
\newcommand\independent{\protect\mathpalette{\protect\independenT}{\perp}}
\def\independenT#1#2{\mathrel{\rlap{$#1#2$}\mkern2mu{#1#2}}}
\newcommand{\norm}[1]{\lVert #1 \rVert}
\newcommand{\normbig}[1]{\big\lVert #1 \big\rVert}
\newcommand{\normBig}[1]{\Big\lVert #1 \Big\rVert}
\newcommand{\normP}[2]{\norm{#1}_{\PP, #2}}
\newcommand{\normPbig}[2]{\normbig{#1}_{\PP, #2}}
\newcommand{\normPBig}[2]{\normBig{#1}_{\PP, #2}}
\newcommand{\normone}[1]{\lvert #1\rvert}
\newcommand{\normonebig}[1]{\big\lvert #1\big\rvert}
\newcommand{\normoneBig}[1]{\Big\lvert #1\Big\rvert}
\newcommand{\normonebigg}[1]{\bigg\lvert #1\bigg\rvert}
\DeclareMathOperator{\E}{\mathbb{E}}
\DeclareMathOperator{\Cov}{Cov}
\DeclareMathOperator{\Prob}{\mathbb{P}}
\DeclareMathOperator{\Var}{Var}
\DeclareMathOperator{\median}{median}
\DeclareMathOperator{\CI}{CI}
\newcommand{\Ber}{\mathrm{Bernoulli}}
\newcommand{\NN}{N}
\newcommand{\nn}{n}
\newcommand{\KK}{K}
\newcommand{\kk}{k}
\newcommand{\TauN}{\mathcal{T}} 
\newcommand{\PP}{P}
\newcommand{\PcalN}{\mathcal{P}_{\NN}}
\newcommand{\deltaN}{\delta_{\NN}}
\newcommand{\DeltaN}{\Delta_{\NN}}
\newcommand{\indset}[1]{[#1]}
\newcommand{\salg}{b}
\newcommand{\Salg}{B}
\newcommand{\W}{\boldsymbol{W}}
\newcommand{\Yi}{Y_i}
\renewcommand{\Xi}{X_i}
\newcommand{\Xj}{X_j}
\newcommand{\Zi}{Z_i}
\newcommand{\Zj}{Z_j}
\newcommand{\Wi}{W_i}
\newcommand{\Wj}{W_j}
\newcommand{\Wminusi}{W_{-i}}
\newcommand{\Wm}{W_m}
\newcommand{\Si}{S_i}
\newcommand{\Sj}{S_j}
\newcommand{\Sm}{S_m}
\newcommand{\Sr}{S_r}
\renewcommand{\S}{S}
\newcommand{\Ci}{C_i}
\newcommand{\Cminusi}{C_{-i}}
\newcommand{\Cj}{C_j}
\newcommand{\Cm}{C_m}
\newcommand{\CXZi}{\Ci, \Xi, \Zi}
\newcommand{\CXi}{\Ci, \Xi}
\newcommand{\CXione}{\Ci, \Xi^{\pi}}
\newcommand{\CXizero}{\Ci, \Xi^{1-\pi}}
\newcommand{\CZi}{\Ci, \Zi}
\newcommand{\CZj}{\Cj, \Zj}
\newcommand{\Di}{D_i}
\newcommand{\Ui}{U_i}
\newcommand{\Xipi}{\Xi^{\pi}}
\newcommand{\Ximinuspi}{\Xi^{1-\pi}}
\newcommand{\Zipi}{\Zi^{\pi}}
\newcommand{\Ziminuspi}{\Zi^{1-\pi}}
\newcommand{\thetazeroN}{\theta_{\NN}^0}
\newcommand{\thetazeroi}{\theta_i^0}
\newcommand{\thetazeroj}{\theta_j^0}
\newcommand{\thetazerod}{\theta_{d}^0}
\newcommand{\thetazerodegr}[1]{\theta_{\degr{#1}}^0}
\newcommand{\htheta}{\hat\theta}
\newcommand{\hthetadegr}[1]{\hat\theta_{\degr{#1}}}
\newcommand{\hthetad}{\hat\theta_{d}}
\newcommand{\hthetas}{\htheta_{\salg}}
\newcommand{\thetazeroNGlobal}{\xi_{\NN}^0}
\newcommand{\thetazeroiGlobal}{\xi_i^0}
\newcommand{\hthetaGlobal}{\hat\xi}
\newcommand{\hthetadGlobal}{\hat\xi_{d}}
\newcommand{\SIk}{\mathcal{S}_{\Ik}}
\newcommand{\SIkc}{\mathcal{S}_{\Ikc}}
\newcommand{\SIkcdegr}[1]{\mathcal{S}_{I_{k(#1)}^c}}
\newcommand{\SIkdegr}[1]{\mathcal{S}_{I_{k(#1)}}}
\newcommand{\Ikdegr}[1]{I_{k(#1)}}
\newcommand{\gzerozero}{g_0^0}
\newcommand{\gonezero}{g_1^0}
\newcommand{\hzero}{h^0}
\newcommand{\gzero}{g_0}
\newcommand{\gone}{g_1}
\newcommand{\goneIkc}{\hat g_1^{\Ikc}}
\newcommand{\gzeroIkc}{\hat g_0^{\Ikc}}
\newcommand{\h}{h}
\newcommand{\hIkc}{\hat h^{\Ikc}}
\newcommand{\eIkc}{\hat e^{\Ikc}}
\newcommand{\fk}{f_k}
\newcommand{\etazero}{\eta^0}
\newcommand{\hetaIkc}{\hat\eta^{\Ikc}}
\newcommand{\hetaIkcdegr}[1]{\hat\eta^{I_{\kk(#1)}^c}}
\newcommand{\Ik}{I_{\kk}}
\newcommand{\Ikc}{I_{\kk}^c}
\newcommand{\epsYi}{\eps_{\Yi}}
\newcommand{\epsY}{\eps_{Y}}
\newcommand{\epsWi}{\eps_{\Wi}}
\newcommand{\epsCi}{\eps_{\Ci}}
\renewcommand{\phi}{\varphi}
\newcommand{\dmax}{d_{\mathrm{max}}}
\newcommand{\tildedmax}{ d_{\mathrm{max}}'}
\newcommand{\degr}[1]{d(#1)}
\newcommand{\Acald}{\mathcal{A}_{d}}
\newcommand{\tildeG}{G_D}
\newcommand{\tildeE}{E_D}
\newcommand{\Cnormp}{C_1}
\newcommand{\Cnorminfty}{C_2}
\newcommand{\CnormEta}{C_5}
\newcommand{\CproductAndDegre}{C_6}
\newcommand{\Chzero}{C_3}
\newcommand{\Ctheta}{C_4}
\newcommand{\CLpfournorm}{C_7}
\newcommand{\sigmainfty}{\sigma_{\infty}}
\newcommand{\hvars}{\hat\sigma_{\infty, \salg}^2}
\newcommand{\hsigmas}{\hat\sigma_{\infty, \salg}}
\newcommand{\sigmaN}{\sigma_{\NN}}
\newcommand{\ps}{p_{\salg}}
\newcommand{\paggr}{p_{\mathrm{aggr}}}
\newcommand{\paggrzero}{\paggr^0}
\newcommand{\paggrTheta}{p_{\mathrm{aggr}}^{\theta}}
\newcommand{\Peps}{P_{\varepsilon}}
\newcommand{\PC}{P_{C}}
\newcommand{\T}{\sigmainfty^2(\Peps, \PC, \etazero; G)}
\newcommand{\hPheps}{\hat{P}_{\hat\varepsilon}}
\newcommand{\hPC}{\hat{P}_{C}}
\newcommand{\heta}{\hat\eta}
\newcommand{\Tboot}{\sigmainfty^2(\hPheps, \hPC, \heta; G)}
\newcommand{\hthetazeroi}{\hat{\theta}^0_i}
\newcommand{\hepsYi}{\hat{\varepsilon}_{Y_i}}
\newcommand{\hepsYj}{\hat{\varepsilon}_{Y_j}}
\newcommand{\gzeroIkci}[1]{\hat{g}_0^{I^c_{k(#1)}}}
\newcommand{\goneIkci}[1]{\hat{g}_1^{I^c_{k(#1)}}}
\newcommand{\hIkci}[1]{\hat{h}^{I^c_{k(#1)}}}
\DeclareMathOperator{\pa}{pa}
\begin{document}

\title{Treatment Effect Estimation with Observational Network Data using Machine Learning}  
  
\author[1]{Corinne Emmenegger} 
\author[1]{Meta-Lina Spohn}
\author[2]{Timon Elmer}
\author[1]{Peter B\"uhlmann}
\affil[1]{Seminar for Statistics, ETH Z\"urich}
\affil[2]{Department of Humanities, Social and Political Sciences, ETH Z\"urich}

\date{}

\setcounter{Maxaffil}{0}
\renewcommand\Affilfont{\itshape\small}
\maketitle

\begin{abstract}
\noindent
Causal inference methods for treatment effect estimation usually assume independent units. However, this assumption is often questionable because units may interact, resulting in spillover effects between them.
We develop augmented inverse probability weighting (AIPW) for estimation and inference of the expected average treatment effect (EATE) with observational data from a single (social) network with spillover effects. In contrast to overall effects such as the global average treatment effect (GATE), the EATE measures, in expectation and on average over all units, how the outcome of a unit is causally affected by its own treatment, marginalizing over the spillover effects from other units. We develop cross-fitting theory with plugin machine learning to obtain a semiparametric treatment effect estimator that converges at the parametric rate and asymptotically follows a Gaussian distribution. The asymptotics are developed using
the dependency graph rather than the network graph, which makes explicit that we allow for spillover effects beyond immediate neighbors in the network. We apply our AIPW method to the Swiss StudentLife Study data to investigate the effect of hours spent studying on exam performance accounting for the students' social network.
\end{abstract}

\noindent
\textbf{Keywords:} 
Dependent data,
interference,
observed confounding,
semiparametric inference,  
spillover effects.

%%%%%%%%%%%%%%%%%%%%%%%
\section{Introduction}\label{sec:intro}
%%%%%%%%%%%%%%%%%%%%%%%

Classical causal inference approaches for treatment effect estimation with observational data usually assume independent units.
This assumption is part of the common stable unit treatment value assumption (SUTVA)~\citep{rubin1980}. However, independence is often violated  
in practice due to interactions among units that lead to so-called spillover effects.
For example, the vaccination against an infectious disease (treatment) of a person (unit) may not only influence this person's health status (outcome), but may also protect the health status of other people the person is interacting with~\citep{Perez-Heydrich2014, Saevje2021}. 
In the presence of spillover effects, standard algorithms fail to separate correlation from causation, and spurious associations due to network dependence contribute to the replication crisis~\citep{Lee-Ogburn2021} and may yield biased causal effect estimators and invalid inference~\citep{Sobel2006, Perez-Heydrich2014, sofrygin2017, eckles2021, Lee-Ogburn2021, ogburn2014b}.
New approaches are required to guarantee valid causal inference from observational data with spillover effects.

We consider the following types of spillover effects: i) causal effects of other units' treatments on a given unit's outcome, referred to as interference in the literature~\citep{Sobel2006, Hudgens-Halloran2008}, and ii) causal effects of other units' covariates on a given unit's treatment or outcome\footnote{Another notion of spillover effects is frequently used in the social sciences; please see Section~\ref{sect:app_socsci} in the appendix for a discussion.}. The spillover effects a unit receives are governed by proximity of this unit to other units in a known undirected network $G$.
The edges of this network represent some kind of interaction or relationship of the respective units such as friendship, geographical closeness, or shared department in a company.

In this paper, the causal effect of interest and target of inference is the expected average treatment effect (EATE)~\citep{Saevje2021} in an observational setting. The EATE measures, in expectation and on average over all units, how the outcome of a unit is causally affected by its own treatment in the presence of spillover effects from other units. The EATE is the statistical parameter when the question is how, on average for all units, the outcome of a specific unit is influenced when only its own treatment is altered. In the infectious disease example, the EATE measures the average expected difference in health status of an individual assigned to the vaccination versus not, 
marginalizing over unit-specific covariates and spillover effects of other people. This corresponds to the medical effect of the vaccine in a person's body. 
This interpretation highlights that the EATE is not an estimand for policy evaluation, where, for example, one is interested in capturing 
the effect of jointly vaccinating a sample of the population.

We now formalize the EATE following~\citep{laan2017}. 
For each unit $i=1,2,\ldots,\NN$, let $\Wi\in\{0, 1\}$ be the dichotomous treatment, $\Yi$ be the response, and $\Ci$ be the covariates of unit $i$. The $\NN$ units are connected in a fixed undirected network $G$ in which they may exhibit spillover effects of the two above mentioned types $i)$ and $ii)$ from their immediate neighbors and/or units further away. Let $P_L^N$ be the observational distribution of $O = (W_i, C_i, Y_i)_{i = 1, \ldots, N}$, where $L$ is the distribution of $W = (W_1, \ldots, W_N) $ given $C = (C_1, \ldots, C_N)$. Let $P_{\tilde{L}}^N$ be the distribution of $\widetilde{O} = (\widetilde{W_i}, C_i, \widetilde{Y_i})_{i = 1, \ldots, N}$, where the conditional distribution $L$ of $W$ given $C$ has been replaced by the user defined distribution $\tilde{L}$. This distribution $\tilde{L}$ describes the intervention on the treatment vector $W$ that the researcher is interested in. We can then define the EATE as  
\begin{align*}
    \theta_N^0 = \theta_N^0(1) - \theta_N^0(0), 
 \end{align*} where 
 \begin{align*}
    \theta_N^0(w):= \frac{1}{N}\sum_{i =1}^N \E_{P_{\tilde{L}_i(w)}^N}\left[Y_i^{\text{do}(W = \tilde{L}_i(w))}\right],   
 \end{align*} where we use the do-notation of~\citet{pearl1995} and
\begin{align*}
    \tilde{L}_i(w) = (W_1, \ldots, W_{i-1}, w, W_{i+1}, \ldots, W_N), \quad \text{for} \quad w \in \{0, 1\},
\end{align*} represents the intervention on the unit-specific treatment $W_i$ (setting it to constant $w$), but the distribution of treatments $W_j$ for the other $N-1$ units $j$ in the network are left unchanged. In particular, the intervention $\tilde{L}_i(w) $ is independent of $C$. Thus, $ \theta_N^0(1) $ evaluates a collection of unit-specific distributions, $(\tilde{L}_1(1), \ldots, \tilde{L}_N(1))$, which cannot be rewritten as a single intervention $\tilde{L}$ on the whole treatment vector $W$. By denoting the EATE by $ \theta_N^0$, it remains implicit that it is defined conditional on a specific network $G$, while it is explicit that it is a function of the given sample size $N$. Consequently, the EATE's true value can vary depending on the sample size and the network structure.

To simplify notation, we rewrite the EATE by 
\begin{displaymath}%\label{eq:thetaNinit}
	\thetazeroN = \frac{1}{\NN}\sum_{i=1}^{\NN} \thetazeroi, 
\end{displaymath}
where $$\thetazeroi = \E_{W_{-i}, C_{-i}, C_i}\Big[ \E\left[\Yi^{do(\Wi = 1)} - \Yi^{do(\Wi = 0)}\mid W_{-i}, C_{-i}, C_i\right]\Big], $$ and $W_{-i} = (W_1, \ldots, W_{i-1}, W_{i+1}, \ldots, W_N)$ and $C_{-i} = (C_1, \ldots, C_{i-1}, C_{i+1}, \ldots, C_N)$. Thus, the EATE equals the average of the unit-specific treatment effects $\thetazeroi$, that is, the expected difference in outcomes $\Yi$ if the treatment was assigned to unit $i$ versus if it was retained from unit $i$. The unit-specific treatment effects may not be the same for all units because the outcomes may have different distributions conditional on $W_{-i}$ and $C_{-i}$ across units due to the spillover effects. In the setting without spillover effects, the distribution of $Y_i$ does not depend on $W_{-i}$ and $C_{-i}$, for each $i = 1, \ldots, N$, and thus the EATE coincides with the average treatment effect (ATE) if spillover effects are absent~\citep{Neyman1923, Rubin1974}.

We impose the following key assumption (that is standard in this literature~\citep{laan2017,Laan2014, sofrygin2017}): 
the spillover effects can be summarized by lower dimensional features. That is, we will use domain knowledge-informed features that are arbitrary functions of the network $G$ and the treatment and covariate vectors of all units~\citep{manski1993, chin2019}.
The features are assumed to capture all pathways through which spillover effects take place. 
For example, \citet{Cai2015} and~\citet{Leung2020} model the purchase of a weather insurance (outcome) of farmers in rural China as a function of attending a training session (treatment) and the proportion of friends who attend the session (feature on direct neighbors in the network).

In the following, we will assume a structural equation model (SEM) to impose our assumptions on the data generating mechanism of the joint distribution of $(W_i, C_i, Y_i)_{i =1\ldots, N}$. The outcome and propensity score model of the SEM may be highly complex and nonsmooth and include interactions and high-dimensional variables. 
We then follow an augmented inverse probability weighting (AIPW)~\citep{Robins1995} approach to estimate the EATE $\thetazeroN$ in the context of this model. 
We estimate the outcome and propensity score models with arbitrary machine learning algorithms and plug them into our AIPW estimand identifying $\thetazeroN$. 
These machine learning estimators may be highly complex and  
suffer from regularization bias and slow converge rates. However, the use of sample splitting with cross-fitting~\citep{Chernozhukov2018} allows us to address these issues. 
Limiting the growth of dependencies between units, our estimator of the EATE is consistent, converges at the $\sqrt{N}$-rate, and asymptotically follows a Gaussian distribution. This allows us to construct confidence intervals and p-values.

%%%%%%%%%%%%%%%%%
\subsection{Our Contribution and Comparison to Literature}\label{sect:contribution}
%%%%%%%%%%%%%%%%%

Our work is most related to the 
literature on semiparametric treatment effect estimation and inference with observational data from a single network. 
\citet{autoG2021} develop a network version of the g-formula~\citep{Robins1986} and perform outcome regression, assuming that the data can be represented as a chain graph, which is a graphical model that is generally incompatible with our SEM approach~\citep{lauritzen2002}.
An SEM approach is also used by \citet{Laan2014}, \citet{laan2017} and \citet{sofrygin2017}.
These works consider a similar model as we do and propose semiparametric treatment effect estimation by targeted maximum likelihood (TMLE)~\citep{vanderLaanRubin2006, LaanBook2011, LaanBook2018}. 
\citet{Laan2014} and~\citet{sofrygin2017} 
primarily consider global effects that  compare two hypothetical interventions on the whole treatment vector. An example of such an effect is 
the global average treatment effect (GATE), which contrasts the interventions of treating all units of the population versus treating no unit of the population. In contrast, we consider the EATE that is the average effect of assigning the treatment to one unit versus not and integrate out the treatment selections from the other units. Causal effects like the EATE summarizing the effect of $\NN$ unit-specific interventions generally cannot be described using a single intervention on the whole treatment vector, as done for global effects. The behavior of estimators for the EATE under the wrong i.i.d. assumption is studied by~\citep{Saevje2021}.  \citet{laan2017} mention a possible extension to estimate the EATE with TMLE, but all their results are for global effects such as the GATE. 
Their theory assumes some kind of a bounded entropy integral, 
which is difficult to verify for machine learning methods.

Our contribution includes the following. 
First, 
we present a semiparametric, machine learning-based approach to estimate the EATE with observational data from a single network.
Our approach enables performing inference, including confidence intervals and p-values. 
Particularly, we do not require multiple disjoint networks. 
We develop a cross-fitting algorithm under interference and reason in terms of the dependency graph to explicitly allow for different interactions, also specifically ones that are beyond immediate neighbors in the network.
Second, the limiting asymptotic Gaussian distribution and optimal $\sqrt{\NN}$-convergence rate of the EATE estimator are achieved even if the number of ties of a unit may diverge asymptotically. 
To reach this optimal convergence rate to estimate global effects, \citet{sofrygin2017} need to uniformly bound the neighborhood size of a unit.
Third, our algorithm based on sample splitting is easy to understand and implement, and the user may choose any machine learning algorithm.
Fourth, we analyze the Swiss StudentLife Study data~\citep{Stadtfeld2019, Voeroes2021} and estimate the effect of study time on the grade point average of freshmen students after their first-year examinations at one of the world's leading universities.

\textit{Outline of the Paper.}
Section~\ref{sect:estimator} presents the model assumptions, characterizes the treatment effect of interest, outlines the procedures for the point estimation of the EATE and estimation of its variance, 
and establishes asymptotic results. 
Section~\ref{sect:experiments} demonstrates our methodological and theoretical developments in a simulation study 
and on empirical data of the StudentLife Study.

%%%%%%%%%%%%%%%%%%%%%%
\section{Framework and our Network AIPW Estimator}\label{sect:estimator}
%%%%%%%%%%%%%%%%%%%%%%

%%%%%%%%%%%%%%%%%%%%%%
\subsection{Model Formulation}\label{sect:model-formulation}
%%%%%%%%%%%%%%%%%%%%%%

We consider $i=1,\ldots,\NN$ units interacting in a known undirected network $G$. 
For each unit $i$,
we observe a binary treatment $\Wi\in\{0, 1\}$, a univariate outcome $\Yi$, and a possibly multivariate vector of observed covariates $\Ci$ that may causally affect $\Wi$ and $\Yi$.
The outcome $\Yi$ may be dichotomous or continuous, and the potentially multivariate covariates $\Ci$ may consist of discrete and continuous components.
Irrespective of whether the outcomes are continuous or dichotomous, we can consider the following SEM with additive error terms for $i=1,\ldots, N$
\begin{equation}\label{eq:SEM}
	\begin{array}{rcl}
		\Ci &\leftarrow& \epsCi\\
        \Zi & \leftarrow& \big(f^1_z(\Cminusi, G), \ldots, 
	f^t_z(\Cminusi, G)
	\big)\\
		\Wi &\leftarrow&\hzero(\CZi) + \epsWi\\
        \Xi & \leftarrow& \big(f^1_x(\Wminusi, \Cminusi, G), \ldots, 
	f^r_x(\Wminusi, \Cminusi, G)
	\big)\\
		\Yi &\leftarrow& \Wi \gonezero(\CXi) + (1-\Wi)\gzerozero(\CXi) +  \epsYi,
	\end{array}
\end{equation}
where the errors $\epsWi$ and $\epsYi$ are jointly independent, 
we have $\E[\epsWi | \Ci, \Zi] = 0$ and $\E[\epsYi | \Wi, \Ci, \Xi] = 0$, the errors satisfy the assumptions in~\citep{Buehlmann1997} (required for the bootstrap variance results in Appendix~\ref{sect:proofs-thm-var-est-boot}), and the $\epsWi$'s are identically distributed and the $\epsYi$'s are identically distributed (required for the alternative variance results in Appendix~\ref{sect:var_est}).
We note that the identical distribution of the error terms is only required for our approach to estimate standard errors.
The vector $\Cminusi= (C_1, C_2, \ldots, C_{i-1}, C_{i+1}, \ldots, C_N)$ denotes the vector of covariates of units $j\neq i$, and  $\Wminusi$ is similarly defined. The binary treatments $\Wi$ can be thought of as $\Ber(\hzero(\CZi))$ realizations. 
A constant $\hzero$ corresponds to a Bernoulli experiment.
This SEM encodes the assumption that the covariates $\Ci$ and features $\Xi$ suffice to control for confounding of the effect of the treatment on the outcome. The propensity score function $h^0(\cdot, \cdot)$ and the outcome model consisting of $g_1^0(\cdot, \cdot)$ and $g_0^0(\cdot, \cdot)$ are fixed but unknown functions that all units share.
Nevertheless, the distribution of the responses may differ across units due to the $X$-spillover that captures effects from, for example, a unit's neighbors' covariates and treatment assignments as described next. Because every unit may have a different number of neighbors, the $\Xi$'s may follow a different distribution across different units, resulting in non-fixed distributions of the responses across units. Furthermore, the individual equations in~\eqref{eq:SEM} have to be understood in a distributional sense in that, if for example $\gonezero\equiv 0\equiv \gzerozero$, we have $\Yi = \epsYi$ in distribution only.

The functions $f^l_z$, $l\in\indset{t}$ and $f^l_x$, $l\in\indset{r}$, which are shared by all units and used to build the $Z$- and $X$-features, are assumed to be known and their concatenations are assumed to be of fixed dimensions $t$ and $r$, respectively.
This is analogous to 
the in-practice considerations in~\citet{sofrygin2017}. 
We also allow for features of further degree neighbors: for example, $f_x^1$ might capture the fraction of treated units that are a distance of $2$ from a given unit in the network $G$. 
Making use of an implied dependency graph gives a more transparent formulation; see Section~\ref{sec:dependency_graph}. 
Since the network $G$ is undirected, our spillover effects are assumed to be reciprocal; that is, if unit $i$ receives spillover effects from unit $j$ through $\Wj$ and/or $\Cj$, then unit $j$ also receives spillover effects from unit $i$ through $\Wi$ and/or $\Ci$. 
Example~\ref{example:features} illustrates the construction of $2$-dimensional $X$-features. Importantly, the $X$- and $Z$-features render the unit-level data dependent. In addition, the distributions of propensity scores and outcomes are not generally identical across units due to distributional differences of these features.  
 
\begin{figure}
	\centering
	\includegraphics[width=0.3\textwidth]{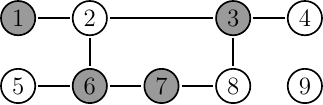}
		\caption[]{\label{fig:network-example} 
	A network on nine units where the node label represents the number of a unit. Gray nodes receive the treatment, corresponding to $\Wi = 1$, and white ones do not, corresponding to $\Wi = 0$. 
	}
	\label{fig:network_features}
\end{figure}

\begin{exmpl}\label{example:features}
Consider the network in Figure~\ref{fig:network_features} where
gray nodes take the treatment and white ones do not. We choose $r=2$ many $X$-features and discard any influence of $C_j$ in $X_i$, that is, $f^l_x\big(\{(\Wj, \Cj)\}_{j\in\indset{\NN}\setminus\{i\}}, G\big) = f^l_x\big(\{(\Wj)\}_{j\in\indset{\NN}\setminus\{i\}}, G\big)$ for $l=1,2.$ Given a unit $i$, we choose the first feature in $\Xi$ as the fraction of treated neighbors of unit $i$ and the second feature as the fraction of treated neighbors of neighbors of $i$. 
Let us consider unit $i=6$ in Figure~\ref{fig:network_features}. Its neighbors are the units $2$, $5$, and $7$, and its neighbors of neighbors are the units $1$ and $3$ (neighbors of unit $2$) and unit $8$ (neighbor of unit $7$), where we exclude $i=6$ from its second degree neighborhood by definition. Therefore, we have $X_6 = (1/3, 2/3)$ because one out of three neighbors is treated and two out of three neighbors of neighbors are treated. The whole $9\times 2$ dimensional $X$-feature matrix is obtained by applying the same computations to all other units $i$.
\end{exmpl}

%%%%%%%%%%%%%%%%%%%%%%%%%%%%%%%%%%%%%%%%%%%%%%%%%%
\subsection{Treatment Effect and Identification}
%%%%%%%%%%%%%%%%%%%%%%%%%%%%%%%%%%%%%%%%%%%%%%%%%%

Plugging in the outcome equation of the SEM \eqref{eq:SEM}, we can rewrite the treatment effect of interest, the EATE, as
\begin{align}\label{eq:thetaN}
	\thetazeroN 
 =& \frac{1}{\NN}\sum_{i=1}^{\NN} \E_{W_{-i}, C_{-i}, C_i}\Big[ \E\left[\Yi^{do(\Wi = 1)} - \Yi^{do(\Wi = 0)}\mid W_{-i}, C_{-i}, C_i\right]\Big]\nonumber\\
 = & \frac{1}{\NN}\sum_{i=1}^{\NN}\E_{\Ci, \Xi}\big[\E_{\epsYi}[\Yi \,\vert\, \text{do}(\Wi = 1),\Ci, \Xi] - \E_{\epsYi}[\Yi \,\vert\, \text{do}(\Wi = 0),\Ci, \Xi]\big]\nonumber\\
 =& \frac{1}{\NN}\sum_{i=1}^{\NN}\E_{\Ci,\Xi}[\gonezero(\CXi) - \gzerozero(\CXi)],
\end{align}
where we get that the unit-specific treatment effect of unit $i$ is $\thetazeroi = \E_{\Ci,\Xi}[\gonezero(\CXi) - \gzerozero(\CXi)]$. Particularly, we assume that given the observable confounders $C_i$ and features $X_i$, we can replace the do-operator by respective conditioning. The expectation $\E_{\Ci,\Xi}$ over $\Ci$ and $\Xi$ is with respect to the observational distributions of $\Ci$ and $\Xi$, as defined by the SEM~\eqref{eq:SEM}. 
This notation makes explicit that the EATE 
is conditional on $\NN$, whereas it remains implicit that it is also conditional on the network $G$. We refer to~\cite{sofrygin2017} for a discussion of the interpretation of such conditional effects. 

Estimating $\gonezero$ and $\gzerozero$ by regression machine learning algorithms and plugging them into~\eqref{eq:thetaN} would not result in a parametric convergence rate and an asymptotic Gaussian distribution of the so-obtained estimator.  
To obtain asymptotic normality with convergence at the $\sqrt{N}$-rate, a centered correction term involving the propensity score $h^0$ is added to $\gonezero(\CXi) - \gzerozero(\CXi)$, and we can identify the EATE as follows. 
\begin{lemma}\label{lem:identifiability}
	Let $i\in\indset{\NN}$. 
    Let 
    \begin{equation} \label{def:data}
    \Si=(\Ci, \Zi, \Wi, \Xi, \Yi)
\end{equation}
be the concatenation of the observed variables for unit $i$. For concatenations $\eta = (\gone, \gzero, \h)$ of general nuisance functions $\gone$, $\gzero$, and $\h$, consider the score 
\begin{equation}\label{eq:score}
	\phi(\Si, \eta) 
	=
	\gone(\CXi) - \gzero(\CXi)
	+ \frac{\Wi}{\h(\CZi)} \big(\Yi - \gone(\CXi)\big)
	- \frac{1 - \Wi}{1 - \h(\CZi)} \big(\Yi - \gzero(\CXi) \big) 
\end{equation}
including the above-mentioned correction term. 
	For the true nuisance functions $\etazero=(\gonezero, \gzerozero, \hzero)$, we have $\E[ \phi(\Si, \etazero) ] = \thetazeroi$ and can consequently identify the EATE \eqref{eq:thetaN} by
	\begin{equation}\label{eq:identifyThetazeroN}
		\thetazeroN = \frac{1}{\NN} \sum_{i=1}^{\NN} \E\big[\phi(\Si, \etazero)\big].
	\end{equation}	
The above expectation is with respect to the law of $\Si$, but we omit it for notational simplicity.
\end{lemma}

The proof of Lemma~\ref{lem:identifiability} is provided in Appendix~\ref{sect:proofs-thm-Gauss}. Based on this lemma, we will present our estimator of $	\thetazeroN$ in Section~\ref{sec:estimation}. 
The true nuisance functions $\etazero=(\gonezero, \gzerozero, \hzero)$ are not of statistical interest, but  have to be estimated to build an estimator of $\thetazeroN$, and we will estimate them 
using regression machine learning algorithms.  Such machine learning estimators might suffer from regularization bias and converge slower than at the $\sqrt{\NN}$-rate. 
However, the two correction terms $\Wi/\h(\CZi) (\Yi - \gone(\CXi))$ and $(1 - \Wi)/(1 - \h(\CZi)) (\Yi - \gzero(\CXi))$ make the score $\phi$ Neyman orthogonal, which counteracts 
the effect of regularization bias. Moreover, the machine learning estimators are only required to converge at a moderate rate; 
 please see Section~\ref{sec:estimation} for further details.

\citet{Scharfstein1999} and~\citet{Robins2005}
consider a similar score $\phi$ for causal effect estimation and inference under the SUTVA assumption, and their function is based on the influence function for the mean for missing data from~\citet{Robins-Rotnitzky1995}. 
Moreover, it is also used to compute the AIPW estimator under SUTVA, 
and our score $\phi$ defined in \eqref{eq:score} coincides with the one of the AIPW approach under SUTVA if we omit the $X$- and $Z$-spillover features. In this case, we can reformulate $\phi$ as 
\begin{displaymath}
    \phi(\Si, \etazero) 
    =
    \frac{\Wi\Yi}{e(\Ci)} - \frac{(1-\Wi)\Yi}{(1-e(\Ci))}
    - \frac{\Wi-e(\Ci)}{e(\Ci)(1 - e(\Ci))} \Big( (1-e(\Ci))\E[\Yi|\Wi = 1, \Ci] + e(\Ci) \E[\Yi|\Wi=0,\Ci] \Big),
\end{displaymath}
where $e(\Ci)  = \E[\Wi|\Ci]= \hzero(\Ci)$ denotes the propensity score,  $\E[\Yi|\Wi = 1, \Ci] = \gonezero(\CXi) $, and $\E[\Yi|\Wi = 0, \Ci] = \gzerozero(\CXi)$. This equivalence remains true if the true nuisance functions are replaced by their estimators.

%%%%%%%%%%%%%%%%%%%%%%%%%%%%%
\subsection{Dependency Graph}\label{sec:dependency_graph}
%%%%%%%%%%%%%%%%%%%%%%%%%%%%%

Depending on the feature functions that are used,
if an edge connects two units in the network $G$, the units may be dependent. However, the absence of an edge in $G$ does not necessarily imply independence of the respective units.
Subsequently, we present a second graph where the presence of an edge represents dependence and its absence independence of the variables of the two respective units. Our theoretical results will be established based on this so-called dependency graph~\citep{Saevje2021}.
Example~\ref{example:dependency_graph} illustrates the concept. 

% Figure
\begin{figure}
\includegraphics[width=0.038\textwidth]{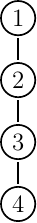} \quad\quad\quad
\includegraphics[width=0.1\textwidth]{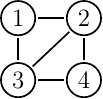}
\quad\quad\quad
\includegraphics[width=0.7\textwidth]{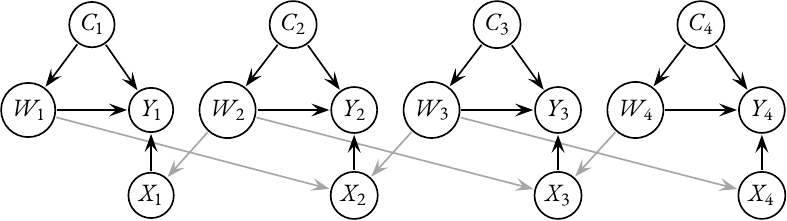}
\caption{\label{fig:network-example2} 
	A network $G$ on four units (left), where the spillover effects come from the treatments of the direct neighbors, 
	which results in a distance-two dependence, which is displayed in the  corresponding dependency graph $\tildeG$ (middle). The underlying causal DAG is displayed on the right, where arrows due to $X$-spillover effects are gray.}
\end{figure}

\begin{defn}Dependency graph on $\Si$, $i\in\indset{\NN}$~\citep{Saevje2021}. 
	The dependency graph $\tildeG=(V, \tildeE)$ on the unit-level data $\Si$, $i\in\indset{\NN}$ defined in \eqref{def:data}, is an undirected graph
	on the node set $V$ of the network $G=(V,E)$ with potentially larger edge set $\tildeE$ than $E$. 
	An undirected edge $\{i,j\}$ between two nodes $i$ and $j$ from $V$ belongs to $\tildeE$ if at least one of the following two conditions holds:
	$1)$ there exists an $m\in\indset{\NN}\setminus\{i,j\}$ such that $\Wm$ and/or $\Cm$ are present in both $\Xi$ and $\Xj$ or are present in both $\Zi$ and $\Zj$; $2)$
	$\Wi$ is present in $\Xj$, or 
	$\Ci$ is present in $\Xj$ or in $\Zj$. 
	That is, units $i$ and $j$ receive spillover effects from at least one common third unit, or they receive spillover effects from each other.
\end{defn}

\begin{exmpl}\label{example:dependency_graph}
Consider the chain-shaped network $G$ in Figure~\ref{fig:network-example2} on the left. 
We consider as $1$-dimensional $X$-spillover effect the fraction of treated direct neighbors in the  network $G$ and no $Z$-spillover. The resulting dependency graph $\tildeG$ is displayed in the middle of Figure~\ref{fig:network-example2}. 
In $\tildeG$, unit $2$ shares an edge with units $1$ and $3$ because these units are neighbors of $2$ in the network. Unit $2$ also shares an edge with $4$ in $\tildeG$ because it shares its neighbor $3$ with unit $4$.
The right of Figure~\ref{fig:network-example2} displays the causal DAG on all units corresponding to this model, including confounders $C$. 
Due to the definition of the $X$-spillover effect, we have $X_1=W_2$ and $X_4=W_3$. Consequently, using graphical criteria~\citep{Lauritzen1996, Pearl1998, Pearl2009, Pearl2010, Perkovic2018}, 
we infer that the unit-level data $S_1= (C_1,W_1,X_1,Y_1)$ is independent of $S_4=(C_4,W_4,X_4,Y_4)$. 
\end{exmpl}

The dependency graph is a function of the network $G$ as well as the $Z$-and $X$-features.
Constraining the growth of the maximal degree of this graph allows us to obtain a CLT result for our treatment effect estimator.

%%%%%%%%%%%%%%%%%%%%%%%%%%%%%
\subsection{Estimation Procedure and Asymptotics} \label{sec:estimation}
%%%%%%%%%%%%%%%%%%%%%%%%%%%%%

Subsequently, we describe our estimation procedure and its asymptotic properties.
We use sample splitting and cross-fitting to estimate the EATE $\thetazeroN$ identified by Equation~\eqref{eq:identifyThetazeroN} as follows. 
We randomly partition $\indset{\NN}$ into $\KK\ge 2$ sets of approximately equal size that we call $I_1,\ldots,I_{\KK}$. 
We split the unit-level data according to this partition into the sets $\SIk = \{\Si\}_{i\in\Ik}$, $\kk\in\indset{\KK}$. For each $\kk\in\indset{\KK}$, we perform the following steps.
First, we estimate the nuisance functions $\gonezero$, $\gzerozero$, and $\hzero$ on the complement set of $\SIk$, which we define as 
\begin{equation}\label{eq:SIkc}
	\SIkc = \{\Sj\}_{j\in\indset{\NN}}\setminus 
	\big( \SIk \cup \{\Sm\ | \ 
	\exists i \in\Ik\colon (i,m)\in\tildeE
	 \} \big), 
\end{equation}
where $\tildeE$ denotes the edge set of the dependency graph $\tildeG$. 
Particularly, $\SIkc$ consists of unit-level data $\Sj$ from units $j$ that do not share an edge with any unit $i\in\Ik$ in the dependency graph. 
Consequently, the set $\SIkc$ contains all $\Sj$'s that are independent of the data in $\SIk$. 
To estimate $\gonezero$, we select the $\Si$'s from $\SIkc$ whose $\Wi$ equals $1$ and regress the corresponding outcomes $\Yi$ on the confounders $C_i$ and the features $X_i$, which yields the estimator $\goneIkc$.
Similarly, to estimate $\gzerozero$, we select the $\Si$'s from $\SIkc$ whose $\Wi$ equal $0$ and perform an analogous regression, which yields the estimator $\gzeroIkc$.
To estimate $\hzero$, we use the whole set $\SIkc$ and regress $\Wi$ on the confounders $C_i$ and the features $Z_i$, which yields the estimator $\hIkc$\footnote{If the treatment is randomized with a known probability, we do not have to estimate the propensity function $\hzero$ and set it to the randomization probability instead.}. 
These regressions may be carried out with any machine learning algorithm.
We concatenate these nuisance function estimators into the nuisance parameter estimator $\hetaIkc = (\goneIkc, \gzeroIkc, \hIkc)$ and plug it into $\phi$ that is defined in~\eqref{eq:score}. We then evaluate the so-obtained function $\phi(\cdot, \hetaIkc)$ on the data $\SIk$, which yields the terms $\phi(\Si, \hetaIkc)$ for $i\in\Ik$. 
That is, we evaluate $\phi(\cdot, \hetaIkc)$ on unit-level data $\Si$ that is independent of the data that was used to estimate the nuisance parameter $\hetaIkc$. 
Finally, we estimate the EATE
by the cross-fitting estimator
\begin{equation}\label{eq:theta-est}
	\htheta = \frac{1}{\KK}\sum_{\kk = 1}^{\KK} \Bigg(\frac{1}{\normone{\Ik}}\sum_{i\in\Ik}\phi(\Si, \hetaIkc)\Bigg)
\end{equation}
that averages over all $\KK$ folds. The estimator $\htheta$ converges at the parametric rate, $\NN^{-1/2}$, and follows a Gaussian distribution asymptotically with limiting variance $\sigma_{\infty}^2$ as stated in Theorem \ref{thm:Gaussian} below.

The partition $I_1, \ldots, I_{\KK}$ is random. To alleviate the effect of this randomness, the whole procedure is repeated a number of $\Salg$ times, and the median of the individual point estimators over the $\Salg$ repetitions is our final estimator of $\thetazeroN$. The asymptotic results for this median estimator remain the same as for $\htheta$; see~\citet{Chernozhukov2018}. 
For each repetition $\salg \in [\Salg]$, we compute a point estimator $\hthetas$, a variance estimator $\hvars$ (for details please see the next Section \ref{sect:boot-var}), and a p-value $\ps$ for the two-sided test $H_0\colon \thetazeroN = 0$ versus $H_A\colon \thetazeroN\neq 0$. 
The $\Salg$ many p-values $p_1, \ldots, p_{\Salg}$ from the individual repetitions are aggregated according to
\begin{displaymath}\label{eq:medAggregate}
	\paggrzero = 2 \median_{\salg\in\indset{\Salg}}(\ps). 
\end{displaymath}
This aggregation scheme yields a valid overall p-value for the same two-sided test~\citep{Meinshausen2012}. 
The corresponding confidence interval is constructed as
\begin{equation}\label{eq:CI}
    \CI(\htheta) = 
	\{\theta\in\R\ | \ \paggrTheta \mathrm{\ of \ testing \ } H_0\colon\thetazeroN = \theta \mathrm{\ vs. \ } H_A\colon \thetazeroN\neq\theta
	\mathrm{\ satisfies \ } \paggrTheta > \alpha\}, 
\end{equation}
where typically $\alpha = 0.05$. 
This set contains all values $\theta$ 
for which the null hypothesis $H_0\colon\thetazeroN = \theta$ cannot be rejected at level $\alpha$ against the two-sided alternative $H_A\colon \thetazeroN\neq\theta$. 

Next, we describe how $\CI(\htheta)$ can easily be computed. 
Due to the asymptotic result of Theorem~\ref{thm:Gaussian}, 
the aggregated p-value $\paggrTheta$ for $\theta\in\R$ can be represented as 
\begin{displaymath}%\label{eq:pval}
	\paggrTheta = 4 \median_{\salg\in\indset{\Salg}} \big( 1 - \Phi(\sqrt{\NN}\hsigmas^{-1} \normone{\hthetas-\theta}) \big),
\end{displaymath}
where $\Phi$ denotes the cumulative distribution function of a standard Gaussian random variable. 
Consequently, we have
\begin{displaymath}
	\paggrTheta > \alpha
	\quad\Leftrightarrow\quad
	\Phi^{-1}(1 - \alpha/4) > \median_{\salg\in\indset{\Salg}} (\sqrt{\NN}\hsigmas^{-1} \normone{\hthetas-\theta}), 
\end{displaymath}
which can be solved for feasible values of $\theta$ using root search. 
A full description of our method is presented in Algorithm~\ref{algo:Summary}. 

% Algorithm
\begin{algorithm}[h!] 
	\SetKwInOut{Input}{Input}
   	\SetKwInOut{Output}{Output}

 %Data
 \Input{$\NN$ unit-level observations $\Si=(\Wi, \CXZi, \Yi)$ from the model~\eqref{eq:SEM}, 
  network $G$, feature functions $f^l_x$, $l\in\indset{r}$ and $f^l_z$, $l\in\indset{t}$, 
 corresponding dependency graph $\tildeG$, 
  natural number $\KK$, natural number $\Salg$, significance level $\alpha\in [0, 1]$, machine learning algorithms. 
 }
%Result
 \Output{Estimator of the EATE $\thetazeroN$ and a valid p-value and confidence interval for the two-sided test $H_0\colon\thetazeroN=0$ vs. $H_A\colon\thetazeroN\neq 0$.
 }
 
 \For{$\salg\in\indset{\Salg}$}
 {
 Randomly split the index set $\indset{\NN}$ into $\KK$ sets $I_1, \ldots, I_{\KK}$ of approximately equal size.
 
 \For{$\kk\in\indset{\KK}$}
 {
 Compute nuisance function estimators $\goneIkc$, $\gzeroIkc$, and $\hIkc$ with  
 machine learning algorithm and data from $\SIkc$. 
 }

Compute point estimator of $\thetazeroN$ according to~\eqref{eq:theta-est}, and call it $\hthetas$. 

Estimate asymptotic variance of $\hthetas$ 
using the bootstrap procedure described in Section~\ref{sect:boot-var}
 (or according to Theorem~\ref{thm:var-est} in Appendix~\ref{sect:var_est}), and call it $\hvars$.

Compute p-value $\ps$ for the two-sided test $H_0\colon\thetazeroN=0$ vs. $H_A\colon\thetazeroN\neq 0$ using $\hthetas$, $\hvars$, and asymptotic Gaussian approximation.
 }
 
Compute $\htheta = \median_{s\in\indset{\Salg}}(\hthetas)$.
 
 Compute aggregated p-value $\paggrzero = 2 \median_{\salg\in\indset{\Salg}}\ps$. 
 
 Compute confidence interval according to~\eqref{eq:CI}, call it $\CI(\htheta)$.

Return $\htheta$, $\paggrzero$, $\CI(\htheta)$.

 \caption{Estimating the EATE from observational data on networks with spillover effects using plugin machine learning}\label{algo:Summary}
\end{algorithm}

Before we present our main theorem we mentioned in the construction of confidence intervals above, we present and discuss key assumptions. 
First, we require that products of machine learning errors decay fast enough, namely 
\begin{displaymath}
			\begin{array}{l}
			\normP{\hzero(\CZi)-\hIkc(\CZi)}{2}
			\cdot
			\bigg(\normP{\gonezero(\CXi)-\goneIkc(\CXi)}{2}\\
			\quad\quad\quad
			+ 
			\normP{\gzerozero(\CXi)-\gzeroIkc(\CXi)}{2} + 
			\normP{\hzero(\CZi)-\hIkc(\CZi)}{2} 
			\bigg)\ll \NN^{-\frac{1}{2}}; 
			\end{array}
		\end{displaymath}
see Assumption~\ref{assumpt:DML} in the appendix for more details. 
In particular, the individual error terms may vanish at a  rate 
smaller than $\NN^{-1/4}$. This is achieved by many machine learning  
methods under suitable assumptions; see for instance  
\citet{Chernozhukov2018}:
$\ell_1$-penalized and related methods in a variety of sparse models
\citep{Bickel2009, Buehlmann2011, Belloni2011, Belloni-Chernozhukov2011, Belloni2012, Belloni-Chernozhukov2013}, forward selection in sparse models
\citep{Kozbur2020}, $L_2$-boosting in sparse linear models
\citep{Luo2016}, a class of regression trees and random forests 
\citep{Wager2016}, and neural networks \citep{Chen1999}. 
Second, to ensure enough sparsity in the dependency structure of the data, the maximal degree $\dmax$ 
in the dependency graph is assumed to grow at most
at the rate $\dmax = o(\NN^{1/4})$, which implies that the dependencies are not too far reaching. This assumption allows us to bound the Wasserstein-distance of our (centered and scaled) treatment effect estimator to a standard Gaussian random variable
using Stein's method~\citep{stein}. 
\begin{restatable}{assumptions}{assumptdegree}\label{assumpt:degree} 
The maximal degree $\dmax$ of a node in the dependency graph satisfies $\dmax = o(\NN^{1/4})$. 
\end{restatable} 
\citet{sofrygin2017} only require $\dmax = o(\NN^{1/2})$, but 
achieve a slower convergence rate of their treatment effect estimator. To recover the $\sqrt{\NN}$-rate, they require that $\dmax$ is bounded by a constant, meaning $\dmax = O(1)$. 

Furthermore, we require that this dependency structure is not too strong moment-wise in the sense that the variance term given in the following assumption converges. 
\begin{restatable}{assumptions}{assumptvariance}\label{assumpt:variance}
	Let $\{\PcalN\}_{\NN\ge 1}$ be a sequence of sets of probability distributions $\PP$ of the $\NN$ units.  There exists $\sigmainfty^2$, possibly depending on $\PP \in \PcalN$, satisfying 
 $0 < L \le \sigmainfty^2 \le U < \infty$ with fixed constants $L, U$, such that
	for all $\PP\in\PcalN$, we have
	\begin{equation}\label{eq:product-property}
		\lim_{\NN\rightarrow\infty}\bigg(\Var\bigg(\frac{1}{\sqrt{\NN}}\sum_{i=1}^{\NN}\psi(\Si, \thetazeroi, \etazero)\bigg) - \sigmainfty ^2 \bigg) = 0,
	\end{equation}
    where $\psi(\Si, \thetazeroi, \etazero) = \phi(\Si, \etazero) - \thetazeroi$ is a centered version of $\phi$.
\end{restatable}

Assuming bounded second moments, $\sum_{j=1}^{\NN}\normone{\Cov(\psi(\Si, \thetazeroi, \etazero),\psi(\Sj, \thetazeroi, \etazero))}$ can be bounded, up to constants, by $\degr{i}$, where $\degr{i}$ denotes the degree of node $i$ in the dependency graph. Consequently, we have
\begin{equation}\label{eq:need-to-O1}
    \Var\bigg(\frac{1}{\sqrt{\NN}}\sum_{i=1}^{\NN}\psi(\Si, \thetazeroi, \etazero)\bigg)
    \le \gamma \cdot \frac{1}{\NN}\sum_{i=1}^{\NN} \degr{i},
\end{equation}
where $\gamma$ denotes some universal constant. 
Subsequently, we consider two special cases. First, if the maximal degree of the dependency graph is uniformly bounded by some constant $D$, we can bound~\eqref{eq:need-to-O1} by the constant $\gamma D$. 
Second, assume the dependency graph has some nodes with finite degree: $\degr{i}\le D$ for $i$ in some set $S_{\mathrm{max}}^c$; the other nodes' degree $\degr{i}$ for $i\in S_{\mathrm{max}}$ is bounded by $\dmax=o(\NN^{1/4})$ with $\normone{S_{\mathrm{max}}} \ge O(\NN/ \dmax) = O(\NN^{3/4})$. 
Then, \eqref{eq:need-to-O1} is also of bounded order $O(1)$. 

\begin{theorem}[Asymptotic distribution of $\htheta$]\label{thm:Gaussian}
	Assume Assumption~\ref{assumpt:degree} and \ref{assumpt:variance} as well as~\ref{assumpt:regularity} and~\ref{assumpt:DML} stated in the appendix in Section~\ref{sect:AssumptionsDefinitions}. 
	Then, the estimator $\htheta$ of the EATE $\thetazeroN$ given in \eqref{eq:theta-est} converges at the parametric rate, $\NN^{-1/2}$, and asymptotically follows a Gaussian distribution, namely
	\begin{equation}\label{eq:Gauss}
		\sqrt{\NN}\sigmainfty^{-1}(\htheta-\thetazeroN) \stackrel{d}{\rightarrow} \mathcal{N}(0, 1) \quad (\NN\to\infty), 
	\end{equation}
	where $\sigmainfty$ is characterized in Assumption~\ref{assumpt:variance}. 
	The convergence in~\eqref{eq:Gauss} is in fact uniformly over the law $\PP \in \PcalN$ $(\NN\to\infty)$.
\end{theorem}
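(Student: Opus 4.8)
The plan is to adapt the double machine learning argument of \citet{Chernozhukov2018} to the locally dependent setting, replacing every i.i.d.\ (or conditionally i.i.d.) empirical-process step by a second-moment bound that sums covariances only along the edges of the dependency graph $\tildeG$. Abbreviate $\bar\eta_{\kk}:=\hetaIkc$; by the construction of $\SIkc$ in \eqref{eq:SIkc}, $\bar\eta_{\kk}$ is measurable with respect to $\SIkc$ and hence independent of $\SIk$. Using \eqref{eq:theta-est}, Lemma~\ref{lem:identifiability}, and the fact that $\tfrac1{\KK}\sum_{\kk}\tfrac1{\normone{\Ik}}\sum_{i\in\Ik}\thetazeroi$ differs from $\thetazeroN$ by only $O(\NN^{-1})$ (because the folds have approximately equal size), write
\begin{equation*}
	\sqrt{\NN}\,(\htheta-\thetazeroN) \;=\; \underbrace{\frac{1}{\sqrt{\NN}}\sum_{i=1}^{\NN}\big(\phi(\Si,\etazero)-\thetazeroi\big)}_{=:\,a_{\NN}} \;+\; \underbrace{\sqrt{\NN}\,\frac{1}{\KK}\sum_{\kk=1}^{\KK} R_{\kk}}_{=:\,b_{\NN}} \;+\; o_{\PP}(1),
\end{equation*}
where $R_{\kk}:=\tfrac1{\normone{\Ik}}\sum_{i\in\Ik}\big(\phi(\Si,\bar\eta_{\kk})-\phi(\Si,\etazero)\big)$. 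It then suffices to prove $a_{\NN}\stackrel{d}{\rightarrow}\mathcal N(0,\sigmainfty^2)$ and $b_{\NN}=o_{\PP}(1)$, uniformly over $\PP$, and conclude by Slutsky.

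\emph{Leading term $a_{\NN}$.}
By Lemma~\ref{lem:identifiability} the summands are centered, and by the construction of the features and the dependency graph, $\phi(\Si,\etazero)$ and $\phi(\Sj,\etazero)$ are independent whenever $\{i,j\}\notin\tildeE$; hence $\{\phi(\Si,\etazero)-\thetazeroi\}_{i\in\indset{\NN}}$ is a centered field with dependency graph $\tildeG$. I would invoke a Stein-type central limit theorem for such fields (of the kind used by \citet{chin2021, Saevje2021}): its normal-approximation error is governed by the maximal degree of $\tildeG$ together with uniform higher-moment bounds on $\phi(\Si,\etazero)$, which are supplied by Assumptions~\ref{assumpt:regularity} and~\ref{assumpt:degree}. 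The variance $\tfrac1{\NN}\Var\big(\sum_{i}\phi(\Si,\etazero)\big)$ — which equals $\tfrac1{\NN}$ times the sum of $\Cov(\phi(\Si,\etazero),\phi(\Sj,\etazero))$ over $i=j$ and over $\{i,j\}\in\tildeE$ — converges to $\sigmainfty^2\in(0,\infty)$ by Assumption~\ref{assumpt:variance}. This yields $a_{\NN}\stackrel{d}{\rightarrow}\mathcal N(0,\sigmainfty^2)$.

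\emph{Remainder term $b_{\NN}$.}
Fix $\kk$ and condition on $\SIkc$, which freezes $\bar\eta_{\kk}$ while leaving $\{\Si\}_{i\in\Ik}$ with only its within-fold local dependence. A second-order functional Taylor expansion of $r\mapsto\phi(\Si,\etazero+r(\bar\eta_{\kk}-\etazero))$ around $r=0$ splits $R_{\kk}$ into a linear part and a quadratic remainder. For the linear part, subtract and add conditional means: using $\Si\independent\SIkc$, the conditional-mean contribution is a multiple of $\partial_r\E[\phi(S,\etazero+r(\bar\eta_{\kk}-\etazero))]\big|_{r=0}$, which vanishes because $\phi$ is Neyman orthogonal at $\etazero$; and the centered contribution has conditional second moment bounded by $\normone{\Ik}^{-2}$ times a sum of $\Cov(\,\cdot\,,\,\cdot\mid\SIkc)$ over $i=j$ and over $\{i,j\}\in\tildeE$ with $i,j\in\Ik$, i.e.\ of order (maximal $\tildeG$-degree)$/\normone{\Ik}$ times the squared $L^2(\PP)$-rate of $\bar\eta_{\kk}$, hence $o_{\PP}(\NN^{-1})$ under Assumptions~\ref{assumpt:degree} and~\ref{assumpt:DML}. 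For the quadratic remainder, using the overlap/boundedness conditions of Assumption~\ref{assumpt:regularity} (keeping $\hzero$ bounded away from $0$ and $1$), it is controlled pointwise by products of nuisance errors such as $\normP{\goneIkc-\gonezero}{2}\,\normP{\hIkc-\hzero}{2}$, whose expectation is $o(\NN^{-1/2})$ by the product-of-rates condition in Assumption~\ref{assumpt:DML}. Summing over the finitely many folds gives $b_{\NN}=o_{\PP}(1)$.

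\emph{Uniformity, conclusion, and the hard part.}
Every bound above is expressed through quantities that the assumptions control uniformly over the law, so the argument goes through verbatim along an arbitrary sequence $\PP=\PP_{\NN}$, which is equivalent to the asserted uniform convergence (cf.\ \citet{Chernozhukov2018}); combining with Slutsky yields \eqref{eq:Gauss}. The main obstacle I anticipate is quantitative: the maximal degree of $\tildeG$ enters with conflicting effects — it inflates both the Stein error for $a_{\NN}$ and the conditional-variance bound on the centered linear remainder — while a denser $\tildeG$ simultaneously shrinks the usable training set $\SIkc$ and can degrade the nuisance rates; showing that the ``not too dense'' requirement in Assumption~\ref{assumpt:degree} is strong enough to make all of these balance simultaneously (and that cross-fitting on $\tildeG$, rather than on a plain random partition, is precisely what legitimizes the $\Si\independent\SIkc$ step) is the genuinely new part relative to the i.i.d.\ analysis.
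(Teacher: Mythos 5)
Your proposal is correct and follows essentially the same route as the paper: the identical decomposition into a leading term handled by a Stein-type dependency-graph CLT (the paper's Lemma~\ref{lem:Stein}, using Assumptions~\ref{assumpt:regularity}, \ref{assumpt:degree}, and~\ref{assumpt:variance}) plus a cross-fitting remainder that is split, after conditioning on $\SIkc$, into a centered part controlled by counting covariances along $\tildeE$ (Lemma~\ref{lem:vanishing-cov}, via $\dmax\NN^{-2\kappa}=O(1)$) and a conditional-mean part killed by Neyman orthogonality and the product-of-rates condition via a second-order Taylor expansion (Lemmas~\ref{lem:Neyman-orth}, \ref{lem:product-property}, and~\ref{lem:Taylor}). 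The only difference — Taylor-expanding $\phi$ pointwise before centering rather than centering first and expanding only the conditional mean — is cosmetic and yields the same bounds.
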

Please see Section~\ref{sect:proofs-thm-Gauss} in the appendix for a proof of Theorem~\ref{thm:Gaussian}. 
The asymptotic variance $\sigmainfty^2$ in Theorem~\ref{thm:Gaussian} can be consistently estimated using a bootstrap approach; 
see Section~\ref{sect:boot-var}. Alternatively, it is possible to consistently estimate it using a plugin approach; see Theorem~\ref{thm:var-est} in the next Section \ref{sect:var_est}. 
However, empirical simulations have revealed that the bootstrap procedure described in the next section performs better. 

Our estimator $\htheta$ is robust in two senses. First, it is $\sqrt{\NN}$-consistent and asymptotically normal if only 
the product property~\eqref{eq:product-property} of the machine learning estimators holds. Second, it can be shown that it remains consistent if either the propensity model or the outcome model are correctly specified. These properties are also called rate double robustness and model double robustness, respectively~\citep{Rotnitzky2019}.

%%%%%%%%%%%%%%%%%%%%%%%%%%%%%
\subsection{Bootstrap Variance Estimator}\label{sect:boot-var}
%%%%%%%%%%%%%%%%%%%%%%%%%%%%%

We use the residual bootstrap as follows to estimate the asymptotic variance. 
First, we use the estimated nuisance functions to compute the outcome regression residuals. More precisely, for $i\in\indset{\NN}$, denote by $\kk(i)$ the index in $\indset{\KK}$
specifying the partition unit $i$ belongs to, namely $i\in\Ikdegr{i}$. Then, we estimate the $\epsY$'s by $\hepsYi = \hepsYi' - \frac{1}{\NN}\sum_{j=1}^{\NN}\hepsYj'$, where 
        $\hepsYi' = \Yi - \Wi\goneIkci{i}(\Ci, \Xi)  - (1-\Wi)\gzeroIkci{i}(\Ci, \Xi) $.
Next, we sample confounders $\{\Ci^*\}_{i\in\indset{\NN}}$  
with replacement from $\{\Ci\}_{i\in\indset{\NN}}$, and we sample $\hepsYi^*$  
with replacement from $\{\hepsYi\}_{i\in\indset{\NN}}$. These sampled covariates and error terms are now propagated through the SEM~\eqref{eq:SEM}, that is, we compute 
        $\Zi^* = (f^1_z(\Cminusi^*, G), \ldots, 
	f^t_z(\Cminusi^*, G)
	)$, sample $\Wi^* = \mathrm{Bernoulli}(\hIkci{i}(\Ci^*,\Zi^*))$, compute $\Xi^*= (f^1_x(\Wminusi^*, \Cminusi^*, G), \ldots, 
	f^r_x(\Wminusi^*, \Cminusi^*, G)
	)$, and build $\Yi^* = \Wi^*\goneIkci{i}(\Ci^*, \Xi^*)  - (1-\Wi^*)\gzeroIkci{i}(\Ci^*, \Xi^*) + \hepsYi^*$.  Subsequently, we concatenate these values to obtain the bootstrap datapoints $\Si^*=(\Ci^*, \Zi^*, \Wi^*, \Xi^*, \Yi^*)$, $i\in\indset{\NN}$. 
	Then, we apply our treatment effect estimation procedure to the $\Si^*$'s to obtain a bootstrap estimator $\hat{\theta}^{*}$. 
 This procedure is repeated $R$ many times, and the bootstrap variance estimator is given by the empirical variance of the $\hat{\theta}^{*}_r$ over $r\in\indset{R}$. 

\begin{restatable}{theorem}{boostrap-var}\label{thm:var-est-boot}
    The bootstrap scheme described in Section~\ref{sect:boot-var} consistently estimates the asymptotic variance~\eqref{eq:product-property} under Assumption~\ref{assumpt:var-boot} stated in the appendix. 
\end{restatable}

The proof of Theorem~\ref{thm:var-est-boot} can be found in Appendix~\ref{sect:proofs-thm-var-est-boot}.

%%%%%%%%%%%%%%%%%%%%%%
\section{Empirical Validation}\label{sect:experiments}
%%%%%%%%%%%%%%%%%%%%%%

We demonstrate our method in a simulation study and on a real-world dataset. In the simulation study, we validate the performance of our method on different network structures and compare it to 
two popular treatment effect estimators.
Afterwards, we investigate the effect of study time on  exam performance in the Swiss StudentLife Study~\citep{Stadtfeld2019, Voeroes2021} taking into account the effect of social ties.

%%%%%%%%%%%%%%%%%%%%%%
\subsection{Simulation Study}\label{sect:simulation}
%%%%%%%%%%%%%%%%%%%%%%

We investigate a fairly simple data generating mechanism with $1$-dimensional $X$-features and no $Z$-features. The $X$-interference effects a unit receives come from an interaction between treatments and control of its immediate neighbors in the network (we consider Erd{\H{o}}s--R{\'e}nyi and Watts--Strogatz).  
We compare the performance of our method to two popular off-the-shelf alternative schemes with respect to bias of the point estimator and coverage and length of respective two-sided confidence intervals: the H\'ajek estimator and an IPW estimator. Our aim is to see that these standard estimators may suffer in the presence of interference and to demonstrate that our easy-to-implement estimator overcomes their shortcomings.

We first describe the two competitors and afterwards detail the simulation setting and present the results.
Our code is available on GitHub (\url{https://github.com/corinne-rahel/networkAIPW}).

The \textbf{H\'ajek estimator} (denoted by ``Hajek'' in Figure \ref{fig:simulation}) without incorporation of confounders~\citep{Hajek1971} equals 
\begin{displaymath}
    \frac{1}{\NN}\sum_{i=1}^{\NN}\bigg( \frac{\Wi\Yi}{\frac{1}{\NN}\sum_{j=1}^{\NN}\Wi}
    + \frac{(1-\Wi)\Yi}{\frac{1}{\NN}\sum_{j=1}^{\NN}(1-\Wi)}\bigg).
\end{displaymath}
The parametric convergence rate and asymptotic Gaussian distribution are preserved under $X$-spillover effects that equal the fraction of treated neighbors in a randomized experiment~\citep{Li-Wager2022}. The \textbf{IPW estimator}~\citep{Rosenbaum1987} has been developed under SUTVA and uses observed confounding by creating a ``pseudo population'' in which the treatment is independent of the confounders~\citep{Hirano2003}.
We compute it using sample splitting and cross-fitting according to 
\begin{displaymath}
    \frac{1}{\KK} \sum_{\kk=1}^{\KK}
    \frac{1}{\normone{\Ik}}\sum_{i\in\Ik}
    \Big(\frac{\Wi\Yi}{\eIkc(\Ci)}
    - \frac{(1-\Wi)\Yi}{1-\eIkc(\Ci)}\Big),
\end{displaymath}
where $\eIkc$ is the fitted propensity score obtained by regressing $\Wi$ on $\Ci$ on the data in $i\in\SIkc$. 
In our simulation, $\eIkc$ coincides with $\hIkc$ because we consider no $Z$-features. We denote this estimator by ``IPW'' in Figure~\ref{fig:simulation}. These estimators are not designed for the interference structures we consider, but we would like to investigate the performance of these off-the-shelf and easy to implement estimators, also in comparison to our proposed method.

% Figure
\begin{figure}
\centering
%\begin{subfigure}[b]{0.3\textwidth}
%\centering
\includegraphics[height=0.3\textwidth]{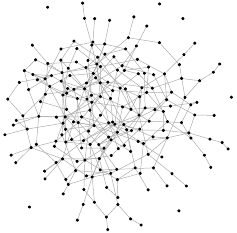}
%\subcaption{Erd{\H{o}}s--R{\'e}nyi network}
%\label{fig:networkB}
%\end{subfigure}
~\quad\quad\quad~
%\begin{subfigure}[b]{0.3\textwidth}
%\centering
\includegraphics[height=0.3\textwidth]{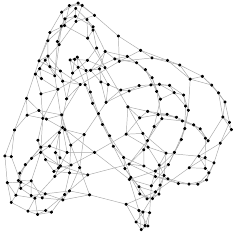}
%\subcaption{ Watts--Strogatz network}
%\label{fig:networkC}
%\end{subfigure}

\caption{\label{fig:graphs}
	Different network structures on $\NN = 200$ units: Erd{\H{o}}s--R{\'e}nyi network (left) where two nodes are connected with probability $3/\NN$ (every node is connected to $3$ other nodes in expectation); Watts--Strogatz network (right) with a rewiring probability of $0.05$, a $1$-dimensional ring-shaped starting lattice where each node is connected to $2$ neighbors on both sides (that is, every node is connected to 4 other nodes),
	no loops, and no multiple edges. The graphs are generated using the \textsf{R}-package \texttt{igraph}~\citep{igraph}.}
\end{figure}

\begin{figure}[h!] 
	\centering 
	\includegraphics[width=\textwidth]{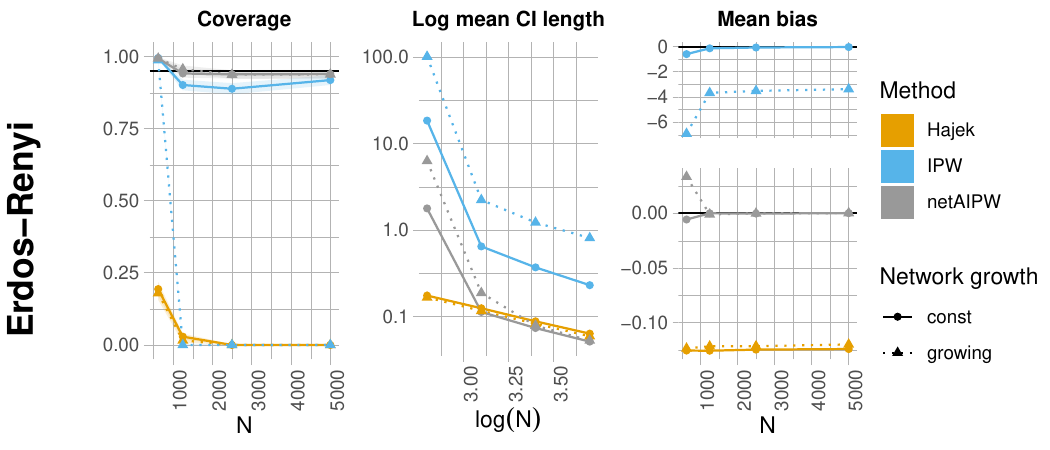}
	\includegraphics[width=\textwidth]{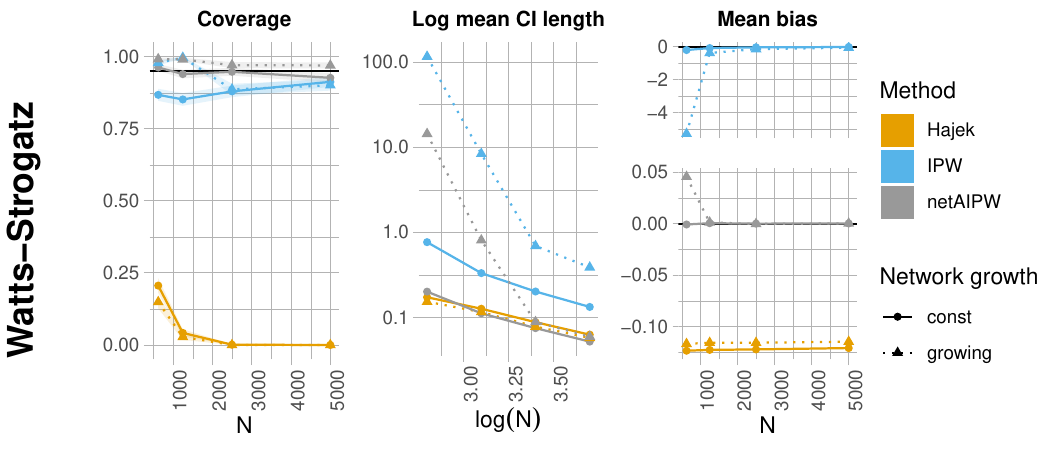}
		\caption[]{\label{fig:simulation} 
	Coverage (fraction of times the true, and in general unknown, $\thetazeroN$ was inside the confidence interval), log mean length of two-sided $95\%$ confidence intervals for $\thetazeroN$, 
	and mean bias over $1000$ simulation runs for Erd\H{o}s--R\'enyi and Watts--Strogatz networks of different complexities (Erd\H{o}s--R\'enyi: expected degree $3$ and $3\NN^{1/15}$ for ``const'' and ``$\NN {\mathchar"5E} (1 / 15)$'', respectively; Watts--Strogatz: before rewiring, nodes have degree $4$ and $4\NN^{1/15}$ for ``const'' and ``$\NN {\mathchar"5E} (1 / 15)$'', respectively, and the rewiring probability is $0.05$). 
	We compare the performance of our method, netAIPW, with the  H\'ajek  
	and an  IPW 
	estimator, indicated by color. The variance of the competitors are empirical variances over the $1000$ repetitions, whereas we computed confidence intervals for netAIPW according to~\eqref{eq:CI} with $B=1$ and $300$ bootstrap samples. The shaded regions in the coverage plot represent $95\%$ confidence bands with respect to the $1000$ simulation runs. 
	}
\end{figure}

We investigate two network structures that govern our interference effects: Erd{\H{o}}s--R{\'e}nyi networks~\citep{Erdoes1959} and   Watts--Strogatz networks~\citep{watts1998}. 
Erd{\H{o}}s--R{\'e}nyi networks randomly form edges between units with a fixed probability and are a simple example of a random mathematical network model.  
These networks play an important role as a standard against which to compare more complicated models. 
Watts--Strogatz networks, also called small-world networks, share two properties with many networks in the real world: a small average shortest path length and a large clustering coefficient.
To construct such a network, the vertices are first arranged in a regular fashion and linked to a fixed number of their neighbors. 
Then, some randomly chosen edges are rewired with a constant rewiring probability. A representative of each network type is provided in Figure~\ref{fig:graphs}. For each of these two network types, we consider one case where the dependency in the network does not increase with $\NN$ (denoted by ``const'' in Figure \ref{fig:simulation}) and one where it increases with $\NN$ (denoted by $\NN {\mathchar"5E} (1 / 15)$ in Figure \ref{fig:simulation}).

The specific unit-level structural equations~\eqref{eq:SEM} we consider are as follows.
For each unit $i\in\indset{\NN}$, we sample \iid\ confounders $\Ci\sim\mathrm{Unif}(0, 1)$ from the uniform distribution. 
The treatment selections $\Wi$ are drawn from a Bernoulli distribution with arbitrarily chosen success probability $p_i = p_i(\Ci) = 0.15 \one_{\Ci < 0.33} + 0.5 \one_{0.33 \le \Ci < 0.66} + 0.85 \one_{0.66 \le \Ci}$. Let $\alpha(i)$ denote the neighbors of unit $i$ in the network (without $i$ itself). Then, we let the $1$-dimensional $X$-features $\Xi$ denote the shifted average number of neighbors assigned to treatment weighted  by their confounder, namely 
\begin{displaymath}
    \Xi = \frac{1}{\normone{\alpha(i)}}\sum_{j\in\alpha(i)} (\one_{\Wj = 1} - \one_{\Wj=0} ) \Cj,
\end{displaymath}
if $\alpha(i)$ is non-empty, and $0$ else. We do not consider $Z$-features.
For real numbers $x$ and $c$, we consider the arbitrary functions
\begin{displaymath}
     \gonezero(x, c) =
     1.5 \one_{x \ge 0.5,c \ge -0.2,x < 0.7} 
     + 4 \one_{c \ge -0.2,x \ge 0.7} 
     + 0.5 \one_{x \ge 0.5, c < -0.2} 
     + 3.5 \one_{x < 0.5, c \ge -0.2} 
     + 2.5 \one_{x < 0.5, c < -0.2}
\end{displaymath}
and 
\begin{displaymath}
        \gzerozero(x, c) = 
        0.5 \one_{x \ge 0.4, c \ge 0.2} 
     - 0.75 \one_{x \ge 0.4, c < 0.2} 
     + 0.25 \one_{x < 0.4, c \ge 0.2} 
     - 0.5 \one_{x < 0.4, c < 0.2}. 
\end{displaymath}
That is, the functions $\gonezero$, $\gzerozero$, and $\hzero$ are step functions.  
For \iid\ error terms $\epsYi\sim\mathrm{Unif}(-\sqrt{0.12} / 2, \sqrt{0.12} / 2)$, we consider the outcomes $\Yi = \Wi\gonezero(\Ci, \Xi) + (1-\Wi)\gzerozero(\Ci,\Xi) + \epsYi$. 

For the sample sizes $N= 625, 1250, 2500, 5000$, we perform $1000$ simulation runs redrawing the data according to the SEM, consider $\Salg = 1$, $\KK = 5$, and $R=300$ bootstrap samples to estimate the variance in Algorithm~\ref{algo:Summary}.
That is, we consider one split per generated dataset and consequently do not aggregate $p$-values in these simulations. However, the empirical analysis in Section~\ref{sect:empirical} aggregates $p$-values over $100$ datasplits.
 We estimate the nuisance functions by random forests   
consisting of $500$ trees with a minimal node size of $5$ and other default parameters using the \textsf{R}-package \texttt{ranger}~\citep{ranger}. 
To estimate the propensity score, we limit the depth of the trees to $2$. Our results for the Erd\H{o}s--R\'enyi and Watts--Strogatz networks are displayed in Figure~\ref{fig:simulation}.
Two different panels are used to display the results for different ranges of the bias of the methods. 
For all network types and complexities, we observe the following. 
The IPW estimator 
 incurs some bias as can be expected because it  
does not account for network spillover and even under SUTVA, it is not Neyman orthogonal, which means we are not allowed to plug in machine learning estimators of  nuisance functions. Furthermore, it is known to have a poor finite-sample performance 
due to estimated propensity scores $\eIkc$ that may be close to $0$ or $1$. 
The H\'ajek estimator incurs some bias because it does not adjust for observed confounding and assumes a randomized treatment instead.
The bias of our method (denoted by ``netAIPW'' in Figure~\ref{fig:simulation}) decreases as the sample size increases. As the dependency graph becomes more complex, our method requires more observations to achieve a small bias because the data sets $\SIkc$ in~\eqref{eq:SIkc}, which are used to estimate the nuisance functions, are smaller in denser networks. 
In terms of coverage, the two competitors perform poorly,  whereas our method guarantees coverage. 

Simulation results involving spillover effects from second degree neighbors and misspecified spillover effects are presented in Appendix~\ref{sect:second-degree}.
Furthermore, for a $\mathrm{Bernoulli}(1/2)$ treatment assignment and with the ``const'' Watts-Strogatz setting presented in the main paper, we found that the AIPW approach leads to variances that are of about a factor of $23$ smaller than the ones obtained with IPW.
This suggests that AIPW is helpful in reducing the variance of IPW even in the randomized case.

%%%%%%%%%%%%%%%%%%%%%%
\subsection{Empirical Analysis: Swiss StudentLife Study Data}\label{sect:empirical}
%%%%%%%%%%%%%%%%%%%%%%

Subsequently, we estimate the causal effect of study time on academic success of university students with our newly developed estimator. We quantify this causal effect by the EATE that is the average of the difference in expected grade point average (GPA) of the final exam had a student studied much versus little, allowing for potential spillover effects from the student's friends on the student's study time.
Among the factors that determine academic success are person-specific traits, such as intelligence~\citep{chamorro-premuzic_personality_2008},
willingness to work hard~\citep{los_interaction_2019},  
and socioeconomic background~\citep{Heckman2006}. 
The Swiss StudentLife Study data~\citep{Stadtfeld2019, Voeroes2021} was collected to investigate the impact of various factors
on academic achievement.  
It consists of observations from freshmen undergraduate students pursuing a degree in the natural sciences at a Swiss university. Instead of a university entrance test, these students had to pass a demanding examination after one year of studying. At several time points throughout this year, the students were asked to fill out questionnaires about their student life, social network, and well-being. 
The data consists of three cohorts of students. 
Cohort~$1$ was observed in 2016 and cohorts~$2$ and~$3$ in 2017. Importantly, for all three cohorts, the data contains friendship information among the students. We build the corresponding undirected network by drawing an edge between two students if at least one of them mentioned the other one as being a friend. We believe that spillover effects arise due to students interacting in this network, and thus we have to control for them when estimating the EATE described above. 
Figure~\ref{fig:SSL-graphs} displays the resulting network consisting of the three cohorts.

GPA ($\Yi$) 
constitutes our outcome variable and represents the average grade of seven to nine exams, depending on study programs. 
It ranges from 1 to 6, with passing grades of 4 or higher.
The average GPA in the data we used was $4.266$ with a standard deviation of $0.872$. 
The remaining variables were measured five to six months before the exam period and correspond to wave four of the Swiss StudentLife Study data. 
The self-reported number of hours spent studying per week during the semester ($\Wi$) 
constitutes the treatment variable. It was dichotomized into studying many ($\Wi=1$) and few ($\Wi=0$) hours. We considered a setting where $\Wi=1$ corresponds to studying at least $8$ hours per week, which is the $20\%$ quantile, and one where $\Wi=1$ corresponds to studying at least $20$ hours per week, which is the $80\%$ quantile. 
We consider spillover effects from the friends of a student, which are a student's direct neighbors in the friendship network. 
We consider $Z$-spillover effects that account for the effect of befriended students' study motivation and stress variables on a student's treatment. We do not consider spillover effects on the outcome GPA (no $X$-features). 
The $\Zi$-spillover variable of a student $i$ is a vector of length $6$, where each entry corresponds to the average of the following six variables across the friends of the student:  
(a) study motivation, 
measured with the learning objectives subscale of the SELLMO-ST\footnote{\label{foot:sellmo}This is a scale to assess learning and achievement motivation, and the subscale consists of eight items measured on a five-point Likert-scale from 1 (``completely disagree'') to 5 (``completely agree'').}~\citep{Spinath2002}, 
(b) work avoidance, 
measured with the work avoidance subscale of the students version of the SELLMO-ST\footref{foot:sellmo},
(c) the average of ten perceived stress items~\citep{Cohen1988},
(d, e) two items specifically on exam related stress, 
and (f) whether one was perceived as clever by at least one other student. 
In addition to these network effects, we control on the unit level ($\Ci$) for the just mentioned variables observed on an individual unit as well as the cohort number, gender, 
having Swiss nationality, 
speaking German, 
and the financial situation. 
From all the data of the three cohorts combined, we only considered individuals for whom all the mentioned variables, that is, treatment, outcome, covariates, and $Z$-spillover variables, are observed. The final sample consisted of $\NN= 526$ individuals: $113$ from cohort~1, $119$ from cohort~2, and $294$ from cohort~3. 
In our algorithm, we used $\S=1000$ sample splits (from which we aggregate p-values as in~\eqref{eq:medAggregate}) with $K = 10$ groups each and random forests consisting of $5000$ trees to learn $\gzerozero$, $\gonezero$, and $\hzero$ whose leaf size  
was initially determined by $5$-fold cross-validation. 
Also, we used the variance estimator as in Appendix~\ref{sect:var_est} that relies on fewer assumptions.

% Figure
\begin{figure}%[h!]
\centering
%\begin{subfigure}[b]{0.3\textwidth}
%\centering
\includegraphics[height=0.25\textwidth]{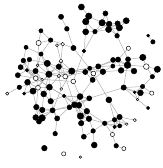}
%\subcaption{Cohort 1}
%\label{fig:CohortI}
%\end{subfigure}
~\quad~
%\begin{subfigure}[b]{0.3\textwidth}
%\centering
\includegraphics[height=0.25\textwidth]{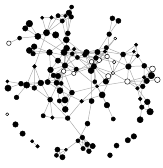}
%\subcaption{Cohort 2}
%\label{fig:CohortII}
%\end{subfigure}
~\quad~
%\begin{subfigure}[b]{0.3\textwidth}
%\centering
\includegraphics[height=0.25\textwidth]{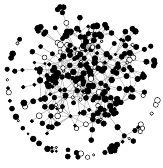}
%\subcaption{Cohort 3}
%\label{fig:CohortIII}
%\end{subfigure}
\caption{\label{fig:SSL-graphs} 
    Friendship networks per cohort
    with black dots representing $\Wi=1$ and a weekly study time of at least $8$ hours, white for $\Wi=0$ and a weekly study time of less than $8$ hours, and a bigger node size represents a higher GPA.
    }
\end{figure}

%%%%%%%%%%%%%%%%%%%%%%%%%%%%%%%%%%%%%%%%%%%%
% The results are from the following folder: 
% pics__17_September_2022__05_29_12
%%%%%%%%%%%%%%%%%%%%%%%%%%%%%%%%%%%%%%%%%%%%
We estimated the EATE with two different definitions for $\Wi = 1$, defined by a study time of either at least $8$ or $20$ hours per week, corresponding to the $20\%$ and $80\%$ quantiles, respectively, and Table~\ref{tab:SSL-results} displays the results. 
Table~\ref{tab:SSL-results-low} displays our estimated EATE with $\Wi=1$ representing a weekly study time of at least $8$ hours. 
Our EATE estimator is positive and significant. On average, students received a $0.362$ points higher GPA 
had they studied at least $8$ hours per week compared to studying less.
Consequently, a significantly higher GPA can be achieved by studying more. 
If we apply the same procedure but exclude the $Z$-spillover covariates (no spillover), the EATE estimator is  higher and also significant. 
Table~\ref{tab:SSL-results-high} displays our results with $\Wi=1$ representing a weekly study time of at least $20$ hours.
Our EATE estimator is positive but not significant anymore. 
Hence, our results suggest that GPA is not significantly higher had a student studied at least $20$ hours per week compared to studying less.
Without spillover, the treatment effect is significant. In both cases in Table~\ref{tab:SSL-results}, the estimate of the EATE is higher under the assumption of no spillover effects, compared to the estimator that allows for possible $Z$-spillover effects. This potentially relevant difference highlights the importance of not a priori ruling out spillover effects.
Overall, the model including spillover effects seems more realistic than the one excluding them. 
Finally, when interpreting the results, it is important to recall that study time captures the learning time during the semester. 
There is an additional eight-week lecture-free preparation period, 
and our study time does not reflect this preparation time. 
Consequently, our results only describe the EATE of study time during the semester on GPA.

\begin{table}%[h]
    \begin{subtable}[h]{0.5\textwidth}
        \centering
        \begin{tabular}{l c c}
        \textbf{Spillover} & \textbf{EATE} & \textbf{$95\%$ CI for $\thetazeroN$} \\
        \hline
        yes & $0.362$ & $[0.283, 0.442]$\\
        no & $0.451$ & $[0.364, 0.528]$\\
       \end{tabular}
       \caption{$\Wi=1$ if studied at least $8$ hours per week ($20\%$ quantile).}
       \label{tab:SSL-results-low}
    \end{subtable}
    \hfill
    \begin{subtable}[h]{0.5\textwidth}
        \centering
        \begin{tabular}{l c c}
        \textbf{Spillover} & \textbf{EATE} & \textbf{$95\%$ CI for $\thetazeroN$} \\
        \hline
        yes & $0.078$ & $[-0.096,  0.252]$\\
        no & $0.163$ & $[0.011, 0.311]$\\
       \end{tabular}
       \caption{$\Wi=1$ if studied at least $20$ hours per week ($80\%$ quantile).}
       \label{tab:SSL-results-high}
    \end{subtable}
     \caption{\label{tab:SSL-results}EATE and $95\%$ confidence intervals for $\thetazeroN$ for different settings with different control 
     groups, namely studying less than 8 (a) or less than 20 (b) hours per week. 
     }
\end{table}

%%%%%%%%%%%%%%%%%%%%%%
\section{Conclusion}\label{sect:conclusion}
%%%%%%%%%%%%%%%%%%%%%%

Causal inference with observational data usually assumes independent units. 
However, having independent observations is often questionable, and so-called spillover effects among units are common in practice. 
Our aim was to develop point estimation and asymptotic inference for the expected average treatment effect (EATE) with observational data from a single (social) network. We would like to point out the hardness of this problem: 
we consider treatment effect estimation on data with increasing dependence among units, where the data generating mechanism can be highly nonlinear and include confounders. 
We use an augmented inverse probability weighting (AIPW) principle
and account for spillover effects 
that we capture by features, which are functions of the known network and the treatment and covariate vectors.   
There may be several features, and one feature may capture spillover effects from different units than another feature; these units might be direct neighbors to compute one feature and neighbors of neighbors to compute another feature. We consider the dependency graph to pose assumptions on these features in our asymptotic theory.
Units may interact beyond their direct neighborhoods, interactions may become increasingly complex as the sample size increases, and we consider arbitrary networks. 
Using ideas of double machine learning~\citep{Chernozhukov2018}, we develop a cross-fitting algorithm under interference that allows us
to estimate the nuisance components of our model by arbitrary machine learning algorithms. 
Although we employ machine learning algorithms, our EATE estimator converges at the $\sqrt{\NN}$-rate and asymptotically follows a Gaussian distribution, which allows us to perform inference. 

In a simulation study, we demonstrated that commonly employed methods for treatment effect estimation suffer from the presence of spillover effects, whereas our method could account for the complex dependence structures in the data so that the bias vanished with increasing sample size and coverage was guaranteed. In the Swiss StudentLife Study, 
we investigated the EATE of study time on the grade point average of university examinations,
accounting for spillover effects due to friendship relations.  Omitting this spillover
may lead to biased results due to spurious association. 

In the present work, we focused on estimating the EATE. Other effects may be estimated in a similar manner, like for instance the global average treatment effect (GATE) where all units are jointly intervened on. 
We develop an estimator of the GATE in Appendix~\ref{sect:extension}.

%%%%%%%%%%%%%%%%%%%%%%%%%%%%
\section*{Acknowledgements}
%%%%%%%%%%%%%%%%%%%%%%%%%%%%

CE and PB received
funding from the European Research Council (ERC) under the European Union’s Horizon 2020 research and innovation programm (grant agreement No. 786461), and M-LS received funding from the Swiss National Science Foundation (SNF) (project No. 200021\_172485). 
The Swiss StudentLife data collection was supported by Swiss National Science Foundation Grant 10001A 169965 and the rectorate of ETH Zurich.
We also
thank Leonard Henckel and Dominik Rothenh\"ausler for useful
comments.

\bibliography{references}

\begin{appendices}

%%%%%%%%%%%%%%%%%%%%%%%%%%%%%
\section{Assumptions and Additional Definitions}\label{sect:AssumptionsDefinitions}
%%%%%%%%%%%%%%%%%%%%%%%%%%%%%

We consider the following notation. 
We denote by $\indset{\NN}$ the set $\{1,2,\ldots,\NN\}$. We add the probability law as a subscript to the probability operator $\Prob$ and the expectation operator $\E$ whenever we want to emphasize the corresponding dependence.
We denote the $L^p(\PP)$-norm by $\normP{\cdot}{p}$ and the Euclidean or operator
norm by $\normone{\cdot}$, depending on the context. 
We implicitly assume that given expectations and conditional expectations exist. We denote by $\stackrel{d}{\rightarrow}$ convergence in distribution.
The symbol $\independent$ denotes independence of random variables.

We observe $\NN$ units according to the structural equations~\eqref{eq:SEM} that are connected by an underlying network. 
For each unit $i\in\indset{\NN}$, we concatenate $\Si = (\Wi, \CXZi, \Yi)$ that are relevant for unit $i$. 
For notational simplicity, we abbreviate $\Di = (\CXi)$ and $\Ui = (\CZi)$ for $i\in\indset{\NN}$.

Let the number of sample splits $\KK\ge 2$ be a fixed integer independent of $\NN$. We assume that $\NN\ge\KK$ holds. Consider a partition $I_1, \ldots,I_{\KK}$ of $\indset{\NN}$. 
We assume that all sets $I_1, \ldots,I_{\KK}$ are of equal cardinality $\nn$. 
We make this assumption for the sake of notational simplicity, but our results hold without it. 

Let $\{\deltaN\}_{\NN\ge \KK}$ and $\{\DeltaN\}_{\NN\ge \KK}$ be two sequences of non-negative numbers that converge to $0$ as $\NN\rightarrow\infty$. 
Let $\{\PcalN\}_{\NN\ge 1}$ be a sequence of sets of probability distributions $\PP$ of the $\NN$ units. 

For completeness, we recall the following two assumptions from the main text. Assumption~\ref{assumpt:degree} that limits the growth rate of the maximal degree of a node in the dependency graph. Assumption~\ref{assumpt:variance} characterizes the asymptotic variance in Theorem~\ref{thm:var-est} as the limit of the population variance on the $\NN$ units. 

\assumptdegree*

\assumptvariance*

We make the following additional sets of assumptions. 
The following Assumption~\ref{assumpt:regularity} recalls that we use the model~\eqref{eq:SEM} and specifies regularity assumptions on the involved random variables. Assumption~\ref{assumpt:regularity2} and~\ref{assumpt:regularity3} ensure that the random variables are integrable enough.
Assumption~\ref{assumpt:regularity4} ensures that the true underlying function $\hzero$ of the treatment selection model is bounded away from $0$ and $1$, which allows us to divide by $\hzero$ and $1-\hzero$. 

\begin{assumptions}\label{assumpt:regularity} 
Let $p\ge 4$. 
	For all $\NN$, all $i\in\indset{\NN}$,  all $\PP\in\PcalN$, and all $\kk\in\indset{\KK}$, we have the following. 
	\begin{enumerate}[label={\theassumptions.\arabic*}]
		\item\label{assumpt:regularity1}
		The structural equations~\eqref{eq:SEM} hold, where the treatment $\Wi\in\{0, 1\}$ is binary. 
	
		\item\label{assumpt:regularity2}
		There is a finite real constant $\Cnormp$ independent of $\PP$ satisfying $\normP{\Wi}{p}+ \normP{\Ci}{p}+ \normP{\Xi}{p} + \normP{\Zi}{p} + 
		\normP{\Yi}{p}\le \Cnormp$.
		
		\item\label{assumpt:regularity3}
		        There is a finite real constant $\Cnorminfty$ independent of $\PP$ such that we  have $\normP{\Yi}{\infty} + \normP{\gonezero(\Di)}{\infty} + \normP{\gzerozero(\Di)}{\infty} + \normP{\hzero(\Ui)}{\infty} \le \Cnorminfty$. 
		        
		\item\label{assumpt:regularity4}
		There is a finite real constant $\Chzero$ independent of $\PP$ such that $\PP(\Chzero\le \hzero(\Ui)\le 1-\Chzero) = 1$ holds. 
		
		\item\label{assumpt:regularity5}
		There is a finite real constant $\Ctheta$ such that we have $\normone{\thetazeroi}\le \Ctheta$. 
	
	\end{enumerate}
\end{assumptions}

The following Assumption~\ref{assumpt:DML} characterizes the realization set of the nuisance functions and the $\NN^{-1/2}$ convergence rate of products of the machine learning errors from estimating the nuisance functions $\gonezero$, $\gzerozero$, and $\hzero$.

\begin{assumptions}\label{assumpt:DML}
	Consider the $p\ge 4$ from Assumption~\ref{assumpt:regularity}.  
	For all $\NN\ge \KK$ and all $\PP\in\PcalN$, 
	consider a nuisance function realization set $\TauN$ such that the following conditions hold.
	\begin{enumerate}[label={\theassumptions.\arabic*}]
		\item\label{assumpt:DML1}
		The set $\TauN$ consists of $\PP$-integrable functions $\eta=(\gone, \gzero, \h)$ whose $p$th moment exists and whose $\normP{\cdot}{\infty}$-norm is in fact uniformly bounded, and $\TauN$ contains $\etazero = (\gonezero, \gzerozero, \hzero)$.  Furthermore, there is a finite real constant $\CnormEta$ such that for all $i\in\indset{\NN}$ and all elements $\eta = (\gzero, \gone, \h)\in\TauN$, we have
		\begin{displaymath}
			\normP{\hzero(\Wi)-\h(\Wi)}{2}
			\cdot
			\big(\normP{\gonezero(\Di)-\gone(\Di)}{2}+ 
			\normP{\gzerozero(\Di)-\gzero(\Di)}{2} + 
			\normP{\hzero(\Wi)-\h(\Wi)}{2} 
			\big)
			\le\deltaN\NN^{-\frac{1}{2}}.
		\end{displaymath}

		\item\label{assumpt:DML4}
		Assumption~\ref{assumpt:regularity4} also holds with $\hzero$ replaced by $\h$. 
		
		\item\label{assumpt:DML3}
		Let $\kappa$ be the largest real number such that for all $i\in\indset{\NN}$ and all $\eta\in\TauN$, we have
		\begin{displaymath}
			\normP{\hzero(\Wi)-\h(\Wi)}{2}
			+ \normP{\gonezero(\Di)-\gone(\Di)}{2}
			+ \normP{\gzerozero(\Di)-\gzero(\Di)}{2}
			\lesssim \sqrt{\deltaN}\NN^{-\kappa}.
		\end{displaymath}
		That is, $\kappa$ represents the slowest convergence rate of our machine learners. 
		Then, there is a finite real constant $\CproductAndDegre$  such that $\dmax \NN^{-2\kappa}\le \CproductAndDegre$ holds, where $\dmax$ denotes the maximal degree of the dependency graph. 
		
		\item\label{assumpt:DML2}
		For all $\kk\in\indset{\KK}$,  the nuisance parameter estimate $\hetaIkc=\hetaIkc(\SIkc)$ belongs to the nuisance function realization set $\TauN$ 
		with $\PP$-probability no less than $1-\DeltaN$.
	\end{enumerate}
\end{assumptions}

The following two assumptions, Assumption~\ref{assumpt:Acald} and~\ref{assumpt:kappa}, are only required to establish that our plugin estimator of the asymptotic variance is consistent in Appendix~\ref{sect:var_est}. (However, please recall that we recommend using the bootstrap procedure presented in Section~\ref{sect:boot-var} unless the sample size is large). They are not required to establish the asymptotic Gaussian distribution of our plugin machine learning estimator. 

Assumption~\ref{assumpt:Acald} characterizes the order of the minimal size of the sets $\Acald$ for $d\ge 0$. 
These sets are required to contain a sufficient number of units such that the degree-specific treatment effects $\thetazerod$ for $d\ge 0$
can be estimated at a fast enough rate. These estimators are required to give a consistent estimator of the 
asymptotic variance $\sigmainfty^2$.

\begin{assumptions}\label{assumpt:Acald}
	For $d\ge 0$, 
	the order of $\normone{\Acald}$ is at least $\NN^{3/4}$, denoted by $\Omega(\NN^{3/4})$ according to the Bachmann--Landau notation~\citep{Lattimore2020}. 
\end{assumptions}

Assumption~\ref{assumpt:kappa} specifies that all individual machine learning estimators of the nuisance functions converge at a rate faster than $\NN^{-1/4}$.

\begin{assumptions}\label{assumpt:kappa}
	The slowest convergence rate $\kappa$ in Assumption~\ref{assumpt:DML3} satisfies $\kappa\ge 1/4$. 
\end{assumptions}

%%%%%%%%%%%%%%%%%%%%%%%%%%%%%%%%%%%%%%%%%%%%%%%%%%
\section{Network Effects in the Social Sciences} \label{sect:app_socsci}
%%%%%%%%%%%%%%%%%%%%%%%%%%%%%%%%%%%%%%%%%%%%%%%%%%

We consider models related to  spillover effects. However, another notion of spillover effects has prevailed within the social science networks literature, namely social influence effects.  In this appendix, we describe social influence effects  and how their modeling differs from our approach.
Whereas spillover effects represent new covariates on the unit-level that are built from variables of other units along network paths, social influence effects mostly concern effects that a specific variable $A_j$ of neighboring units has on $A_i$ of the $i$th unit. 
In the statistics literature, this is called contagion~\citep{ugander2013, EcklesKarrerUgander+2017}. 
In the social sciences, there are two important models to investigate social influence / contagion processes: the autologistic actor attribute model (ALAAM;~\citet{Robins2001, daraganova_robins_2012}) and the stochastic actor-oriented model (SAOM;~\citet{snijders_2005, SNIJDERS201044, Steglich2010}). 
Both models aim at estimating the degree to which a variable $A_i$ of a focal individual is associated with the values of its neighbors' values of $A$. Whereas ALAAMs only considers cross-sectional data, SAOMs additionally allow estimating longitudinal social influence effects.

In contrast, the spillover features that we consider summarize variables from neighboring units. 
They represent a new variable that is used for the treatment or outcome regression models.
For example, in our empirical analysis, we consider the spillover effect of study motivation of unit $i$'s neighbors on the learning hours of unit $i$. We do not consider spillover from the learning hours of unit $i$'s neighbors on unit $i$'s own learning hours (i.e. social influence / contagion). Instead, we model such associations of the individual units' learning hours by constructing features from other variables and units that act as observed confounders. 
Moreover, we are not interested in estimating the effect as such of, say, other units' study motivation on the learning hours of unit $i$. However, this is possible with ALAAMs and SAOMs. 
We are not interested in estimating spillover as such, but we consider spillover as a tool to control for spurious associations due to the network structure to estimate treatment effects.

%%%%%%%%%%%%%%%%%%%%%%%%%%%%%
\section{Additional Simulation Results }\label{sect:second-degree}
%%%%%%%%%%%%%%%%%%%%%%%%%%%%%

First, we present simulation results involving spillover effects from second degree neighbors and misspecified spillover effects. We consider the same data generating mechanism and estimation framework as in Section~\ref{sect:simulation} apart from the following change: the ``neighborhood'' $\alpha(i)$ defining $\Xi$ contains all second-degree neighbors of unit $i$, that is, all units that are a distance $2$ away from unit $i$ in the network (neighbors of neighbors). 
For incorrectly specified spillover effects, we assumed that $\alpha(i)$ contains the direct neighbors of unit $i$ instead. 
We consider an Erd\H{o}s--R\'enyi network as in Section~\ref{sect:simulation} except that the average degree of a unit is now $2.5$. 
The results are displayed in Figure~\ref{fig:simulation-neighb2}. 
Our method, netAIPW, does not seem to suffer much from the misspecified spillover effects in terms of coverage whereas the other methods do. 
In general, we observed that it is advantageous to include spillover effects even if they are not entirely correctly specified. 

\begin{figure}[h!] 
	\centering 
	\includegraphics[width=\textwidth]{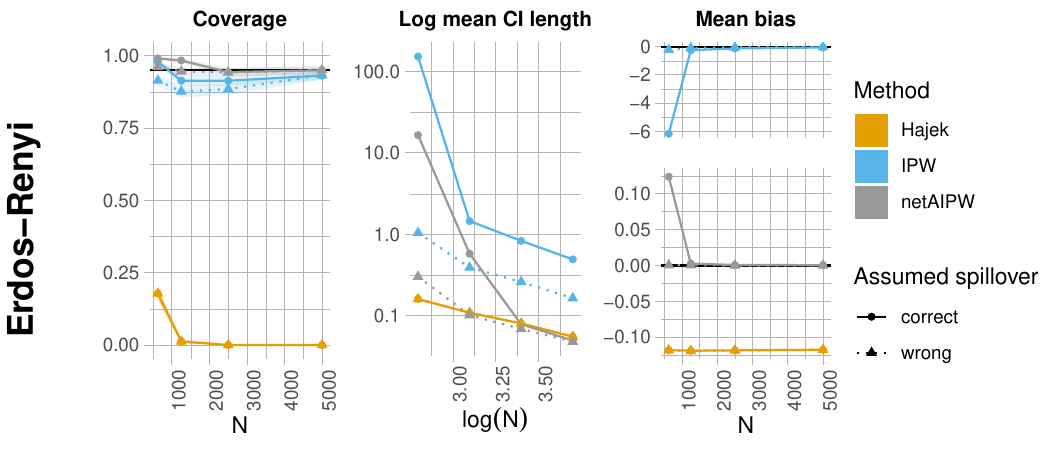}
		\caption[]{\label{fig:simulation-neighb2} 
	Coverage (fraction of times the true, and in general unknown, $\thetazeroN$ was inside the confidence interval), log mean length of two-sided $95\%$ confidence intervals for $\thetazeroN$, 
	and mean bias over $1000$ simulation runs for the ``const'' Erd\H{o}s--R\'enyi network as in Section~\ref{sect:simulation}, except for the average degree of $2.5$. 
	We compare the performance of our method, netAIPW, with the  H\'ajek  
	and an  IPW 
	estimator, indicated by color, for correctly and incorrectly specifying the spillover effects from second-degree neighbors. The variance of the competitors are empirical variances over the $1000$ repetitions, whereas we computed confidence intervals for netAIPW according to~\eqref{eq:CI} with $B=1$ and $300$ bootstrap samples. The shaded regions in the coverage plot represent $95\%$ confidence bands with respect to the $1000$ simulation runs. 
	}
\end{figure}

Next, we present networks and different kinds of spillover effects to show when Assumption~\ref{assumpt:variance} holds and when it fails to hold. We consider the same second-degree spillover effects and Erd\H{o}s--R\'enyi network as above in this section. The ``const'' network
has an expected degree of $2.5$, and the ``$\NN {\mathchar"5E} (1 / 15)$'' one has an expected degree of $2.5\NN^{1/15}$ in Figure~\ref{fig:check-degree}. The maximal degree, divided by $N^{1/4}$ of the dependency graph of the ``const'' network decreases with $\NN$, whereas the respective quantity increases with $\NN$ for the non-constant-degree network. That is, only the constant-degree network satisfies Assumption~\ref{assumpt:variance}. The non-constant-degree network implies a dependency graph that is ``too dense'' to satisfy this assumption. 
We would like to remark that satisfying Assumption~\ref{assumpt:variance} is an interplay of the underlying network and the chosen spillover effects because they determine the dependency graph, and hence its maximal degree, together. A given network might lead to a dependency graph satisfying Assumption~\ref{assumpt:variance} with one kind of spillover effects (e.g., only from neighbors), whereas the same network might lead to a dependency graph violating this assumption with another kind of spillover effects (e.g., also including second-degree neighbors, that is, neighbors of neighbors). 

\begin{figure}[h!] 
	\centering 
	\includegraphics[width=0.7\textwidth]{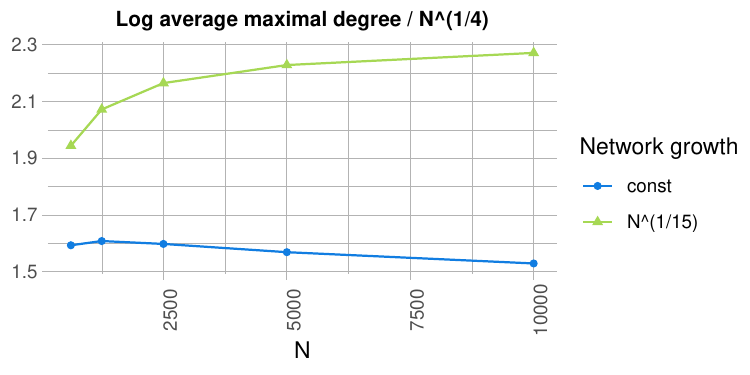}
		\caption[]{\label{fig:check-degree} 
  Simulated maximal degree of the dependency graph from second-degree spillover on Erd\H{o}s--R\'enyi networks with an expected degree of either $2.5$ (``const'') or $2.5\NN^{1/15}$ (``$\NN {\mathchar"5E} (1 / 15)$'') divided by $N^{1/4}$, averaged over $1000$ simulation runs. 
  }
\end{figure}

%%%%%%%%%%%%%%%%%%%%%%%%%%%%%
\section{Supplementary Lemmata}\label{sect:suppl-lem}
%%%%%%%%%%%%%%%%%%%%%%%%%%%%%

In this section, we prove two results on conditional independence relationships of the variables from our model. 
We argue for the directed acyclic graph (DAG) of our model~\eqref{eq:SEM} and use graphical criteria~\citep{Lauritzen1996, Pearl1998, Pearl2009, Pearl2010, Peters2017, Perkovic2018, Maathuis2019}. 
We denote the direct causes of $\Wi$ by $\pa(\Wi)$, the parents of $\Wi$. Analogously, we denote the parents of $\Yi$ by $\pa(\Yi)$; please see for instance~\citet{Lauritzen1996}. 
We assume that $\pa(\Wi)$ consists of $\Ci$ and the variables used to compute the spillover feature $\Zi$ and that $\pa(\Yi)$ consists of $\Wi$, $\Ci$, and the variables used to compute the spillover feature $\Xi$.

\begin{lemma}\label{lem:lemma1}
Let $i\in\indset{\NN}$, and let $\Cj\not\in\pa(\Yi)$. Then, we have $\Yi\independent\Cj | \pa(\Yi)$. 
\end{lemma}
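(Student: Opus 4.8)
This is a d-separation statement about the causal DAG induced by the structural equations~\eqref{eq:SEM}, and the plan is to reduce it to the local Markov property of DAGs~\citep{Lauritzen1996, Peters2017}. The key structural observation is that the outcome $\Yi$ never appears on the right-hand side of any equation in~\eqref{eq:SEM}: outcomes influence no treatment and no confounder, and — since the $X$- and $Z$-features of Section~\ref{sect:model-formulation} are functions of treatments, confounders and the network $G$ only — no spillover feature and hence no other outcome either. Therefore $\Yi$ is a sink node of the DAG and has no descendants; in particular every $\Cj$ with $j\in\indset{\NN}$ is a non-descendant of $\Yi$.

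Given this, the conclusion follows directly. By hypothesis $\Cj\notin\pa(\Yi)$, so $\Cj$ lies in the set of non-descendants of $\Yi$ that are not parents of $\Yi$. The local Markov property asserts that $\Yi$ is conditionally independent of the collection of all such variables given $\pa(\Yi)$; applying the decomposition graphoid axiom to single out $\Cj$ from that collection yields $\Yi\independent\Cj\mid\pa(\Yi)$, as claimed.

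I expect the only real content of the proof to be making the DAG of model~\eqref{eq:SEM} precise and verifying the structural observation above — namely that adjoining the deterministic feature nodes $\Xi,\Zi$ (and the features of the other units) creates no outgoing edge from $\Yi$, which holds because the features are by construction measurable with respect to treatments and confounders alone. Once the DAG is fixed, the remainder is a one-line invocation of the local Markov property together with a standard graphoid manipulation; alternatively one could run a direct d-separation check on all paths between $\Cj$ and $\Yi$, but the sink property of $\Yi$ makes the Markov-property route the shortest.
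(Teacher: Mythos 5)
Your proposal is correct and follows essentially the same route as the paper: the paper's proof likewise rests on the observation that $\Yi$ has no descendants, so that conditioning on $\pa(\Yi)$ renders $\Yi$ independent of any non-parent $\Cj$ (the paper phrases this via the parents being a valid adjustment set, you via the local Markov property plus decomposition, but the content is identical). Your additional verification that the deterministic feature nodes create no outgoing edges from $\Yi$ is a useful explicit check that the paper leaves implicit.
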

\begin{proof}[Proof of Lemma~\ref{lem:lemma1}]
    The parents of $\Yi$ are a valid adjustment set~\citep{Pearl2009}. 
    Because $\Yi$ has no descendants, the claim follows. 
\end{proof}

\begin{lemma}\label{lem:lemma2}
Let $i\in\indset{\NN}$, and let $\Cj\not\in\pa(\Wi)$. Then, we have $\Wi\independent\Cj | \pa(\Wi)$. Furthermore, for $j\neq i$, we have $\Wi\independent\Wj|\pa(\Wi)$. 
\end{lemma}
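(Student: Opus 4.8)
The plan is to argue in the causal DAG associated with model~\eqref{eq:SEM}, exactly as in the proof of Lemma~\ref{lem:lemma1}, and to invoke the local (ordered) Markov property of a DAG: conditional on its parents, a node is independent of its non-descendants. So it suffices to check that $\Cj$, and, for $j\neq i$, the treatment $\Wj$, are non-descendants of $\Wi$ and do not belong to $\pa(\Wi)$; the two independence statements then follow immediately.

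First I would record the structure of the DAG of~\eqref{eq:SEM} around $\Wi$. By the assumption stated in Section~\ref{sect:model-formulation}, $\pa(\Wi)$ consists of $\Ci$ together with the confounders entering the feature $\Zi$, so the parents of $\Wi$ are confounder variables only; in particular $\Wj\notin\pa(\Wi)$ for every $j$, and $\Cj\notin\pa(\Wi)$ holds by hypothesis. Next I would identify the descendants of $\Wi$: its children are $\Yi$ and those feature nodes $X_m$ into which $\Wi$ enters, each such $X_m$ has only $\Ym$ as a child, and the outcome nodes are sinks; hence every descendant of $\Wi$ is an outcome or a feature of the form $X_m$. In particular no confounder $\Cj$ and no treatment $\Wj$ is a descendant of $\Wi$ (equivalently: treatments depend only on confounders, on the $Z$-features built from confounders, and on exogenous noise, so no directed path runs from $\Wi$ into $\Cj$ or into $\Wj$). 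Combining these two facts with the local Markov property gives $\Wi\independent\Cj\mid\pa(\Wi)$ and $\Wi\independent\Wj\mid\pa(\Wi)$, as claimed. If one prefers an explicit d-separation check instead, note that any path leaving $\Wi$ through an incoming edge is blocked at the adjacent parent, which is a non-collider lying in the conditioning set $\pa(\Wi)$, while any path leaving $\Wi$ through an outgoing edge must meet a collider that is a descendant of $\Wi$, and since no descendant of $\Wi$ lies in $\pa(\Wi)$ this collider is inactive and blocks the path.

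The one point that needs care is the assertion that $\Wj$ is not a descendant of $\Wi$, i.e.\ that there is no directed path from one treatment to another; I expect this to be the main obstacle, and it rests on the observation that in~\eqref{eq:SEM} the ancestors of any treatment node are confined to confounders, the $Z$-features built from them, and exogenous noise — never another treatment or an outcome. Everything else is routine bookkeeping about parents and descendants in the DAG.
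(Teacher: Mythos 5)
Your proposal is correct and takes essentially the same route as the paper: both argue in the causal DAG of~\eqref{eq:SEM} that the only descendants of $\Wi$ are outcome (and feature) nodes, which act as colliders outside the conditioning set $\pa(\Wi)$ on any path from $\Wi$ to $\Cj$ or $\Wj$, so that d-separation (equivalently the local Markov property) yields both independence claims. Your write-up is in fact slightly more careful than the paper's, which lumps the intermediate $X$-feature nodes into ``responses $Y$,'' but the underlying argument is identical.
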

\begin{proof}[Proof of Lemma~\ref{lem:lemma2}]
    The parents of $\Wi$ are a valid adjustment set~\citep{Pearl2009}.
    The treatment variable $\Wi$ has no descendants apart from outcomes $Y$, which are colliders on any path from $\Wi$ to $\Cj$ or $\Wj$, and thus the empty set blocks these paths.
    Consequently, the two claims follow. 
\end{proof}

%%%%%%%%%%%%%%%%%%%%%%%%%%%%%
\section{Proof of Theorem~\ref{thm:Gaussian}}\label{sect:proofs-thm-Gauss}
%%%%%%%%%%%%%%%%%%%%%%%%%%%%%

\begin{proof}[Proof of Lemma~\ref{lem:identifiability}]
Let $i\in\indset{\NN}$. We have
\begin{displaymath}
		\E[\psi(\Si, \thetazeroi, \etazero)]\\
		= \E\bigg[ \frac{\Wi}{\hzero(\Ui)} \big(\Yi-\gonezero(\Di)\big) \bigg] -  
		\E\bigg[ \frac{1-\Wi}{1-\hzero(\Ui)} \big(\Yi-\gzerozero(\Di)\big) \bigg]. 
\end{displaymath}
We have
\begin{equation}\label{eq:applicationLemma1}
    \begin{array}{cl}
    & \E\Big[ \frac{\Wi}{\hzero(\Ui)} \big(\Yi-\gonezero(\Di)\big) \Big] \\
    =& \E\Big[ \frac{\Wi}{\hzero(\Ui)} \big(\E[\Yi|\pa(\Yi)\cup\pa(\Wi)]-\gonezero(\Di)\big) \Big] \\
    =& \E\Big[ \frac{1}{\hzero(\Ui)} \E[\Wi\Yi - \Wi\gonezero(\Di)|\pa(\Yi)] \Big] \\
    =& \E\Big[ \frac{\Wi}{\hzero(\Ui)} \E[\epsYi|\pa(\Yi)] \Big] \\
    =& 0
    \end{array}
\end{equation}
due to Lemma~\ref{lem:lemma1} and because $\E[\epsYi|\pa(\Yi)]=0$ holds by assumption. Analogous computations for $\E[ (1-\Wi)/(1-\hzero(\Ui)) (\Yi-\gzerozero(\Di)) ]$ conclude the proof. 
\end{proof}

The following lemma shows that the score function $\phi$ is Neyman orthogonal in the sense that its Gateaux derivative vanishes~\citep{Chernozhukov2018}. 

\begin{lemma}[Neyman orthogonality]\label{lem:Neyman-orth}
Assume the assumptions of Theorem~\ref{thm:Gaussian} hold. 
Let $\eta\in\TauN$, and let $i\in\indset{\NN}$. Then, we have 
\begin{displaymath}
	\frac{\partial}{\partial r}\Big\vert_{r = 0}\E\big[\phi\big(\Si, \etazero + r(\eta - \etazero)\big)\big] = 0. 
\end{displaymath}
\end{lemma}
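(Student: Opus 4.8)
The plan is to compute the Gateaux derivative directly by differentiating under the expectation. Write $\eta_r = \etazero + r(\eta - \etazero)$ with components $\gone^r = \gonezero + r(\gone - \gonezero)$, $\gzero^r = \gzerozero + r(\gzero - \gzerozero)$, and $\h^r = \hzero + r(\h - \hzero)$. Since $\phi$ is a smooth (rational) function of $(\gone^r(\Di), \gzero^r(\Di), \h^r(\Ui))$ away from $\h^r \in \{0,1\}$, and Assumptions~\ref{assumpt:regularity4} and~\ref{assumpt:DML4} keep both $\hzero$ and $\h$ bounded away from $0$ and $1$ so that $\h^r$ is too for $r \in [0,1]$, the integrand is differentiable in $r$ with a dominated derivative (using the $L^p$ bounds in Assumptions~\ref{assumpt:regularity2}, \ref{assumpt:regularity3}, \ref{assumpt:DML1}). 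Hence I may exchange $\partial/\partial r$ and $\E$.

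The key computation is then to evaluate $\frac{\partial}{\partial r}\big|_{r=0}\phi(\Si, \eta_r)$ and take its expectation. Differentiating the three pieces of $\phi$ in~\eqref{eq:score}, the derivative splits into a term coming from the variation $\delta_{\gone} := \gone - \gonezero$, one from $\delta_{\gzero} := \gzero - \gzerozero$, and one from $\delta_{\h} := \h - \hzero$. For the $\delta_{\gone}$ direction, collecting terms gives $\E\big[-\delta_{\gone}(\Di) + \frac{\Wi}{\hzero(\Ui)}(-\delta_{\gone}(\Di))\big] \cdot(-1)$ — more precisely the coefficient is $\E\big[\big(\tfrac{\Wi}{\hzero(\Ui)} - 1\big)\delta_{\gone}(\Di)\big]$; conditioning on $\pa(\Wi) \supseteq \Ui$ and using $\E[\Wi \mid \pa(\Wi)] = \hzero(\Ui)$ (the structural equation~\eqref{eq:SEM}, since $\delta_{\gone}(\Di)$ involves only $\Ci$ and $\Xi$, which together with $\Wi$ are handled via $\E[\epsWi\mid\Ci,\Zi]=0$), this expectation vanishes. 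The same argument with $1-\Wi$ and $1-\hzero(\Ui)$ kills the $\delta_{\gzero}$ direction. For the $\delta_{\h}$ direction, the derivative of $\frac{\Wi}{\h^r(\Ui)}$ at $r=0$ is $-\frac{\Wi \delta_{\h}(\Ui)}{\hzero(\Ui)^2}$ and similarly for the $(1-\Wi)$ term, so the contribution is $\E\big[\big(-\tfrac{\Wi}{\hzero(\Ui)^2}(\Yi - \gonezero(\Di)) - \tfrac{1-\Wi}{(1-\hzero(\Ui))^2}(\Yi - \gzerozero(\Di))\big)\delta_{\h}(\Ui)\big]$; conditioning on $\pa(\Yi)$ and using $\E[\Wi(\Yi - \gonezero(\Di)) \mid \pa(\Yi)] = \E[\Wi \epsYi \mid \pa(\Yi)] = 0$ exactly as in the proof of Lemma~\ref{lem:identifiability} (invoking Lemma~\ref{lem:lemma1} and $\E[\epsYi\mid\pa(\Yi)]=0$), each summand vanishes.

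The main obstacle is bookkeeping rather than conceptual: one must correctly identify, for each of the three directions, the conditioning set on which the relevant residual has mean zero, and verify that the multiplying factor (e.g. $\delta_{\h}(\Ui)$, or $\Wi/\hzero(\Ui)^2$) is measurable with respect to that conditioning $\sigma$-algebra so it can be pulled out — this is where the structural assumption $\pa(\Wi) = (\Ci, \text{vars of }\Zi)$ and $\pa(\Yi) = (\Wi, \Ci, \text{vars of }\Xi)$ is essential. The domination needed to justify differentiating under the integral is routine given the uniform $L^p$ and $L^\infty$ bounds and the lower bound on $\h^r$. Assembling the three vanishing contributions gives $\frac{\partial}{\partial r}\big|_{r=0}\E[\phi(\Si,\eta_r)] = 0$.
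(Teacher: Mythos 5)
Your proposal is correct and follows essentially the same route as the paper: differentiate the Gateaux path explicitly, split the derivative at $r=0$ into the $\delta_{g_1}$, $\delta_{g_0}$, and $\delta_{h}$ directions, and kill the first two via $\E[\epsWi\mid\cdot]=0$ and the third via $\E[\epsYi\mid\cdot]=0$, exactly as in the paper's display following~\eqref{first-derivative}; your added justification for interchanging $\partial/\partial r$ and $\E$ is left implicit in the paper. One imprecision worth fixing: for the $\delta_{g_1}$ term you write that conditioning on $\pa(\Wi)$ suffices, but $\delta_{g_1}(\Di)$ is \emph{not} $\pa(\Wi)$-measurable, since $\Xi$ is built from other units' treatments $\Wj$ and confounders $\Cj$; you must condition on the larger set $\pa(\Wi)\cup\pa(\Yi)$ and then invoke the conditional independence $\Wi\independent \Wj,\Cj\mid\pa(\Wi)$ (Lemma~\ref{lem:lemma2}, which uses that the $Y$'s are colliders) to reduce $\E[\Wi\mid\pa(\Wi)\cup\pa(\Yi)]$ back to $\hzero(\Ui)$ — the bare structural condition $\E[\epsWi\mid\CZi]=0$ is not enough in the network setting. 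You flag this as the key bookkeeping issue in your closing paragraph, so the gap is one of execution rather than of idea.
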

\begin{proof}[Proof of Lemma~\ref{lem:Neyman-orth}]
	Let $r\in (0,1)$, let $i\in\indset{\NN}$, and let $\eta\in\TauN$. 
	Then, we have
\begin{equation}\label{first-derivative}
	\begin{array}{cl}
		&\frac{\partial}{\partial r}\E\big[\phi\big(\Si, \etazero + r(\eta - \etazero)\big)\big] \\
		=& \frac{\partial}{\partial r} 
		\E\Big[ \gonezero(\Di) - \gzerozero(\Di) + r\big(\gone(\Di)- \gzero(\Di) - \gonezero(\Di)+\gzerozero(\Di)\big)\\
		&\quad + \frac{\Wi}{\hzero(\Ui) + r\big(\h(\Ui)-\hzero(\Ui)\big)} \Big(\Yi - \gonezero(\Di) - r\big(\gone(\Di)-\gonezero(\Di)\big)\Big)\\
		&\quad - \frac{1 -\Wi}{1 - \hzero(\Ui) - r\big(\h(\Ui)-\hzero(\Ui)\big)} \Big(\Yi - \gzerozero(\Di) - r\big(\gzero(\Di)-\gzerozero(\Di)\big)\Big)\Big]\\
		=& \E\Big[
		\big(\gone(\Di)- \gzero(\Di)\big) - \big(\gonezero(\Di)-\gzerozero(\Di)\big)\\
		&\quad + \frac{\Wi}{\big(\hzero(\Ui) + r\big(\h(\Ui)-\hzero(\Ui)\big)\big)^2}\Big(-\big( \gone(\Di)-\gonezero(\Di) \big) \big(\hzero(\Ui) + r(\h(\Ui)-\hzero(\Ui))\big) \\
		&\quad\quad
		- \big(\Yi - \gonezero(\Di) - r(\gone(\Di) - \gonezero(\Di))\big) \big(\h(\Ui) - \hzero(\Ui)\big)\Big)\\
		&\quad - 
		\frac{1 - \Wi}{\big(1 - \hzero(\Ui) - r\big(\h(\Ui)-\hzero(\Ui)\big)\big)^2}\Big(-\big( \gzero(\Di)-\gzerozero(\Di) \big) \big(1 - \hzero(\Ui) - r(\h(\Ui)-\hzero(\Ui))\big) \\
		&\quad\quad
		+ \big(\Yi - \gzerozero(\Di) - r(\gzero(\Di) - \gzerozero(\Di))\big) \big(\h(\Ui) - \hzero(\Ui)\big)\Big)
		\Big].
	\end{array}
\end{equation}
We evaluate this expression at $r=0$ and obtain
\begin{displaymath}
	\begin{array}{cl}
		&\frac{\partial}{\partial r}\Big\vert_{r = 0}\E\big[\phi\big(\Si, \etazero + r(\eta - \etazero)\big)\big] \\
		=&  \E\Big[
		\big(\gone(\Di)- \gzero(\Di)\big) - \big(\gonezero(\Di)-\gzerozero(\Di)\big)\\
		&\quad
		- \Big(1 + \frac{\epsWi}{\hzero(\Ui)}\Big)\big(\gone(\Di) - \gonezero(\Di)\big)
		- \frac{\Wi}{(\hzero(\Ui))^2}\big( \Yi-\gonezero(\Di) \big) \big(\h(\Ui)-\hzero(\Ui)\big)\\
		&\quad
		+ \Big(1 - \frac{\epsWi}{1 - \hzero(\Ui)}\Big)\big(\gzero(\Di) - \gzerozero(\Di)\big)
		- \frac{1 -\Wi}{(1-\hzero(\Ui))^2}\big( \Yi-\gzerozero(\Di) \big) \big(\h(\Ui)-\hzero(\Ui)\big)\Big]\\
		=&0
	\end{array}
\end{displaymath}
due to~\eqref{eq:applicationLemma1} and because
\begin{displaymath}
    \begin{array}{cl}
        &\E\Big[
         \frac{\epsWi}{\hzero(\Ui)}
         \big(\gone(\Di) - \gonezero(\Di)\big)\Big]\\
         =& \E\Big[
         \big(\E[\Wi|\pa(\Wi)\cup\pa(\Yi)] - \hzero(\Ui)\big)
         \frac{1}{\hzero(\Ui)}
         \big(\gone(\Di) - \gonezero(\Di)\big)
        \Big]\\
        =&\E\Big[
         \E[\Wi - \hzero(\Ui)|\pa(\Wi)]
         \frac{1}{\hzero(\Ui)}
         \big(\gone(\Di) - \gonezero(\Di)\big)
        \Big]\\
        =&\E\Big[
         \E[\epsWi|\pa(\Wi)]
         \frac{1}{\hzero(\Ui)}
         \big(\gone(\Di) - \gonezero(\Di)\big)
        \Big]\\
        =& 0
    \end{array}
\end{displaymath}
holds due to Lemma~\ref{lem:lemma2} and because we assumed $\E[\epsWi|\pa(\Wi)]$ = 0, and similarly for $\E[
         \epsWi/(1 - \hzero(\Ui))
         (\gzero(\Di) - \gzerozero(\Di))]$. 

\end{proof}

The following lemma bounds the second directional derivative of the score function. Its proof uses that products of the errors of the machine learners are of a smaller order than $\NN^{-1/2}$. 

\begin{lemma}[Product property]\label{lem:product-property}
Assume the assumptions of Theorem~\ref{thm:Gaussian} hold. 
Let $r\in(0, 1)$, let $\eta\in\TauN$, and let $i\in\indset{\NN}$. Then, we have
\begin{displaymath}
	\normonebigg{\frac{\partial^2}{\partial r^2}\E\big[\phi\big(\Si, \etazero + r(\eta - \etazero)\big)\big]}\lesssim \deltaN\NN^{-\frac{1}{2}}. 
\end{displaymath}
\end{lemma}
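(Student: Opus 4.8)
The plan is to start from the expression for the first directional derivative $\frac{\partial}{\partial r}\E[\phi(\Si, \etazero + r(\eta-\etazero))]$ computed in~\eqref{first-derivative} and differentiate once more in $r$. Writing $\Delta_g^1 = \gone(\Di) - \gonezero(\Di)$, $\Delta_g^0 = \gzero(\Di) - \gzerozero(\Di)$, and $\Delta_h = \h(\Ui) - \hzero(\Ui)$, and abbreviating $h_r = \hzero(\Ui) + r\Delta_h$ and $\bar h_r = 1 - \hzero(\Ui) - r\Delta_h$, the terms that survive differentiation are those where the $r$-dependence sits in the denominators $h_r$ and $\bar h_r$ and in the residual factors $\Yi - \gonezero(\Di) - r\Delta_g^1$ and $\Yi - \gzerozero(\Di) - r\Delta_g^0$. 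The key structural fact is that the linear-in-$\Delta_g$ and linear-in-$\Delta_h$ pieces have already cancelled at first order (that is what Lemma~\ref{lem:Neyman-orth} gives); differentiating again therefore produces only terms that are \emph{bilinear} in the error pairs, schematically of the form $\Delta_h \Delta_g^1 / h_r^2$, $\Delta_h \Delta_g^0 / \bar h_r^2$, and $\Delta_h^2 (\Yi - \gonezero(\Di) - r\Delta_g^1)/h_r^3$ together with its $\gzero$-analogue, each multiplied by $\Wi$ or $1-\Wi$.

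Next I would bound each such bilinear term. The denominators $h_r$ and $\bar h_r$ are bounded away from $0$ uniformly: since $\h$ satisfies Assumption~\ref{assumpt:regularity4} by Assumption~\ref{assumpt:DML4} and so does $\hzero$, the convex combination $h_r = (1-r)\hzero(\Ui) + r\h(\Ui)$ lies in $[\Chzero, 1-\Chzero]$ almost surely, and likewise $\bar h_r$; hence $1/h_r$, $1/\bar h_r \le \Chzero^{-1}$ almost surely. The factors $\Wi$, $1-\Wi$ are bounded by $1$, and $\Yi - \gonezero(\Di) - r\Delta_g^1$ is bounded in $L^2(\PP)$ using Assumption~\ref{assumpt:regularity3} (which bounds $\normP{\Yi}{\infty}$, $\normP{\gonezero(\Di)}{\infty}$) together with the uniform $\normP{\cdot}{\infty}$-bound on elements of $\TauN$ from Assumption~\ref{assumpt:DML1}. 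Applying Cauchy--Schwarz to each bilinear term then yields a bound of the form $\text{(const)}\cdot\normP{\Delta_h}{2}\big(\normP{\Delta_g^1}{2} + \normP{\Delta_g^0}{2} + \normP{\Delta_h}{2}\big)$, where the $\normP{\Delta_h}{2}^2$ piece comes from the $\Delta_h^2$-terms after the residual $L^2$-factor is absorbed into the constant. By Assumption~\ref{assumpt:DML1} this product is at most $\deltaN\NN^{-1/2}$ (note $\normP{\hzero(\Wi)-\h(\Wi)}{2}$ in that assumption is the same as $\normP{\Delta_h}{2} = \normP{\hzero(\Ui)-\h(\Ui)}{2}$ up to the harmless $\Wi$ vs.\ $\Ui$ typo in the appendix statement), which gives the claimed $\lesssim \deltaN \NN^{-1/2}$.

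The one point requiring a little care — and the main obstacle — is making sure that differentiating~\eqref{first-derivative} a second time really does annihilate every term that is not genuinely bilinear in the errors, so that no ``linear'' term of order $\deltaN^{1/2}\NN^{-\kappa}$ (which would be too large) leaks through. Concretely, the $\frac{\partial}{\partial r}$ of terms like $\Wi \Delta_g^1 / h_r$ produces both $-\Wi \Delta_g^1 \Delta_h / h_r^2$ (bilinear, fine) — and nothing else, since $\Delta_g^1$ is $r$-free; similarly $\frac{\partial}{\partial r}$ of $\Wi(\Yi - \gonezero(\Di) - r\Delta_g^1)\Delta_h / h_r^2$ gives a $\Delta_g^1\Delta_h/h_r^2$ term and a $\Delta_h^2(\cdots)/h_r^3$ term, both bilinear. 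So the bookkeeping works, but one should write out the second derivative term by term to confirm it. After that, the triangle inequality over the finitely many resulting terms and the $L^2$ estimates above finish the proof; I would also remark that the bound is uniform over $\eta\in\TauN$, over $i\in\indset{\NN}$, over $r\in(0,1)$, and over $\PP\in\PcalN$, as needed downstream for Theorem~\ref{thm:Gaussian}.
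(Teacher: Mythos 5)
Your proposal is correct and follows essentially the same route as the paper: differentiate the first Gateaux derivative from~\eqref{first-derivative} once more, observe that every surviving term is bilinear in the error pairs $(\h-\hzero)(\gone-\gonezero)$, $(\h-\hzero)(\gzero-\gzerozero)$, or $(\h-\hzero)^2$, bound the denominators away from zero via Assumptions~\ref{assumpt:regularity4} and~\ref{assumpt:DML4}, apply H\"older's inequality, and invoke the product bound of Assumption~\ref{assumpt:DML1}. The bilinear terms you list match the paper's explicit second-derivative computation exactly, so no gap remains.
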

\begin{proof}[Proof of Lemma~\ref{lem:product-property}]
We use the first directional derivative we derived in~\eqref{first-derivative} to compute the second directional derivative
\begin{displaymath}
	\begin{array}{cl}
		& \frac{\partial^2}{\partial r^2}\E\big[\phi\big(\Si, \etazero + r(\eta - \etazero)\big)\big]\\
		=& 
		2\E\Big[
		\frac{\Wi}{\big( \hzero(\Ui) + r(\h(\Ui)-\hzero(\Ui)) \big)^4}
		\Big(\big( \gone(\Di)-\gonezero(\Di) \big) \big(\hzero(\Ui) + r(\h(\Ui)-\hzero(\Ui))\big) \\
		&\quad\quad\quad
			+ \big(\Yi - \gonezero(\Di)- r(\gone(\Di) - \gonezero(\Di))\big) \big(\h(\Ui) - \hzero(\Ui)\big) \Big) \\
		&\quad\quad
		\cdot \Big( \hzero(\Ui) + r\big(\h(\Ui) - \hzero(\Ui)\big) \Big) 
		\big( \h(\Ui) - \hzero(\Ui) \big)
		\Big]\\
		& \quad
		+ 2\E\Big[
		\frac{1-\Wi}{\big( 1 - \hzero(\Ui) - r(\h(\Ui)-\hzero(\Ui)) \big)^4}
		\Big(\big( \gzero(\Di)-\gzerozero(\Di) \big) \big(1 - \hzero(\Ui) - r(\h(\Ui)-\hzero(\Ui))\big) \\
		&\quad\quad\quad
			- \big(\Yi - \gzerozero(\Di)- r(\gzero(\Di) - \gzerozero(\Di))\big) \big(\h(\Ui) - \hzero(\Ui)\big) \Big) \\
		&\quad\quad
		\cdot \Big( 1 - \hzero(\Ui) - r\big(\h(\Ui) - \hzero(\Ui)\big) \Big) 
		\big( \h(\Ui) - \hzero(\Ui) \big)
		\Big]. 
	\end{array}
\end{displaymath}
Due to H{\"o}lder's inequality and Assumption~\ref{assumpt:regularity1}, \ref{assumpt:regularity3}, \ref{assumpt:regularity4}, and \ref{assumpt:DML1}, we have
\begin{displaymath}
	\begin{array}{cl}
		&\normonebigg{\frac{\partial^2}{\partial r^2}\E\big[\phi\big(\Si, \etazero + r(\eta - \etazero)\big)\big]}\\
		\lesssim & 
		\big(\normP{\gone(\Di) - \gonezero(\Di)}{2} + \normP{\h(\Ui) - \hzero(\Ui)}{2}\big)\normP{\h(\Ui) - \hzero(\Ui)}{2}\\
		&\quad 
		+ \big(\normP{\gzero(\Di) - \gzerozero(\Di)}{2} + \normP{\h(\Ui) - \hzero(\Ui)}{2}\big)\normP{\h(\Ui) - \hzero(\Ui)}{2}. 
	\end{array}
\end{displaymath}
Due to Assumption~\ref{assumpt:DML1}, both summands above are bounded by $\deltaN\NN^{-1/2}$, and hence we conclude the proof. 
\end{proof}

The following lemma describes how we apply Stein's method~\citep{chin2021} to obtain the asymptotic Gaussian distribution of our estimator although the data is highly dependent. 

\begin{lemma}[Asymptotic distribution with Stein's method]\label{lem:Stein}
Assume the assumptions of Theorem~\ref{thm:Gaussian} hold. 
	Denote by 
	\begin{displaymath}
		\sigmaN^2 = \Var\bigg(\frac{1}{\sqrt{\NN}}\sum_{i=1}^{\NN}\psi(\Si, \thetazeroi, \etazero)\bigg).
	\end{displaymath}
	Observe that by Assumption~\ref{assumpt:variance}, we have $\lim_{\NN\to\infty}(\sigmaN^2 -\sigmainfty^2)= 0$.
	Then, we have
	\begin{displaymath}
		\sigmaN ^{-1}\cdot\frac{1}{\sqrt{\NN}} \sum_{i=1}^{\NN} \psi(\Si, \thetazeroi, \etazero) \stackrel{d}{\rightarrow} \mathcal{N}(0, 1). 
	\end{displaymath}
\end{lemma}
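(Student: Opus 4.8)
The plan is to invoke a Stein's-method central limit theorem for sums of locally dependent random variables, namely Theorem~$3.6$ of~\citet{Ross2011}, applied on the dependency graph $\tildeG$. Write $X_i = \psi(\Si, \thetazeroi, \etazero) = \phi(\Si, \etazero) - \thetazeroi$ and $W = \sum_{i=1}^{\NN} X_i$. By the definition of $\sigmaN^2$ we have $\Var(W) = \NN\sigmaN^2$, so the statement to be proven is precisely that $W/\sqrt{\Var(W)} \stackrel{d}{\rightarrow}\mathcal{N}(0,1)$. First I would observe that $\E[X_i] = 0$ for every $i$, which is immediate from Lemma~\ref{lem:identifiability} since $\E[\phi(\Si, \etazero)] = \thetazeroi$.

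Next I would check the two structural hypotheses of Theorem~$3.6$. For the dependency-graph property: each $X_i$ is a measurable function of $\Si$, and each $\Si$ is in turn a measurable function of the exogenous noise $\epsCi, \epsWi, \epsYi$ of unit $i$ together with the noises of the units whose treatments or confounders enter the features $\Xi$ or $\Zi$. By the construction of $\tildeE$, if two vertex sets $A, B\subseteq V$ have no edge of $\tildeG$ between them, then the corresponding noise families are disjoint, and since all exogenous noises are independent across units, $\{\Si\}_{i\in A}\independent\{\Sj\}_{j\in B}$; hence $\{X_i\}_{i\in\indset{\NN}}$ has dependency graph $\tildeG$ with maximal degree $\dmax$ (cf.~\citet{chin2021, Saevje2021}). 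For uniform moment control: using the explicit form of $\phi$ in~\eqref{eq:score} together with Assumption~\ref{assumpt:regularity3} ($\normP{\Yi}{\infty}$, $\normP{\gonezero(\Di)}{\infty}$, $\normP{\gzerozero(\Di)}{\infty}$, $\normP{\hzero(\Ui)}{\infty}$ bounded), Assumption~\ref{assumpt:regularity4} ($\hzero$ bounded away from $0$ and $1$), and Assumption~\ref{assumpt:regularity5} ($\normone{\thetazeroi}\le\Ctheta$), one gets that $\normP{X_i}{\infty}$ is bounded by a finite constant uniformly in $i\in\indset{\NN}$ and $\PP\in\PcalN$; in particular $\E|X_i|^3$ and $\E[X_i^4]$ are uniformly bounded, say by $C$.

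Theorem~$3.6$ of~\citet{Ross2011} then gives, for $Z\sim\mathcal{N}(0,1)$ and the Wasserstein distance $d_W$,
\begin{displaymath}
	d_W\!\left(\frac{W}{\sqrt{\Var(W)}},\, Z\right)
	\;\lesssim\; \frac{\dmax^2}{(\NN\sigmaN^2)^{3/2}}\sum_{i=1}^{\NN}\E|X_i|^3
	\;+\; \frac{\dmax^{3/2}}{\NN\sigmaN^2}\sqrt{\sum_{i=1}^{\NN}\E[X_i^4]}
	\;\lesssim\; \frac{\dmax^2}{\sigmaN^3\sqrt{\NN}} \;+\; \frac{\dmax^{3/2}}{\sigmaN^2\sqrt{\NN}}.
\end{displaymath}
By Assumption~\ref{assumpt:variance}, $\sigmaN^2\to\sigmainfty^2>0$, so for all large $\NN$ we have $\Var(W)>0$ (hence the theorem applies) and $\sigmaN^{-2}, \sigmaN^{-3}$ stay bounded; by Assumption~\ref{assumpt:degree}, $\dmax = o(\NN^{1/4})$, hence $\dmax^2 = o(\NN^{1/2})$ and $\dmax^{3/2} = o(\NN^{1/2})$, so both terms on the right vanish as $\NN\to\infty$. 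Thus $d_W(W/\sqrt{\Var(W)}, Z)\to 0$, and since convergence in Wasserstein distance implies convergence in distribution, the lemma follows; as every bound above is uniform over $\PP\in\PcalN$, so is the convergence.

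I expect the main obstacle to be bookkeeping rather than anything conceptual: pinning down that the degree growth rate $\dmax = o(\NN^{1/4})$ of Assumption~\ref{assumpt:degree} is exactly strong enough to kill the dominant $\dmax^2\NN^{-1/2}$ contribution (coming from the third-moment term in the Stein bound), and verifying the dependency-graph hypothesis of~\citet{Ross2011} in the triangular-array regime where both the law $\PP\in\PcalN$ and the graph $\tildeG$ vary with $\NN$ — which causes no trouble because all moment estimates are uniform in $\PP$ and $i$.
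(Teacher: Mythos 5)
Your proposal is correct and follows essentially the same route as the paper: the paper's proof also centers the summands via Lemma~\ref{lem:identifiability}, invokes the Stein's-method dependency-graph bound (citing \citet[Lemma 1]{chin2021}, which is the Ross Theorem~3.6 bound you use) to control the Wasserstein distance by $c_1\dmax^{3/2}\NN^{-1/2}\sigmaN^{-2}\big(\tfrac{1}{\NN}\sum_i\E[\psi^4]\big)^{1/2}+c_2\dmax^{2}\NN^{-1/2}\sigmaN^{-3/2}\tfrac{1}{\NN}\sum_i\E[\normone{\psi}^3]$, and then uses the uniform moment bounds from Assumption~\ref{assumpt:regularity} together with $\dmax=o(\NN^{1/4})$ and $\sigmaN\to\sigmainfty>0$ to conclude. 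Your explicit verification of the dependency-graph hypothesis and the $L^\infty$ moment control are fine and consistent with what the paper assumes.
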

\begin{proof}[Proof of Lemma~\ref{lem:Stein}]
According to Lemma~\ref{lem:identifiability}, we have $\E[\psi(\Si, \thetazeroi, \etazero)] = 0$. 
According to Assumption~\ref{assumpt:regularity}, the fourth moment of $\psi(\Si, \thetazeroi, \etazero)$ exists for all $i\in\indset{\NN}$ and is uniformly bounded over $i\in\indset{\NN}$. 
Recall that we denote by $\dmax$ the maximal degree in the dependency graph on $\Si$, $i\in\indset{\NN}$. 
Due to~\citet[Theorem 3.6]{Ross2011}, we can thus bound the Wasserstein distance of $\sigmaN ^{-1}\cdot\frac{1}{\sqrt{\NN}} \sum_{i=1}^{\NN} \psi(\Si, \thetazeroi, \etazero)$ to $\mathcal{N}(0, 1)$ as follows: there exist finite real constants $c_1$ and $c_2$ such that we have
\begin{equation}\label{eq:wasserstein}
	\begin{array}{cl}
		& d_{\mathcal{W}}\bigg(\sigmaN ^{-1}\cdot\frac{1}{\sqrt{\NN}} \sum_{i=1}^{\NN} \psi(\Si, \thetazeroi, \etazero)\bigg)\\
		\le & 
		c_1\cdot\frac{\dmax^{3/2}}{\sigmaN^2}\cdot\sqrt{\sum_{i=1}^{\NN}\E\Big[\big(\frac{1}{\sqrt{\NN}}\psi(\Si, \thetazeroi, \etazero)\big)^4\Big]}
		+ c_2 \cdot\frac{\dmax^2}{\sigmaN^3}\cdot \sum_{i=1}^{\NN} \E\Big[\normonebig{\frac{1}{\sqrt{\NN}}\psi(\Si, \thetazeroi, \etazero)}^3\Big]\\
		=& c_1\cdot\frac{\dmax^{3/2}\cdot \frac{1}{\sqrt{\NN}}}{\sigmaN^2}\cdot\sqrt{\frac{1}{\NN}\sum_{i=1}^{\NN}\E[\psi^4(\Si, \thetazeroi, \etazero)]}
		+ c_2 \cdot\frac{\dmax^2\cdot \frac{1}{\sqrt{\NN}}}{\sigmaN^{3}}\cdot \frac{1}{\NN}\sum_{i=1}^{\NN} \E\big[\normone{\psi(\Si, \thetazeroi, \etazero)}^3\big].
	\end{array}
\end{equation}
By assumption, we have $\dmax = o(\NN^{1/4})$. Thus, we have $\dmax^{3/2}\cdot \frac{1}{\sqrt{\NN}} = o(\NN^{-1/8})$ and $\dmax^2\cdot \frac{1}{\sqrt{\NN}} = o(1)$. 
Because the terms $\E[\psi^4(\Si, \thetazeroi, \etazero)]$ and $\E\big[\normone{\psi(\Si, \thetazeroi, \etazero)}^3\big]$ are uniformly bounded over all $i\in\indset{\NN}$
and because $\sigmaN\to\sigmainfty$ as $\NN\to\infty$ according to Assumption~\ref{assumpt:variance}, the Wasserstein distance in~\eqref{eq:wasserstein} is of order $o(1)$. Consequently, we infer the statement of the lemma. 
\end{proof}

\begin{lemma}[Vanishing covariance due to sparse dependency graph]\label{lem:vanishing-cov}
Assume the assumptions of Theorem~\ref{thm:Gaussian} hold. 
Let $k\in\indset{\KK}$, and recall that $\nn=\normone{\Ik}$ holds. Then, we have
\begin{displaymath}
	\normonebigg{
	\frac{1}{\sqrt{\nn}} \sum_{i\in\Ik}\big( \phi(\Si, \hetaIkc) - \E[ \phi(\Si, \hetaIkc)|\SIkc] \big) - 
	\frac{1}{\sqrt{\nn}} \sum_{i\in\Ik}\big( \phi(\Si, \etazero) - \E[ \phi(\Si, \etazero)] \big)
	}
	= o_{\PP}(1).
\end{displaymath}
\end{lemma}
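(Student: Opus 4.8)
My plan is to condition on the out-of-fold data $\SIkc$, exploiting that (i) by the construction of $\SIkc$ in~\eqref{eq:SIkc} there is no edge of the dependency graph $\tildeG$ between any unit in $\Ik$ and any unit contributing to $\SIkc$, so that the dependency-graph property forces $\{\Si\}_{i\in\Ik}$ to be independent of $\SIkc$, and (ii) any two units $i\ne j$ in $\Ik$ that are non-adjacent in $\tildeG$ have independent unit-level data. I would write $R_n$ for the random variable inside $\normone{\cdot}$ and, for a deterministic nuisance triple $\eta$, set $g_i(\eta)=\phi(\Si,\eta)-\E_{\PP}[\phi(\widetilde S_i,\eta)]$ with $\widetilde S_i$ an independent copy of $\Si$, so $g_i(\eta)$ is mean zero under $\PP$. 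Since $\hetaIkc$ is $\SIkc$-measurable and $\{\Si\}_{i\in\Ik}\independent\SIkc$, one has $\E[\phi(\Si,\hetaIkc)\mid\SIkc]=\E_{\PP}[\phi(\widetilde S_i,\eta)]\big|_{\eta=\hetaIkc}$ and $\E[\phi(\Si,\etazero)]=\E[\phi(\Si,\etazero)\mid\SIkc]$, so that $\sqrt{\nn}\,R_n=\sum_{i\in\Ik}\big(g_i(\hetaIkc)-g_i(\etazero)\big)=:\sum_{i\in\Ik}\Delta_i$.

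Then I would reduce the claim to a conditional second-moment bound. By Assumption~\ref{assumpt:DML2}, $\hetaIkc\in\TauN$ with probability at least $1-\DeltaN\to1$, so for any $\eps>0$, conditional Markov gives $\Prob(\normone{R_n}>\eps)\le\DeltaN+\eps^{-2}\,\E\big[\one_{\{\hetaIkc\in\TauN\}}\E[R_n^2\mid\SIkc]\big]$. Expanding, $\E[R_n^2\mid\SIkc]=\nn^{-1}\sum_{i,j\in\Ik}\E[\Delta_i\Delta_j\mid\SIkc]$. For non-adjacent $i\ne j$ in $\Ik$, $g_i(\eta)$ and $g_j(\eta)$ are mean-zero functions of the independent variables $\Si,\Sj$, hence $\E_{\PP}[g_i(\eta)g_j(\eta)]=0$ for every deterministic $\eta$; evaluating at $\eta=\hetaIkc$ and using $\{\Si\}_{i\in\Ik}\independent\SIkc$ makes the corresponding conditional cross term vanish. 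Only the $\nn$ diagonal terms and the at most $\nn\dmax$ adjacent pairs remain, so by Cauchy--Schwarz $\E[R_n^2\mid\SIkc]\le(\dmax+1)\max_{i\in\Ik}\E[\Delta_i^2\mid\SIkc]$.

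It remains to bound $\E[\Delta_i^2\mid\SIkc]$. Since $\Delta_i$ is the $\SIkc$-centering of $\phi(\Si,\hetaIkc)-\phi(\Si,\etazero)$, its conditional variance is at most $\E[(\phi(\Si,\hetaIkc)-\phi(\Si,\etazero))^2\mid\SIkc]=\normP{\phi(\cdot,\eta)-\phi(\cdot,\etazero)}{2}^2\big|_{\eta=\hetaIkc}$, again using $\{\Si\}_{i\in\Ik}\independent\SIkc$. Writing out the four summands of $\phi$ in~\eqref{eq:score}, using $\Wi\in\{0,1\}$, the uniform bounds of Assumption~\ref{assumpt:regularity3}, and $\h,\hzero\in[\Chzero,1-\Chzero]$ a.s.\ (Assumptions~\ref{assumpt:regularity4} and~\ref{assumpt:DML4}), I would derive the pointwise estimate $\normone{\phi(\Si,\eta)-\phi(\Si,\etazero)}\lesssim\normone{\gone(\Di)-\gonezero(\Di)}+\normone{\gzero(\Di)-\gzerozero(\Di)}+\normone{\h(\Ui)-\hzero(\Ui)}$, whence $\normP{\phi(\cdot,\eta)-\phi(\cdot,\etazero)}{2}\lesssim\normP{\gone(\Di)-\gonezero(\Di)}{2}+\normP{\gzero(\Di)-\gzerozero(\Di)}{2}+\normP{\h(\Ui)-\hzero(\Ui)}{2}$. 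On $\{\hetaIkc\in\TauN\}$, Assumption~\ref{assumpt:DML3} bounds this (evaluated at $\eta=\hetaIkc$) by $\lesssim\sqrt{\deltaN}\,\NN^{-\kappa}$, so $\E[\Delta_i^2\mid\SIkc]\lesssim\deltaN\NN^{-2\kappa}$. Plugging back, $\E[R_n^2\mid\SIkc]\lesssim(\dmax+1)\deltaN\NN^{-2\kappa}\lesssim\deltaN\to0$, the last step using $\dmax\NN^{-2\kappa}\le\CproductAndDegre$ from Assumption~\ref{assumpt:DML3}; hence $\Prob(\normone{R_n}>\eps)\to0$.

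The hard part will be the bookkeeping around the conditioning --- keeping straight the difference between ``evaluate $\E[\phi(\Si,\eta)\mid\SIkc]$ and then substitute $\eta=\hetaIkc$'' and ``substitute first and then condition'', and checking that the non-adjacent cross terms inside a fold genuinely vanish --- and recognizing that the residual within-fold dependence is controlled by exactly the product $\dmax\NN^{-2\kappa}$ that Assumption~\ref{assumpt:DML3} was designed to bound. The pointwise Lipschitz-type bound on $\phi$ and the conditional Markov step are routine given the boundedness assumptions.
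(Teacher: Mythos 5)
Your proposal is correct and follows essentially the same route as the paper: condition on $\SIkc$, expand the conditional second moment, kill the cross terms for pairs non-adjacent in $\tildeG$, bound the surviving $O(\nn\dmax)$ terms via the pointwise Lipschitz-type estimate on $\phi(\Si,\eta)-\phi(\Si,\etazero)$ together with Assumption~\ref{assumpt:DML3}, and absorb $\dmax\NN^{-2\kappa}$ using the same assumption. Your explicit conditional-Markov step on the event $\{\hetaIkc\in\TauN\}$ is just an unpacked version of the paper's appeal to Lemma 6.1 of \citet{Chernozhukov2018}.
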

\begin{proof}[Proof of Lemma~\ref{lem:vanishing-cov}]
Let $k\in\indset{\KK}$.
We have
\begin{equation}\label{eq:to-bound-vanishing-cov-lemma}
	\begin{array}{cl}
		& \E\Big[
		\normoneBig{
	\frac{1}{\sqrt{\nn}} \sum_{i\in\Ik}\big( \phi(\Si, \hetaIkc) - \E[ \phi(\Si, \hetaIkc)|\SIkc] \big) - 
	\frac{1}{\sqrt{\nn}} \sum_{i\in\Ik}\big( \phi(\Si, \etazero) - \E[ \phi(\Si, \etazero)] \big)
	}^2
		\Big|\SIkc\Big]\\
		=& 
		\frac{1}{\nn} \sum_{i\in\Ik}\E\big[ \big( \phi(\Si, \hetaIkc) - \phi(\Si, \etazero) \big)^2 \big|\SIkc\big]
		- \frac{1}{\nn} \sum_{i\in\Ik}\E[\phi(\Si, \hetaIkc) - \phi(\Si, \etazero) |\SIkc]^2\\
		&\quad + 
		\frac{1}{\nn}\sum_{i, j\in\Ik, i\neq j} \E\big[ \big(\phi(\Si,\hetaIkc)-\phi(\Si,\etazero)\big) \big(\phi(\Sj,\hetaIkc)-\phi(\Sj,\etazero)\big) \big| \SIkc\big]\\
		&\quad - 
		\frac{1}{\nn}\sum_{i, j\in\Ik, i\neq j} \E[ \phi(\Si,\hetaIkc)-\phi(\Si,\etazero)|\SIkc] \E[\phi(\Sj,\hetaIkc)-\phi(\Sj,\etazero)| \SIkc]. 
	\end{array}
\end{equation}
Let $i\in\indset{\NN}$. 
The nuisance parameter estimator $\hetaIkc$ belongs to $\TauN$ with $\PP$-probability   at least $1-\DeltaN$ by Assumption~\ref{assumpt:DML2}. 
Therefore, with $\PP$-probability at least $1-\DeltaN = 1-o(1)$, we have
\begin{displaymath}
	\begin{array}{cl}
		& \sqrt{\E\big[\big(\phi(\Si,\hetaIkc)-\phi(\Si,\etazero)\big)^2 \big|\SIkc\big]}\\
		\le & 
		\sup_{\eta\in\TauN} \normPBig{
		-\gonezero(\Di) + \gone(\Di) 
		+ \gzerozero(\Di) - \gzero(\Di) 
		+ \frac{\Wi}{\hzero(\Ui)}\epsYi \\
		&\quad
- \frac{\Wi}{\h(\Ui)}\big(\gonezero(\Di)-\gone(\Di) + \epsYi \big)
		- \frac{1-\Wi}{1-\hzero(\Ui)}\epsYi 
		+ \frac{1-\Wi}{1-\h(\Ui)}\big(\gzerozero(\Di) - \gzero(\Di) + \epsYi\big)}{2}\\
		\le & 
		\sup_{\eta\in\TauN}\normP{\gonezero(\Di) - \gone(\Di)}{2}
		+ \sup_{\eta\in\TauN}\normP{ \gzerozero(\Di) - \gzero(\Di) }{2}\\
		&\quad
		+ \sup_{\eta\in\TauN}\normPbig{\frac{\h(\Ui) - \hzero(\Ui)}{\hzero(\Ui)\h(\Ui)}\Wi\epsYi}{2}
		+ \sup_{\eta\in\TauN}\normPbig{\frac{\Wi}{\h(\Ui)} \big(\gonezero(\Di) - \gone(\Di)\big)}{2}\\
		&\quad
		+ \sup_{\eta\in\TauN}\normPbig{\frac{\hzero(\Ui) - \h(\Ui)}{(1-\hzero(\Ui))(1-\h(\Ui))}(1-\Wi)\epsYi}{2}
		+ \sup_{\eta\in\TauN}\normPbig{\frac{1-\Wi}{1-\h(\Ui)} \big(\gzerozero(\Di) - \gzero(\Di)\big)}{2}. 
	\end{array}
\end{displaymath}
Assumption~\ref{assumpt:regularity1}, \ref{assumpt:regularity3}, \ref{assumpt:regularity4}, and~\ref{assumpt:DML4} bound the terms $\normP{\Wi\epsYi/(\hzero(\Ui)\h(\Ui))}{\infty}$, $\normP{\Wi/\h(\Ui)}{\infty}$, $\normP{(1-\Wi)\epsYi/((1-\hzero(\Ui))(1-\h(\Ui)))}{\infty}$, and $\normP{(1-\Wi)/(1-\h(\Ui))}{\infty}$. 
 Assumption~\ref{assumpt:DML3} specifies that the error terms $\normP{\hzero(\Wi)-\h(\Wi)}{2}$, $\normP{\gonezero(\Di)-\gone(\Di)}{2}$, and $\normP{\gzerozero(\Di)-\gzero(\Di)}{2}$ are upper bounded by $\sqrt{\deltaN}\NN^{-\kappa}$. 
Due to H{\"o}lder's inequality,  
we infer
\begin{equation}\label{eq:bound-var-Ai}
	\sqrt{\E\big[\big(\phi(\Si,\hetaIkc)-\phi(\Si,\etazero)\big)^2 \big|\SIkc\big]}
	\lesssim \sqrt{\deltaN}\NN^{-\kappa}
\end{equation}
with $\PP$-probability at least $1-\DeltaN$. 

Subsequently, we bound the summands in~\eqref{eq:to-bound-vanishing-cov-lemma}. 
Due to~\eqref{eq:bound-var-Ai}, we have
\begin{displaymath}
	\frac{1}{\nn} \sum_{i\in\Ik}\E\big[ \big( \phi(\Si, \hetaIkc) - \phi(\Si, \etazero) \big)^2 \big|\SIkc\big]
		- \frac{1}{\nn} \sum_{i\in\Ik}\E[\phi(\Si, \hetaIkc) - \phi(\Si, \etazero) |\SIkc]^2
		\lesssim \deltaN\NN^{-2\kappa}
\end{displaymath}
with $\PP$-probability at least $1-\DeltaN$. 
Observe that we have
\begin{displaymath}
	\begin{array}{cl}
		&\frac{1}{\nn}\sum_{i, j\in\Ik, i\neq j} \E\big[ \big(\phi(\Si,\hetaIkc)-\phi(\Si,\etazero)\big) \big(\phi(\Sj,\hetaIkc)-\phi(\Sj,\etazero)\big) \big| \SIkc\big]\\
		&\quad - 
		\frac{1}{\nn}\sum_{i, j\in\Ik, i\neq j} \E[ \phi(\Si,\hetaIkc)-\phi(\Si,\etazero)|\SIkc] \E[\phi(\Sj,\hetaIkc)-\phi(\Sj,\etazero)| \SIkc]\\
		=& \frac{1}{\nn}\sum_{i, j\in\Ik, i\neq j} \Cov\big(\phi(\Si,\hetaIkc)-\phi(\Si,\etazero), \phi(\Sj,\hetaIkc)-\phi(\Sj,\etazero)\big|\SIkc\big)\\
		=& \frac{1}{\nn}\sum_{i, j\in\Ik, \{i, j\}\in \tildeE} \Cov\big(\phi(\Si,\hetaIkc)-\phi(\Si,\etazero), \phi(\Sj,\hetaIkc)-\phi(\Sj,\etazero)\big|\SIkc\big),
	\end{array}
\end{displaymath}
where $\tildeE$ denotes the edge set of the dependency graph, because the $\Si$ with $i\in\Ik$ are independent of data in $\SIkc$ and because, given $\SIkc$, $\phi(\Si\hetaIkc)-\phi(\Si,\etazero)$ and $\phi(\Sj,\hetaIkc)-\phi(\Sj,\etazero)$ are uncorrelated if there is no edge between $i$ and $j$ in the dependency graph. 
In the dependency graph, each node has a maximal degree of $\dmax$. Thus, there are at most $1/2\cdot \NN\cdot\dmax$ many edges in $\tildeE$. With $\PP$-probability at least $1-\DeltaN$, the term
\begin{displaymath}
	\Cov\big(\phi(\Si,\hetaIkc)-\phi(\Si,\etazero), \phi(\Sj,\hetaIkc)-\phi(\Sj,\etazero)\big|\SIkc\big)
\end{displaymath}
can be bounded by $\deltaN\NN^{-2\kappa}$ up to constants for all $i$ and $j$ due to~\eqref{eq:bound-var-Ai}. Therefore, with $\PP$-probability at least $1-\DeltaN$, we have
\begin{displaymath}
	\begin{array}{cl}
		&\frac{1}{\nn}\sum_{i, j\in\Ik, i\neq j} \E\big[ \big(\phi(\Si,\hetaIkc)-\phi(\Si,\etazero)\big) \big(\phi(\Sj,\hetaIkc)-\phi(\Sj,\etazero)\big) \big| \SIkc\big]\\
		&\quad - 
		\frac{1}{\nn}\sum_{i, j\in\Ik, i\neq j} \E[ \phi(\Si,\hetaIkc)-\phi(\Si,\etazero)|\SIkc] \E[\phi(\Sj,\hetaIkc)-\phi(\Sj,\etazero)| \SIkc]\\
		\lesssim& \deltaN \dmax\NN^{-2\kappa}\\
		\lesssim & \deltaN,
	\end{array}
\end{displaymath}
where the last bound holds due to Assumption~\ref{assumpt:DML3}. 
Consequently, we have
\begin{displaymath}
	\begin{array}{cl}
	& \E\Big[
		\normoneBig{
	\frac{1}{\sqrt{\nn}} \sum_{i\in\Ik}\big( \phi(\Si, \hetaIkc) - \E[ \phi(\Si, \hetaIkc)|\SIkc] \big) - 
	\frac{1}{\sqrt{\nn}} \sum_{i\in\Ik}\big( \phi(\Si, \etazero) - \E[ \phi(\Si, \etazero)] \big)
	}^2
		\Big|\SIkc\Big]\\
		\lesssim & \deltaN
	\end{array}
\end{displaymath}
with $\PP$-probability at least $1-\DeltaN$,
and we infer the statement of the lemma due to~\citet[Lemma 6.1]{Chernozhukov2018}.
\end{proof}

\begin{lemma}[Taylor expansion]\label{lem:Taylor}
Assume the assumptions of Theorem~\ref{thm:Gaussian} hold. 
	Let $k\in\indset{\KK}$. 
	We have
	\begin{displaymath}
		\normonebigg{\frac{1}{\sqrt{\nn}}\sum_{i\in\Ik}\big( \E[\phi(\Si,\hetaIkc)|\SIkc] - \E[\phi(\Si,\etazero)] \big)} = o_{\PP}(1)
	\end{displaymath}
\end{lemma}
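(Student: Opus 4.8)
The plan is to exploit the cross-fitting construction together with the Neyman orthogonality of $\phi$ in the standard double machine learning fashion. Since $\hetaIkc = \hetaIkc(\SIkc)$ is measurable with respect to $\SIkc$, and since the unit-level data $\Si$ for $i\in\Ik$ are independent of $\SIkc$ by the construction of $\SIkc$ in~\eqref{eq:SIkc} via the dependency graph, conditioning on $\SIkc$ freezes $\hetaIkc$ and turns $\E[\phi(\Si,\hetaIkc)\mid\SIkc]$ into the population functional $g_i(\eta):=\E[\phi(\Si,\eta)]$ evaluated at the realized $\hetaIkc$; in particular $\E[\phi(\Si,\etazero)]=g_i(\etazero)$. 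Thus it suffices to bound $\frac{1}{\sqrt{\nn}}\sum_{i\in\Ik}\big(g_i(\hetaIkc)-g_i(\etazero)\big)$.

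I would first restrict to the event $\mathcal{E}_\kk:=\{\hetaIkc\in\TauN\}$, which by Assumption~\ref{assumpt:DML2} has $\PP$-probability at least $1-\DeltaN=1-o(1)$. On $\mathcal{E}_\kk$, for each $i\in\Ik$ I would apply a second-order Taylor expansion in $r$ to the real map $r\mapsto g_i\big(\etazero+r(\hetaIkc-\etazero)\big)$ on $[0,1]$ (its differentiability is exactly what underlies Lemmas~\ref{lem:Neyman-orth} and~\ref{lem:product-property}), obtaining for some $\bar r_i\in(0,1)$
\[
g_i(\hetaIkc)-g_i(\etazero)
= \frac{\partial}{\partial r}\Big\vert_{r=0} g_i\big(\etazero+r(\hetaIkc-\etazero)\big)
+\frac12\,\frac{\partial^2}{\partial r^2}\Big\vert_{r=\bar r_i} g_i\big(\etazero+r(\hetaIkc-\etazero)\big).
\]
The first-order term is zero by Neyman orthogonality (Lemma~\ref{lem:Neyman-orth}), and the second-order term is bounded in absolute value by a constant multiple of $\deltaN\NN^{-1/2}$ by the product property (Lemma~\ref{lem:product-property}), uniformly over $i\in\Ik$ and $\bar r_i\in(0,1)$. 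Summing the $\nn=\normone{\Ik}$ terms and using $\nn\le\NN$ then gives, on $\mathcal{E}_\kk$,
\[
\normonebigg{\frac{1}{\sqrt{\nn}}\sum_{i\in\Ik}\big(g_i(\hetaIkc)-g_i(\etazero)\big)}
\le \frac{1}{\sqrt{\nn}}\sum_{i\in\Ik}\normone{g_i(\hetaIkc)-g_i(\etazero)}
\lesssim \sqrt{\tfrac{\nn}{\NN}}\,\deltaN\le\deltaN,
\]
which tends to $0$; since in addition $\PP(\mathcal{E}_\kk^c)\le\DeltaN\to0$, the displayed quantity in the lemma is $o_\PP(1)$.

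The step that requires care — rather than genuine difficulty — is the conditioning argument that licenses treating $\hetaIkc$ as a deterministic argument of the population map $\eta\mapsto\E[\phi(\Si,\eta)]$, so that Lemmas~\ref{lem:Neyman-orth} and~\ref{lem:product-property}, which are phrased for fixed $\eta\in\TauN$, apply to the realized estimator; this is precisely where the independence of $\{\Si\}_{i\in\Ik}$ from $\SIkc$ enforced by the dependency graph enters. One must also dispose of the complement of $\mathcal{E}_\kk$, on which the Taylor bounds need not hold, but this is harmless because it has vanishing probability. All the analytic work — differentiability of the directional map, the orthogonality identity, and the product-form bound on the curvature — is already carried out in the three preceding lemmas, so no new estimates are needed here.
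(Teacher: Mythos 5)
Your proof is correct and follows essentially the same route as the paper's: restrict to the event $\{\hetaIkc\in\TauN\}$ of probability at least $1-\DeltaN$, use independence of $\{\Si\}_{i\in\Ik}$ from $\SIkc$ to treat $\hetaIkc$ as a fixed element of $\TauN$, Taylor-expand the directional map in $r$, kill the first-order term with Lemma~\ref{lem:Neyman-orth}, and bound the second-order term by $\deltaN\NN^{-1/2}$ via Lemma~\ref{lem:product-property}. The only cosmetic difference is that you expand each $g_i$ separately and sum, whereas the paper expands the averaged function $\fk(r)$ directly and then invokes Lemma 6.1 of \citet{Chernozhukov2018} to pass from the conditional bound to the unconditional $o_{\PP}(1)$ statement; both yield the same bound.
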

\begin{proof}[Proof of Lemma~\ref{lem:Taylor}]
Let $k\in\indset{\KK}$. 
For $r\in [0, 1]$, let us define the function
\begin{displaymath}
	\fk(r) = \frac{1}{\nn}\sum_{i\in\Ik}\big( \E[\phi(\Si, \etazero + r(\hetaIkc-\etazero))|\SIkc] - \E[\phi(\Si,\etazero)]. 
\end{displaymath}
We have 
\begin{displaymath}
	\begin{array}{cl}
		&\E\Big[ \normoneBig{\frac{1}{\sqrt{\nn}}\sum_{i\in\Ik}\big( \E[\phi(\Si,\hetaIkc)|\SIkc] - \E[\phi(\Si,\etazero)] \big)}  \Big|\SIkc\Big]\\
		=& \normoneBig{\frac{1}{\sqrt{\nn}}\sum_{i\in\Ik}\big( \E[\phi(\Si,\hetaIkc)|\SIkc] - \E[\phi(\Si,\etazero)] \big)} \\
		=& \sqrt{\nn}\normone{\fk(1)}.
	\end{array}
\end{displaymath}
We apply a Taylor expansion to $\fk(1)$ at $0$ and obtain
\begin{displaymath}
	\fk(1) = \fk(0) + \fk'(0) + \frac{1}{2}\fk''(\tilde r)
\end{displaymath}
for some $\tilde r\in (0, 1)$. Thus, we have
\begin{displaymath}
	\E\Big[ \normoneBig{\frac{1}{\sqrt{\nn}}\sum_{i\in\Ik}\big( \E[\phi(\Si,\hetaIkc)|\SIkc] - \E[\phi(\Si,\etazero)] \big)}  \Big|\SIkc\Big]
	\le \sqrt{\nn}\Big( \normone{\fk(0)} + \normone{\fk'(0)} + \sup_{r\in(0, 1)}\frac{1}{2}\normone{\fk''(r)}\Big).
\end{displaymath}
Due to the definition of $\fk$, we have $\fk(0)=0$. Due to Neyman orthogonality that we established in Lemma~\ref{lem:Neyman-orth}, we have $\fk'(0)=0$. 
Due to the product property that we established in Lemma~\ref{lem:product-property}, we have $\sup_{r\in(0, 1)}\frac{1}{2}\normone{\fk''(r)}\lesssim\deltaN\NN^{-1/2}$ with $\PP$-probability at least $1-\DeltaN$ because $\hetaIkc$ belongs to $\TauN$ with $\PP$-probability at least $1-\DeltaN$. Consequently, we have 
\begin{displaymath}
	\E\Big[ \normoneBig{\frac{1}{\sqrt{\nn}}\sum_{i\in\Ik}\big( \E[\phi(\Si,\hetaIkc)|\SIkc] - \E[\phi(\Si,\etazero)] \big)}  \Big|\SIkc\Big]
	\lesssim \deltaN
\end{displaymath}
with $\PP$-probability at least $1-\DeltaN$. 
We infer the statement of the lemma due to~\citet[Lemma 6.1]{Chernozhukov2018}.
\end{proof}

\begin{proof}[Proof of Theorem~\ref{thm:Gaussian}]
We have
\begin{displaymath}
	\begin{array}{cl}
		&\sqrt{\NN}(\htheta-\thetazeroN) \\
		=& \sqrt{\NN}\cdot\frac{1}{\nn\KK}\sum_{\kk=1}^{\KK}\sum_{i\in\Ik}
		\psi(\Si, \thetazeroi, \hetaIkc) \\
		=& \frac{1}{\sqrt{\KK}}\sum_{\kk=1}^{\KK}\frac{1}{\sqrt{\nn}}\sum_{i\in\Ik}
		\big(\psi(\Si, \thetazeroi, \hetaIkc)-\psi(\Si, \thetazeroi, \etazero)\big)
		+ \frac{1}{\sqrt{\NN}}\sum_{i=1}^{\NN} \psi(\Si,\thetazeroi, \etazero)
	\end{array}
\end{displaymath}
because the disjoint sets $\Ik$ are of equal size $\nn$, so that we have $\NN=\nn\KK$. 
Let $\kk\in\indset{\KK}$. We have
\begin{displaymath}
	\begin{array}{cl}
		&\normoneBig{\frac{1}{\sqrt{\nn}}\sum_{i\in\Ik}\big( \psi(\Si,\thetazeroi, \hetaIkc) - \psi(\Si, \thetazeroi, \etazero) \big)}\\
		\le& \normoneBig{\frac{1}{\sqrt{\nn}}\sum_{i\in\Ik}\big( \psi(\Si,\thetazeroi, \hetaIkc) - \E[\psi(\Si,\thetazeroi, \hetaIkc)|\SIkc] \big)
		- 
		\frac{1}{\sqrt{\nn}}\sum_{i\in\Ik}\big( \psi(\Si,\thetazeroi, \etazero) - \E[\psi(\Si,\thetazeroi, \etazero)] \big)}\\
		&\quad + 
		\normoneBig{\frac{1}{\sqrt{\nn}}\sum_{i\in\Ik}\big(  \E[\psi(\Si,\thetazeroi, \hetaIkc)|\SIkc] - \E[\psi(\Si,\thetazeroi, \etazero)] \big)}\\
		=& o_{\PP}(1)
	\end{array}
\end{displaymath}
due to H{\"o}lder's inequality and Lemma~\ref{lem:vanishing-cov} and~\ref{lem:Taylor}. 
Because $\KK$ is a constant independent of $\NN$, we have
\begin{displaymath}
	 \frac{1}{\sqrt{\KK}}\sum_{\kk=1}^{\KK}\frac{1}{\sqrt{\nn}}\sum_{i\in\Ik}
		\big(\psi(\Si, \thetazeroi, \hetaIkc)-\psi(\Si, \thetazeroi, \etazero)\big)
		= o_{\PP}(1). 
\end{displaymath}
Due to Lemma~\ref{lem:Stein}, we have $\frac{1}{\sqrt{\NN}\cdot\sigmaN}\sum_{i=1}^{\NN} \psi(\Si,\thetazeroi, \etazero)\stackrel{d}{\rightarrow}\mathcal{N}(0, 1)$ as $\NN\to\infty$. 
Due to Assumption~\ref{assumpt:variance}, we therefore have 
\begin{displaymath}
		\frac{1}{\sqrt{\NN}\sigmainfty^{-1}}\sum_{i=1}^{\NN} \psi(\Si,\thetazeroi, \etazero)\\
		= \frac{1}{\sqrt{\NN}\cdot\sigmaN}\sum_{i=1}^{\NN} \psi(\Si,\thetazeroi, \etazero)\cdot \sigmaN\sigmainfty^{-1}
		\stackrel{d}{\rightarrow} \mathcal{N}(0, 1)
\end{displaymath}
as $\NN\to\infty$. Consequently, we have $\sqrt{\NN}\sigmainfty^{-1}(\htheta - \thetazeroN)\stackrel{d}{\rightarrow}\mathcal{N}(0, 1)$ as claimed. 
\end{proof}

%%%%%%%%%%%%%%%%%%%%%%%%%%%%%
\section{Bootstrap Variance Estimator}\label{sect:proofs-thm-var-est-boot}
%%%%%%%%%%%%%%%%%%%%%%%%%%%. 
We use the following assumption to establish the consistency of the bootstrap variance estimator. It is a high level assumption and
we will not verify it in terms of the model~\eqref{eq:SEM}; yet, assuming some form of continuity (as below) seems to be essentially necessary for the bootstrap to be consistent.

\begin{assumptions}\label{assumpt:var-boot}
To make the dependence of $\sigmainfty ^2$ in~\eqref{eq:product-property} on the law of the response error terms $\epsY$, the law of the covariates $C$, the nuisance functions $\etazero$, and the network $G$, we introduce the functional 
    \begin{displaymath}
        \T = 
        \lim_{\NN\rightarrow\infty}\Var\bigg(\frac{1}{\sqrt{\NN}}\sum_{i=1}^{\NN}\psi(\Si, \thetazeroi, \etazero)\bigg), 
    \end{displaymath}
    which can be represented as
    \begin{displaymath}
        \T = 
        \lim_{\NN\rightarrow\infty}\Var\bigg(\frac{1}{\sqrt{\NN}}\sum_{i=1}^{\NN}\phi(\Si, \etazero)\bigg)
    \end{displaymath}
    due to $\psi(\Si, \thetazeroi, \etazero) = \phi(\Si, \etazero)-\thetazeroi$ and because the $\thetazeroi$'s are non-random. 

 We assume that $\T$ is continuous with respect to Mallows' distance $d_2(\cdot,\cdot)$ in the first and second argument and with respect to $\normP{\cdot}{2}$ in the third argument. 
\end{assumptions}

\begin{proof}[Proof of Theorem~\ref{thm:var-est-boot}]
    The bootstrap variance relies on the same dependency structure induced by the network as $\sigmainfty^2$ and can be represented by 
     \begin{displaymath}
        \Tboot = 
        \lim_{\NN\rightarrow\infty}\Var^*\bigg(\frac{1}{\sqrt{\NN}}\sum_{i=1}^{\NN}\psi(\Si^*, \hthetazeroi, \heta)\bigg), 
    \end{displaymath}
    where the construction of $\Si^*$ is described in Section~\ref{sect:boot-var}. 
    Similarly to above, we can rewrite this bootstrap variance as 
    \begin{displaymath}
         \Tboot = 
        \lim_{\NN\rightarrow\infty}\Var^*\bigg(\frac{1}{\sqrt{\NN}}\sum_{i=1}^{\NN}\phi(\Si^*, \heta)\bigg). 
    \end{displaymath}
    Due to Assumption~\ref{assumpt:regularity1} and~\citep{BickelFreedman1981}, 
    we have $d_2(\hPC, \PC)\stackrel{P}{\rightarrow}0$, where $d_2(\cdot,\cdot)$ denotes Mallows' distance. Furthermore, due to $\normP{\hat{\varepsilon}_Y - \varepsilon_Y}{2}\stackrel{P}{\rightarrow} 0$, we also have $d_2(\hPheps,\Peps)\stackrel{P}{\rightarrow} 0$; see~\citep{BickelFreedman1981} and~\citep[Lemma 5.4]{Buehlmann1997}. 
    Due to $\normP{\hetaIkc-\etazero}{2}\stackrel{P}{\rightarrow} 0$ for $k\in\indset{\KK}$, we obtain 
    \begin{displaymath}
        \lim_{\NN\to\infty}
        \normonebigg{\Var^*\bigg(\frac{1}{\sqrt{\NN}}\sum_{i=1}^{\NN}\phi(\Si^*, \heta)\bigg) - 
        \Var\bigg(\frac{1}{\sqrt{\NN}}\sum_{i=1}^{\NN}\psi(\Si, \etazero)\bigg)}
        \stackrel{P}{\rightarrow} 0,
    \end{displaymath}
    which consequently establishes consistency of the bootstrap variance. 
\end{proof}

%%%%%%%%%%%%%%%%%%%%%%
\section{Consistent Plugin Variance Estimator}\label{sect:var_est}
%%%%%%%%%%%%%%%%%%%%%%

An alternative to the bootstrap variance estimator can be constructed as described below. We do not recommend this estimator unless the sample size is large relative to the network connectivity, but its consistency can be derived under different and more explicit conditions than in~\eqref{assumpt:var-boot}.

The challenge is that the unit-level effects $\thetazeroi$ for $i\in\indset{\NN}$ 
are not all equal. 
This is because the unit-level data points $\Si$ are typically not identically distributed. The difference in distributions originates from the $X$- and $Z$- features that generally depend on a varying number of other units. If two unit-level data points $\Si$ and $\Sj$ have the same distribution, then their unit-level treatment effects $\thetazeroi$ and $\thetazeroj$ coincide. 
If enough of these unit-level treatment effects coincide, we can use the corresponding unit-level data to estimate them.
Subsequently, we describe this procedure. 

We partition $\indset{\NN}$ into sets $\Acald$ for $d\ge 0$ such that all unit-level data points $\Si$ for $i\in\Acald$ have the same distribution.
Provided that the sets $\Acald$ are large enough, we can consistently estimate the corresponding $\thetazerod$ for $d\ge 0$ by 
\begin{equation}\label{eq:theta-d-est}
	\hthetad = \frac{1}{\normone{\Acald}}\sum_{i\in\Acald} \phi(\Si, \hetaIkcdegr{i}), 
\end{equation}
where $\kk(i)$ denotes the index in $\indset{\KK}$ such that $i\in\Ikdegr{i}$. The convergence rate of these estimators is at least $\NN^{-1/4}$; see Lemma~\ref{lem:hthetad} in Section~\ref{sect:proofs-thm-var-est} in the appendix. 
To achieve this rate, we require that the sets $\Acald$ contain at least of order $\NN^{3/4}$ many indices; see Assumption~\ref{assumpt:Acald} in Section~\ref{sect:AssumptionsDefinitions} in the appendix. 
The parametric convergence rate cannot be achieved in general because $\Acald$ is of smaller size than $\NN$, but the corresponding units may have the maximal $\dmax$ many ties in the network. 

Subsequently, we characterize a situation in which the index $d$ corresponds to the degree in the dependency graph $\tildeG$. This is the case if two unit-level data points $S_i$ and $S_j$ have the same distribution if and only if the units $i$ and $j$ have the same degree in $\tildeG$. We assume, given a unit $i$ and some $m \in \indset{N}\setminus \{i\}$, that $1)$ if $\Cm$ is part of $\Zi$, then $\Cm$ is also part of $\Xi$ and vice versa; and $2)$ if $\Wm$ is part of $\Xi$, then $\Cm$ is part of $\Xi$ and $\Zi$ and vice versa. 
Consequently, if two units $i\neq j$ have the same degree in the dependency graph, then their $X$- and their $Z$-features are computed using the same number of random variables. Hence, $X_i$ and $X_j$ as well as $Z_i$ and $Z_j$ are identically distributed, and therefore $S_i$ and $S_j$ have the same distribution.
Thus, the sets $\Acald$ form partition of the units according to their degree in the dependency graph, that is, 
	$\Acald = \{i\in\indset{\NN} \colon \degr{i} = d\}$ 
for $d\ge 0$, where $\degr{i}$ denotes the degree of $i$ in the dependency graph. There are $\dmax + 1 = o(\NN^{1/4})$ 
many such sets, and each of them is required to be of size at least of order $\NN^{3/4}$ in Lemma~\ref{lem:hthetad}. This is feasible because there are $\NN$ units in total. 
Provided that the machine learning estimators of the nuisance functions converge at a rate faster than $\NN^{1/4}$ as specified by 
Assumption~\ref{assumpt:kappa} in the appendix, we have the following consistent estimator of the asymptotic variance given in Theorem~\ref{thm:var-est}. 
Algorithm \ref{algo:Summary} summarizes the whole  procedure of point estimation and inference for the EATE where the variance is estimated as given in Theorem~\ref{thm:var-est}. Nevertheless, this estimation scheme can be extended to general sets $\Acald$. 

\begin{theorem}\label{thm:var-est}
	Denote by $\tildeG=(V, \tildeE)$ the dependency graph on $\Si$, $i\in\indset{\NN}$. For a unit $i\in\indset{\NN}$, denote by $\degr{i}$ its degree in $\tildeG$ and by $\kk(i)$ the number in $\indset{\KK}$ such that $\Si\in\Ikdegr{i}$. In addition to the assumptions made in Theorem~\ref{thm:Gaussian}, also assume that Assumption~\ref{assumpt:Acald} and~\ref{assumpt:kappa} stated in Section~\ref{sect:AssumptionsDefinitions} in the appendix hold. 
		Based on $\phi$ defined in \eqref{eq:score}, we define the score function $\psi(\Si, \theta, \eta) = \phi(\Si,\eta) - \theta$
for some general $\theta\in\R$ and the nuisance function triple $\eta = (\gone, \gzero, \h)$. 
	Then, 
	\begin{displaymath}%\label{eq:varEst}
		\frac{1}{\NN}\sum_{i=1}^{\NN}\psi^2(\Si, \hthetadegr{i}, \hetaIkcdegr{i})
		+ \frac{2}{\NN}\sum_{\{i, j\}\in \tildeE} \psi(\Si, \hthetadegr{i}, \hetaIkcdegr{i})\psi(\Sj, \hthetadegr{j}, \hetaIkcdegr{j})
	\end{displaymath}
is a consistent estimator of the asymptotic variance $\sigmainfty^2$ in Theorem~\ref{thm:Gaussian}. 
\end{theorem}

%%%%%%%%%%%%%%%%%%%%%%%%%%%%%
\subsection{Proof of Theorem~\ref{thm:var-est}}\label{sect:proofs-thm-var-est}
%%%%%%%%%%%%%%%%%%%%%%%%%%%%%

\begin{lemma}\label{lem:boundLpfournorm}
Assume the assumptions of Theorem~\ref{thm:var-est} hold.
Let $i\in\indset{\NN}$. 
There exists a finite real constant $\CLpfournorm$ independent of $i$ such that $\normP{\psi(\Si, \thetazerodegr{i},\etazero)}{4} \le \CLpfournorm$ holds. 
Consequently, for $i, j, m, r\in\indset{\NN}$, we can also bound the following terms by finite uniform constants:
\begin{itemize}
	\item $\normP{\psi(\Si, \thetazerodegr{i}, \etazero)}{2}$
	\item $\Var\big(\phi(\Si,\etazero)\big)$
	\item $\Var\big(\psi^2(\Si, \thetazerodegr{i}, \etazero)\big)$
	\item $\Cov\big(\phi(\Si,\etazero), \phi(\Sj, \etazero)\big)$
	\item $\Var\big( \psi(\Si, \thetazerodegr{i},\etazero)\psi(\Sj, \thetazerodegr{j},\etazero) \big)$
	\item $\Cov\big(\psi(\Si, \thetazerodegr{i},\etazero)\psi(\Sj, \thetazerodegr{j},\etazero) , \psi(\Sm, \thetazerodegr{m},\etazero)\psi(\Sr, \thetazerodegr{r},\etazero) \big)$
\end{itemize}
Moreover, we have $\phi^2(\Si,\etazero) = O_{\PP}(1)$. Furthermore, we have $\psi^2(\Si, \thetazerodegr{i}, \hetaIkcdegr{i}) = O_{\PP}(1)$. 
\end{lemma}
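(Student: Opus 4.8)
The plan is to reduce everything to a single uniform $L^4(\PP)$ bound on $\phi(\Si,\etazero)$, obtained by bounding each of its four additive pieces, and then to derive all remaining claims by routine $L^p$-monotonicity, Cauchy--Schwarz, and Markov arguments. Writing $\psi(\Si,\thetazerodegr{i},\etazero)=\phi(\Si,\etazero)-\thetazerodegr{i}$ and expanding $\phi(\Si,\etazero)=\gonezero(\Di)-\gzerozero(\Di)+\tfrac{\Wi}{\hzero(\Ui)}(\Yi-\gonezero(\Di))-\tfrac{1-\Wi}{1-\hzero(\Ui)}(\Yi-\gzerozero(\Di))$, I would note that $\Wi\in\{0,1\}$ (Assumption~\ref{assumpt:regularity1}), $\Chzero\le\hzero(\Ui)\le 1-\Chzero$ a.s.\ (Assumption~\ref{assumpt:regularity4}), and $\normP{\Yi}{\infty},\normP{\gonezero(\Di)}{\infty},\normP{\gzerozero(\Di)}{\infty}\le\Cnorminfty$ (Assumption~\ref{assumpt:regularity3}); hence each of the four summands has $L^\infty(\PP)$-norm bounded by a universal constant in $\Cnorminfty,\Chzero$ (e.g.\ $\normP{\Wi(\Yi-\gonezero(\Di))/\hzero(\Ui)}{\infty}\le 2\Cnorminfty/\Chzero$), so $\normP{\phi(\Si,\etazero)}{4}\le\normP{\phi(\Si,\etazero)}{\infty}$ is bounded uniformly in $i$ and $\PP$. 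Adding $\normone{\thetazerodegr{i}}\le\Ctheta$ (Assumption~\ref{assumpt:regularity5}, since $\thetazerodegr{i}$ is an average of unit-level effects each bounded by $\Ctheta$) produces the finite constant $\CLpfournorm$ with $\normP{\psi(\Si,\thetazerodegr{i},\etazero)}{4}\le\CLpfournorm$.

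The six ``consequently'' items then follow mechanically and would be handled in turn: monotonicity of $L^p(\PP)$-norms gives $\normP{\psi(\Si,\thetazerodegr{i},\etazero)}{2}\le\CLpfournorm$; the variances $\Var(\phi(\Si,\etazero))$ and $\Var(\psi^2(\Si,\thetazerodegr{i},\etazero))$ are controlled by the second and fourth moments of $\psi(\Si,\thetazerodegr{i},\etazero)$ (up to the constant shift $\thetazerodegr{i}$), hence by $\CLpfournorm^2$ and $\CLpfournorm^4$; and each covariance, as well as each $\Var(\psi(\Si,\cdot)\psi(\Sj,\cdot))$, is bounded by Cauchy--Schwarz in terms of these already-bounded moments. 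The statement $\phi^2(\Si,\etazero)=O_\PP(1)$ follows from Markov's inequality applied to the uniform bound on $\E[\phi^2(\Si,\etazero)]$ just obtained.

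The one step requiring genuine care, and the one I expect to be the main obstacle, is $\psi^2(\Si,\thetazerodegr{i},\hetaIkcdegr{i})=O_\PP(1)$, because here the nuisance is the data-dependent estimate rather than $\etazero$, so Assumption~\ref{assumpt:DML} must be used rather than just Assumption~\ref{assumpt:regularity}. I would work on the event $\{\hetaIkcdegr{i}\in\TauN\}$, which by Assumption~\ref{assumpt:DML2} has $\PP$-probability at least $1-\DeltaN=1-o(1)$; on that event the components of $\hetaIkcdegr{i}$ have uniformly bounded $\normP{\cdot}{\infty}$-norm (Assumption~\ref{assumpt:DML1}) and its propensity component is bounded away from $0$ and $1$ (Assumption~\ref{assumpt:DML4}), which together with $\normP{\Yi}{\infty}\le\Cnorminfty$ and $\Wi\in\{0,1\}$ yields an a.s.\ bound $\normone{\phi(\Si,\hetaIkcdegr{i})}\le B'$ for a universal $B'$, so $\psi^2(\Si,\thetazerodegr{i},\hetaIkcdegr{i})\le(B'+\Ctheta)^2$ there; since the complementary event has vanishing probability, boundedness in probability follows. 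An alternative route would be to invoke the conditional $L^2$-bound~\eqref{eq:bound-var-Ai} established inside the proof of Lemma~\ref{lem:vanishing-cov}, combined with the already-proven $\phi^2(\Si,\etazero)=O_\PP(1)$ and a conditional Markov inequality.
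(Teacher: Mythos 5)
Your proposal is correct and follows essentially the same route as the paper: a uniform $L^4$ bound on $\psi(\Si,\thetazerodegr{i},\etazero)$ from Assumptions~\ref{assumpt:regularity1}--\ref{assumpt:regularity5}, the remaining moment bounds via H\"older/Cauchy--Schwarz, and the estimated-nuisance claim handled on the event $\{\hetaIkcdegr{i}\in\TauN\}$ of probability at least $1-\DeltaN$ using Assumptions~\ref{assumpt:DML1} and~\ref{assumpt:DML4}. The only (immaterial) differences are that you use almost-sure $L^\infty$ bounds throughout where the paper mixes $L^4$ and $L^\infty$ norms via H\"older, and for the last claim you obtain an a.s.\ bound on the good event whereas the paper bounds the conditional second moment $\sup_{\eta\in\TauN}\E[\psi^2(\Si,\thetazerodegr{i},\eta)]$.
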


\begin{proof}[Proof of Lemma~\ref{lem:boundLpfournorm}]
We have
\begin{equation}\label{eq:helper1}
	\begin{array}{cl}
		& \normP{\psi(\Si, \thetazerodegr{i},\etazero)}{4} \\
		\le & 
		\normP{\gonezero(\Di)}{4}
		+ \normP{\gzerozero(\Di)}{4}
		+ \normPBig{\frac{\Wi}{\hzero(\Ui)}}{4}\normP{\Yi - \gonezero(\Di)}{\infty}\\
		&\quad
		+ \normPBig{\frac{1-\Wi}{1-\hzero(\Ui)}}{4}\normP{\Yi - \gzerozero(\Di)}{\infty}
		+ \normone{\thetazerodegr{i}}. 
	\end{array}
\end{equation}
All individual summands in the above decomposition are bounded by a finite real constant independent of $i$ due to Assumption~\ref{assumpt:regularity}. Therefore, there exists a finite real constant $\CLpfournorm$ independent of $i$ such that $\normP{\psi(\Si, \thetazeroi,\etazero)}{4} \le \CLpfournorm$ holds. 

The other terms in the statement of the present lemma are bounded as well by finite real constants independent of $i, j, m, r\in\indset{\NN}$ due to H{\"o}lder's inequality. 

Moreover, we have $\psi^2(\Si,\etazero) = O_{\PP}(1)$ because $\normP{\psi^2(\Si,\etazero)}{2}$ is bounded by a constant that is independent of $i$. 

Furthermore, with $\PP$-probability at least $1-\DeltaN$, we have
\begin{displaymath}
		\E\big[ \psi^2(\Si, \thetazerodegr{i}, \hetaIkcdegr{i}) \big|\SIkcdegr{i}\big] 
		\le  \sup_{\eta\in\TauN} 
		\E\big[ \psi^2(\Si, \thetazerodegr{i}, \eta)\big] 
		=   \sup_{\eta\in\TauN} \normP{ \psi(\Si, \thetazerodegr{i}, \eta)}{2}^2. 
\end{displaymath}
The term $\normP{ \psi(\Si, \thetazerodegr{i}, \eta)}{2}^2$ is bounded by a real constant that is independent of $i$ and $\eta$ because the derivation in~\eqref{eq:helper1} also holds with $\etazero$ replaced by $\eta\in\TauN$ due to Assumption~\ref{assumpt:DML}. 
\end{proof}

\begin{lemma}[Convergence rate of unit-level  effect estimators]\label{lem:hthetad}
Assume the assumptions of Theorem~\ref{thm:var-est} hold.
	Let $d\ge 0$, 
	and assume that all assumptions of Section~\ref{sect:AssumptionsDefinitions} in the appendix hold. 
	Then, we have $\hthetad - \thetazerod = o_{\PP}(\NN^{-1/4})$, where $\hthetad$ is as in~\eqref{eq:theta-d-est}.
\end{lemma}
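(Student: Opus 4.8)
The plan is to transfer the proof scheme of Theorem~\ref{thm:Gaussian} from the whole sample to the subpopulation $\Acald$. Writing $A_i=\phi(\Si,\hetaIkcdegr{i})-\phi(\Si,\etazero)$, recalling that the $\Si$ with $i\in\Acald$ all have the same law so that $\E[\phi(\Si,\etazero)]=\thetazeroi=\thetazerod$ by Lemma~\ref{lem:identifiability}, and using that $\Si$ is independent of $\SIkcdegr{i}$, I would decompose
\begin{displaymath}
\begin{array}{rcl}
\hthetad-\thetazerod &=& \mathrm{(I)}+\mathrm{(II)}+\mathrm{(III)},\qquad\text{where}\\[3pt]
\mathrm{(I)} &=& \displaystyle\frac{1}{\normone{\Acald}}\sum_{i\in\Acald}\big(A_i-\E[A_i\mid\SIkcdegr{i}]\big),\\[8pt]
\mathrm{(II)} &=& \displaystyle\frac{1}{\normone{\Acald}}\sum_{i\in\Acald}\big(\E[\phi(\Si,\hetaIkcdegr{i})\mid\SIkcdegr{i}]-\thetazerod\big),\\[8pt]
\mathrm{(III)} &=& \displaystyle\frac{1}{\normone{\Acald}}\sum_{i\in\Acald}\big(\phi(\Si,\etazero)-\thetazerod\big).
\end{array}
\end{displaymath}
The goal is to show each of $\mathrm{(I)},\mathrm{(II)},\mathrm{(III)}$ is $o_{\PP}(\NN^{-1/4})$, exploiting $\normone{\Acald}=\Omega(\NN^{3/4})$ (Assumption~\ref{assumpt:Acald}) together with the sparsity of the dependency graph (Assumptions~\ref{assumpt:degree} and~\ref{assumpt:DML3}).

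For the oracle term $\mathrm{(III)}$ I would bound its variance directly: by the definition of the dependency graph $\tildeG$, $\Cov(\phi(\Si,\etazero),\phi(\Sj,\etazero))=0$ whenever $\{i,j\}\notin\tildeE$, and each surviving variance and covariance is bounded by a finite constant uniform in the units by Lemma~\ref{lem:boundLpfournorm}. Since $\Acald$ contains at most $\normone{\Acald}\dmax$ ordered pairs joined by an edge of $\tildeG$, this gives $\Var(\mathrm{(III)})\lesssim(1+\dmax)/\normone{\Acald}$; as $1+\dmax=o(\NN^{1/4})$ (Assumption~\ref{assumpt:degree}) and $\normone{\Acald}=\Omega(\NN^{3/4})$ (Assumption~\ref{assumpt:Acald}), we get $\Var(\mathrm{(III)})=o(\NN^{-1/2})$, and Chebyshev's inequality yields $\mathrm{(III)}=o_{\PP}(\NN^{-1/4})$.

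For the bias term $\mathrm{(II)}$ I would repeat the Taylor argument of Lemma~\ref{lem:Taylor} pointwise in $i$. A union bound over the $\KK$ folds (a fixed number) shows that $\hetaIkcdegr{i}\in\TauN$ for all $i$ simultaneously with $\PP$-probability at least $1-\KK\DeltaN=1-o(1)$. On this event, conditioning on $\SIkcdegr{i}$ and using that $\Si$ is independent of $\SIkcdegr{i}$ reduces $\E[\phi(\Si,\hetaIkcdegr{i})\mid\SIkcdegr{i}]-\thetazerod$ to $\E[\phi(\Si,\eta)]-\E[\phi(\Si,\etazero)]$ with $\eta=\hetaIkcdegr{i}$ treated as a fixed element of $\TauN$ (here using Lemma~\ref{lem:identifiability}). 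A second-order Taylor expansion at $r=0$ of $r\mapsto\E[\phi(\Si,\etazero+r(\eta-\etazero))]$ has vanishing value and first derivative by Neyman orthogonality (Lemma~\ref{lem:Neyman-orth}) and second derivative $\lesssim\deltaN\NN^{-1/2}$ by the product property (Lemma~\ref{lem:product-property}); hence every summand of $\mathrm{(II)}$ is $\lesssim\deltaN\NN^{-1/2}$, so $\mathrm{(II)}\lesssim\deltaN\NN^{-1/2}=o_{\PP}(\NN^{-1/4})$.

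For the stochastic remainder $\mathrm{(I)}$ I would split $\Acald$ by fold, $\Acald=\bigcup_{k=1}^{\KK}(\Acald\cap\Ik)$, so that inside the $k$-th block every nuisance estimate equals $\hetaIkc$ and $\{\Si\}_{i\in\Ik}$ is independent of $\SIkc$. Conditionally on $\SIkc$ the summands $A_i-\E[A_i\mid\SIkc]$, $i\in\Acald\cap\Ik$, are centered and, exactly as in the proof of Lemma~\ref{lem:vanishing-cov}, conditionally uncorrelated whenever $\{i,j\}\notin\tildeE$; moreover $\E[A_i^2\mid\SIkc]\lesssim\deltaN\NN^{-2\kappa}$ with $\PP$-probability at least $1-\DeltaN$ by~\eqref{eq:bound-var-Ai}, and the surviving off-diagonal conditional covariances obey the same bound by Cauchy--Schwarz. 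Since at most $\normone{\Acald}\dmax$ pairs contribute,
\begin{displaymath}
\E\bigg[\bigg(\frac{1}{\normone{\Acald}}\sum_{i\in\Acald\cap\Ik}\big(A_i-\E[A_i\mid\SIkc]\big)\bigg)^2\,\bigg|\,\SIkc\bigg]\lesssim\frac{(1+\dmax)\,\deltaN\,\NN^{-2\kappa}}{\normone{\Acald}}\lesssim\frac{\deltaN}{\normone{\Acald}}\lesssim\deltaN\NN^{-3/4}
\end{displaymath}
with $\PP$-probability at least $1-\DeltaN$, where the middle step uses $\dmax\NN^{-2\kappa}=O(1)$ from Assumption~\ref{assumpt:DML3} (Assumption~\ref{assumpt:kappa} would even give $\NN^{-2\kappa}\le\NN^{-1/2}$) and the last step uses Assumption~\ref{assumpt:Acald}. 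Hence $\NN^{1/4}$ times each block has conditionally vanishing second moment and is therefore $o_{\PP}(1)$ by~\citet[Lemma 6.1]{Chernozhukov2018}; summing the $\KK$ blocks gives $\mathrm{(I)}=o_{\PP}(\NN^{-1/4})$, and combining the three estimates proves the lemma. The main obstacle is $\mathrm{(I)}$: cross-fitting scatters $\Acald$ across all $\KK$ folds, so the conditional second-moment computation of Lemma~\ref{lem:vanishing-cov} must be carried out block by block, and one must verify that $\normone{\Acald}=\Omega(\NN^{3/4})$ is exactly strong enough to absorb both the degree factor $\dmax$ from network dependence and the shrinkage of $\Acald$ below the full sample size --- indeed it is the variance of $\mathrm{(III)}$, of order $(1+\dmax)/\normone{\Acald}$, that is binding and forces $\normone{\Acald}=\omega(\dmax\NN^{1/2})$.
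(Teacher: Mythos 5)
Your proof is correct, but it treats the plug-in error by a genuinely different (and sharper) route than the paper. The oracle fluctuation --- your term (III) --- is handled identically in both arguments: Chebyshev with the covariance restricted to edges of $\tildeG$, giving a variance of order $(1+\dmax)/\normone{\Acald}=o(\NN^{-1/2})$. The difference is in the remaining term $\frac{1}{\normone{\Acald}}\sum_{i\in\Acald}\bigl(\phi(\Si,\hetaIkcdegr{i})-\phi(\Si,\etazero)\bigr)$. The paper does not split it further: it bounds each summand in absolute value by $O_{\PP}(\sqrt{\deltaN}\NN^{-\kappa})$ via~\eqref{eq:bound-var-Ai} and \citet[Lemma 6.1]{Chernozhukov2018}, so the whole term is $O_{\PP}(\sqrt{\deltaN}\NN^{1/4-\kappa})$ after multiplying by $\NN^{1/4}$, and then invokes Assumption~\ref{assumpt:kappa} ($\kappa\ge 1/4$) to make this vanish. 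You instead re-run the machinery of Theorem~\ref{thm:Gaussian} on the subsample: you center each summand at its conditional mean given $\SIkcdegr{i}$ (your term (I), treated fold by fold with the conditional second-moment computation of Lemma~\ref{lem:vanishing-cov}) and control the conditional bias (your term (II)) by the Neyman-orthogonality/Taylor argument of Lemma~\ref{lem:Taylor}. This buys something real: your bounds --- of order $\sqrt{\deltaN/\normone{\Acald}}$ for (I) and $\deltaN\NN^{-1/2}$ for (II) --- use only $\dmax\NN^{-2\kappa}=O(1)$ from Assumption~\ref{assumpt:DML3}, not $\kappa\ge 1/4$, so your version of the lemma holds under a weaker rate condition on the machine learners; the paper accepts the extra assumption in exchange for a one-line triangle-inequality bound. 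Your closing remark that the binding constraint is the oracle variance $(1+\dmax)/\normone{\Acald}$, which is precisely what forces $\normone{\Acald}=\Omega(\NN^{3/4})$ and precludes the parametric rate for $\hthetad$, matches the paper's own discussion.
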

\begin{proof}[Proof of Lemma~\ref{lem:hthetad}]
Let $d\ge 0$. 
Due to the definition of $\hthetad$ given in~\eqref{eq:theta-d-est} and Lemma~\ref{lem:identifiability}, we have
\begin{equation}\label{eq:toBound1}
	\begin{array}{cl}
		& \NN^{\frac{1}{4}}(\hthetad - \thetazerod)\\
		=& \frac{\NN^{\frac{1}{4}}}{\normone{\Acald}}\sum_{i\in\Acald} \big(\phi(\Si, \hetaIkcdegr{i}) -\E[\phi(\Si, \etazero)] \big)\\
		=& \frac{\NN^{\frac{1}{4}}}{\normone{\Acald}}\sum_{i\in\Acald} \big(\phi(\Si, \hetaIkcdegr{i}) -\phi(\Si, \etazero) \big)
		+ \frac{\NN^{\frac{1}{4}}}{\normone{\Acald}}\sum_{i\in\Acald} \big(\phi(\Si, \etazero) -\E[\phi(\Si, \etazero)] \big). 
	\end{array}
\end{equation}
Subsequently, we show that the two sets of summands in~\eqref{eq:toBound1} are of order $o_{\PP}(1)$. We start with the first set of summands. Let $i\in\Acald$. With $\PP$-probability at least $1-\DeltaN$, we have
\begin{displaymath}
	\sqrt{\E\big[\big(\phi(\Si,\hetaIkcdegr{i})-\phi(\Si,\etazero)\big)^2 \big|\SIkc\big]}
	\lesssim \sqrt{\deltaN}\NN^{-\kappa}
\end{displaymath}
due to Equation~\eqref{eq:bound-var-Ai}. Hence, we have $\normone{\phi(\Si,\hetaIkcdegr{i})-\phi(\Si,\etazero)} = O_{\PP}(\sqrt{\deltaN}\NN^{-\kappa})$
due to~\citet[Lemma 6.1]{Chernozhukov2018}. 
Consequently, we have 
\begin{displaymath}
	\frac{\NN^{\frac{1}{4}}}{\normone{\Acald}}\sum_{i\in\Acald} \normone{\phi(\Si, \hetaIkcdegr{i}) -\phi(\Si, \etazero)} 
	=  O_{\PP}(\sqrt{\deltaN}\NN^{\frac{1}{4}-\kappa}) 
	= o_{\PP}(1)
\end{displaymath}
because we have $\kappa\ge 1/4$ by Assumption~\ref{assumpt:kappa}. 
Next, we show that the second set of summands in~\eqref{eq:toBound1} is of order $o_{\PP}(1)$. 
Let $\eps>0$. We have
\begin{displaymath}
	\begin{array}{cl}
		& \PP \Big( \normoneBig{\frac{\NN^{\frac{1}{4}}}{\normone{\Acald}}\sum_{i\in\Acald} \big(\phi(\Si, \etazero) -\E[\phi(\Si, \etazero)] \big)}^2 > \eps^2 \Big)\\
		\le & \frac{\NN^{\frac{1}{2}}}{\eps^2\normone{\Acald}^2}
		\Big(\sum_{i\in\Acald}\Var(\phi(\Si, \etazero))
		+ \sum_{i, j\in\Acald, i\neq j} \Cov\big(\phi(\Si,\etazero), \phi(\Si, \etazero)\big)
		\Big)\\
		=& \frac{\NN^{\frac{1}{2}}}{\eps^2\normone{\Acald}^2}(\normone{\Acald} + 2\normone{\tildeE\cap\Acald^2})O(1)
	\end{array}
\end{displaymath}
because $\Var(\phi(\Si, \etazero))$ and $\Cov(\phi(\Si,\etazero), \phi(\Si, \etazero))
$ are bounded by constants uniformly over $i$ due to Lemma~\ref{lem:boundLpfournorm}, and because $\Cov(\phi(\Si,\etazero), \phi(\Si, \etazero))
$ does not equal $0$ only if $\{i,j\}\in \tildeE\cap\Acald^2$, where $\tildeE$ denotes the edge set of the dependency graph. There are $\normone{\Acald}$ many nodes in $\Acald$, and each node has a maximal degree of $\dmax$. 
Thus, we have $\normone{\tildeE\cap\Acald^2}\le 1/2\normone{\Acald}\dmax$. Due to $\dmax = o(\NN^{1/4})$ and $\normone{\Acald} = \Omega(\NN^{3/4})$, which hold according to Assumption~\ref{assumpt:degree} and~\ref{assumpt:Acald}, we obtain
\begin{displaymath}
	\frac{\NN^{\frac{1}{2}}}{\eps^2\normone{\Acald}^2}(\normone{\Acald} + 2\normone{E\cap\Acald^2})O(1)
	= o(1).
\end{displaymath}
Consequently, we also have 
\begin{displaymath}
\normonebigg{\frac{\NN^{\frac{1}{4}}}{\normone{\Acald}}\sum_{i\in\Acald} \big(\phi(\Si, \etazero) -\E[\phi(\Si, \etazero)] \big)} 
= o_{\PP}(1). 
\end{displaymath}
\end{proof}

\begin{lemma}[Consistent variance estimator part I]\label{lem:consistent-variance-part-I}
Assume the assumptions of Theorem~\ref{thm:var-est} hold.
	We have
	\begin{displaymath}
		\normonebigg{
		\frac{1}{\NN}\sum_{i=1}^{\NN} \big(\psi^2(\Si, \hthetadegr{i}, \hetaIkcdegr{i}) -\E[\psi^2(\Si,\thetazerodegr{i},\etazero)] \big)
		}
		= o_{\PP}(1).
	\end{displaymath}
\end{lemma}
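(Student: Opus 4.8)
The plan is to split the target into an ``estimation error'' piece and a ``law of large numbers'' piece. By the triangle inequality it suffices to show
\begin{displaymath}
	\mathrm{(I)}:=\normonebigg{\frac{1}{\NN}\sum_{i=1}^{\NN}\big(\psi^2(\Si,\hthetadegr{i},\hetaIkcdegr{i})-\psi^2(\Si,\thetazerodegr{i},\etazero)\big)}=o_{\PP}(1)
\end{displaymath}
and
\begin{displaymath}
	\mathrm{(II)}:=\normonebigg{\frac{1}{\NN}\sum_{i=1}^{\NN}\big(\psi^2(\Si,\thetazerodegr{i},\etazero)-\E[\psi^2(\Si,\thetazerodegr{i},\etazero)]\big)}=o_{\PP}(1).
\end{displaymath}
For $\mathrm{(II)}$ I would use Chebyshev's inequality together with the sparse dependency graph: $\E[\mathrm{(II)}^2]=\NN^{-2}\sum_{i,j}\Cov\big(\psi^2(\Si,\thetazerodegr{i},\etazero),\psi^2(\Sj,\thetazerodegr{j},\etazero)\big)$, and this covariance vanishes unless $i=j$ or $\{i,j\}\in\tildeE$. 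Lemma~\ref{lem:boundLpfournorm} bounds $\normP{\psi(\Si,\thetazerodegr{i},\etazero)}{4}$ uniformly in $i$, so by Cauchy--Schwarz every surviving covariance is $O(1)$; since $\normone{\tildeE}\le\tfrac12\NN\dmax$, this gives $\E[\mathrm{(II)}^2]\lesssim(1+\dmax)/\NN=o(1)$ by Assumption~\ref{assumpt:degree}, and Markov's inequality finishes $\mathrm{(II)}$.

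For $\mathrm{(I)}$ I would factor $\psi^2(\Si,\hthetadegr{i},\hetaIkcdegr{i})-\psi^2(\Si,\thetazerodegr{i},\etazero)=\Delta_i^-\Delta_i^+$ with $\Delta_i^-:=\psi(\Si,\hthetadegr{i},\hetaIkcdegr{i})-\psi(\Si,\thetazerodegr{i},\etazero)$ and $\Delta_i^+:=\psi(\Si,\hthetadegr{i},\hetaIkcdegr{i})+\psi(\Si,\thetazerodegr{i},\etazero)$, then apply Cauchy--Schwarz to the sum, so that $\mathrm{(I)}\le\big(\NN^{-1}\sum_i(\Delta_i^-)^2\big)^{1/2}\big(\NN^{-1}\sum_i(\Delta_i^+)^2\big)^{1/2}$. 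Since $\psi(\Si,\theta,\eta)=\phi(\Si,\eta)-\theta$, we have $\Delta_i^-=(\phi(\Si,\hetaIkcdegr{i})-\phi(\Si,\etazero))-(\hthetadegr{i}-\thetazerodegr{i})$, hence $(\Delta_i^-)^2\le2(\phi(\Si,\hetaIkcdegr{i})-\phi(\Si,\etazero))^2+2(\hthetadegr{i}-\thetazerodegr{i})^2$. On the event that $\hetaIkc\in\TauN$ for all $\kk\in\indset{\KK}$ (probability at least $1-\KK\DeltaN=1-o(1)$ by Assumption~\ref{assumpt:DML2}), conditioning on the folds and invoking~\eqref{eq:bound-var-Ai} bounds $\E[(\phi(\Si,\hetaIkcdegr{i})-\phi(\Si,\etazero))^2\mid\SIkcdegr{i}]\lesssim\deltaN\NN^{-2\kappa}$ uniformly in $i$, so the corresponding average is $o_{\PP}(1)$ via~\citet[Lemma 6.1]{Chernozhukov2018}; the remaining piece is at most $2\max_{d\ge0}(\hthetad-\thetazerod)^2$, handled below. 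For the second factor, $\NN^{-1}\sum_i(\Delta_i^+)^2\le\tfrac2\NN\sum_i\psi^2(\Si,\hthetadegr{i},\hetaIkcdegr{i})+\tfrac2\NN\sum_i\psi^2(\Si,\thetazerodegr{i},\etazero)$: the second sum is $O_{\PP}(1)$ by the uniform second-moment bound in Lemma~\ref{lem:boundLpfournorm}, and for the first I would write $\psi^2(\Si,\hthetadegr{i},\hetaIkcdegr{i})\le2\psi^2(\Si,\thetazerodegr{i},\hetaIkcdegr{i})+2(\hthetadegr{i}-\thetazerodegr{i})^2$ and use $\psi^2(\Si,\thetazerodegr{i},\hetaIkcdegr{i})=O_{\PP}(1)$ uniformly (Lemma~\ref{lem:boundLpfournorm} and its proof) together once more with $\max_d(\hthetad-\thetazerod)^2=o_{\PP}(1)$.

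It then remains to show $\max_{d\ge0}(\hthetad-\thetazerod)^2=o_{\PP}(1)$, equivalently $\sum_{d\ge0}(\hthetad-\thetazerod)^2=o_{\PP}(1)$. For this I would reuse the decomposition from the proof of Lemma~\ref{lem:hthetad}, writing $\hthetad-\thetazerod=A_d+B_d$ with $A_d=\normone{\Acald}^{-1}\sum_{i\in\Acald}(\phi(\Si,\hetaIkcdegr{i})-\phi(\Si,\etazero))$ and $B_d=\normone{\Acald}^{-1}\sum_{i\in\Acald}(\phi(\Si,\etazero)-\E[\phi(\Si,\etazero)])$. By Jensen and~\eqref{eq:bound-var-Ai}, $\E[\sum_d A_d^2\mid\text{folds}]\lesssim(\dmax+1)\deltaN\NN^{-2\kappa}=o(1)$, since $\dmax+1=o(\NN^{1/4})$ and $\kappa\ge1/4$ (Assumptions~\ref{assumpt:degree} and~\ref{assumpt:kappa}); and using the uniform covariance bounds of Lemma~\ref{lem:boundLpfournorm} on the sparse graph, $\E[\sum_d B_d^2]\le\sum_d\normone{\Acald}^{-2}(\normone{\Acald}+2\normone{\tildeE\cap\Acald^2})O(1)\lesssim(1+\dmax)\sum_d\normone{\Acald}^{-1}\lesssim(1+\dmax)^2\NN^{-3/4}=o(1)$ by Assumption~\ref{assumpt:Acald}; two applications of Markov (the one for the $A_d$-term conditional, via~\citet[Lemma 6.1]{Chernozhukov2018}) give $\sum_d(\hthetad-\thetazerod)^2=o_{\PP}(1)$. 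I expect this last step --- the uniform control of $\hthetad-\thetazerod$ over all $\dmax+1$ degree classes --- to be the main obstacle, because Lemma~\ref{lem:hthetad} only furnishes a per-degree rate, so one has to re-run its argument quantitatively and verify that summing the errors over the $o(\NN^{1/4})$ classes, each of cardinality $\Omega(\NN^{3/4})$, still leaves an $o(1)$ remainder; this is precisely where the combination $\kappa\ge1/4$, $\dmax=o(\NN^{1/4})$ and $\normone{\Acald}=\Omega(\NN^{3/4})$ is exploited to the fullest, while the rest is a routine blend of Cauchy--Schwarz, Chebyshev on the dependency graph, and the uniform moment bounds of Lemma~\ref{lem:boundLpfournorm}.
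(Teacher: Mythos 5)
Your proof is correct and relies on the same three ingredients as the paper's --- Chebyshev's inequality on the sparse dependency graph for the centred sum $\frac1\NN\sum_i(\psi^2(\Si,\thetazerodegr{i},\etazero)-\E[\psi^2(\Si,\thetazerodegr{i},\etazero)])$, the bound~\eqref{eq:bound-var-Ai} for the nuisance-estimation error, and Lemma~\ref{lem:hthetad} for the degree-specific effects --- but the decomposition is genuinely different. The paper uses a three-way split (first replace $\hetaIkcdegr{i}$ by $\etazero$, then $\hthetadegr{i}$ by $\thetazerodegr{i}$, then centre) and expands $\psi^2=\phi^2-2\theta\phi+\theta^2$ term by term, handling $\frac1\NN\sum_i(\phi^2(\Si,\hetaIkcdegr{i})-\phi^2(\Si,\etazero))$ via a continuity-of-$x\mapsto x^2$ argument; you instead factor $\psi^2(a)-\psi^2(b)=(\psi(a)-\psi(b))(\psi(a)+\psi(b))$ and apply Cauchy--Schwarz to the empirical average, which is cleaner and gives the same $o_{\PP}(1)$ bounds from the uniform moment controls of Lemma~\ref{lem:boundLpfournorm}. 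More substantively, you correctly identified that the argument needs control of $\hthetadegr{i}-\thetazerodegr{i}$ \emph{uniformly over the $\dmax+1=o(\NN^{1/4})$ degree classes}, whereas Lemma~\ref{lem:hthetad} is stated for a fixed $d$; your summed-second-moment bound, exploiting $\kappa\ge1/4$, $\dmax=o(\NN^{1/4})$, and $\normone{\Acald}=\Omega(\NN^{3/4})$, supplies exactly the uniformity that the paper invokes only implicitly when it declares $\frac1\NN\sum_i(\hthetadegr{i}^2-(\thetazerodegr{i})^2)=o_{\PP}(1)$ ``due to Lemma~\ref{lem:hthetad}''; this is a point where your write-up is more careful than the original. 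The one cosmetic caveat is that conditioning ``on the folds'' should be done fold by fold --- split the sum over $i$ into the $\KK$ sets $\Ik$ and condition on $\SIkc$ alone --- since $\Si$ need not be independent of $\mathcal{S}_{I_{k'}^c}$ for $k'\neq k(i)$; because $\KK$ is a fixed constant this changes nothing in the conclusion.
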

\begin{proof}[Proof of Lemma~\ref{lem:consistent-variance-part-I}]
We have
\begin{equation}\label{eq:toBound2}
	\begin{array}{cl}
		& \frac{1}{\NN}\sum_{i=1}^{\NN} \big(\psi^2(\Si, \hthetadegr{i}, \hetaIkcdegr{i}) -\E[\psi^2(\Si,\thetazerodegr{i},\etazero)] \big)\\
		=& 
		\frac{1}{\NN}\sum_{i=1}^{\NN} \big(\psi^2(\Si, \hthetadegr{i}, \hetaIkcdegr{i}) -\psi^2(\Si,\hthetadegr{i},\etazero) \big)\\
		&\quad
		+ \frac{1}{\NN}\sum_{i=1}^{\NN} \big(\psi^2(\Si,\hthetadegr{i},\etazero) - \psi^2(\Si,\thetazerodegr{i},\etazero) \big)\\
		&\quad
		+  \frac{1}{\NN}\sum_{i=1}^{\NN} \big(\psi^2(\Si,\thetazerodegr{i},\etazero) - \E[\psi^2(\Si,\thetazerodegr{i},\etazero)]\big). 
	\end{array}
\end{equation}
We bound the three sets of summands in~\eqref{eq:toBound2} individually. 
The first set of summands can be expressed as
\begin{displaymath}
	\begin{array}{cl}
		& \frac{1}{\NN}\sum_{i=1}^{\NN} \big(\psi^2(\Si, \hthetadegr{i}, \hetaIkcdegr{i}) -\psi^2(\Si,\hthetadegr{i},\etazero) \big)\\
		=&
		\frac{1}{\NN}\sum_{i=1}^{\NN} \big( \phi^2(\Si, \hetaIkcdegr{i}) - \phi^2(\Si,\etazero) \big)
		- \frac{2}{\NN} \sum_{i=1}^{\NN} \hthetadegr{i}\big( \phi(\Si,\hetaIkcdegr{i}) - \phi(\Si,\etazero) \big). 
	\end{array}
\end{displaymath}
We have 
\begin{equation}\label{eq:toBound3}
	\normonebigg{\frac{1}{\NN}\sum_{i=1}^{\NN} \big( \phi^2(\Si, \hetaIkcdegr{i}) - \phi^2(\Si,\etazero) \big)} = o_{\PP}(1)
\end{equation}
because the function $\R\ni x\mapsto x^2\in \R$ is continuous and due to Equation~\eqref{eq:bound-var-Ai}. Indeed, let $\eps>0$. Because the function $\R\ni x\mapsto x^2\in \R$ is continuous, there exists $\delta>0$ such that if $\normone{\phi(\Si, \hetaIkcdegr{i}) - \phi(\Si,\etazero)}<\delta$, then also $\normone{\phi^2(\Si, \hetaIkcdegr{i}) - \phi^2(\Si,\etazero)}<\eps$. Consequently, 
we have
\begin{displaymath}
	\begin{array}{cl}
		& \PP\big( \normone{\phi^2(\Si, \hetaIkcdegr{i}) - \phi^2(\Si,\etazero)}>\eps \big|\SIkcdegr{i}\big)\\
		\le &  \PP\big( \normone{\phi(\Si, \hetaIkcdegr{i}) - \phi(\Si,\etazero)}>\delta \big|\SIkcdegr{i}\big)\\
		\le & \frac{1}{\delta}\sup_{\eta\in\TauN} \normP{\phi(\Si, \eta) - \phi(\Si,\etazero)}{1}
	\end{array}
\end{displaymath}
with $\PP$-probability at least $1-\DeltaN$, and we infer~\eqref{eq:toBound3} due to~\eqref{eq:bound-var-Ai}. 
The estimator $\hthetadegr{i}$ is a consistent estimator of $\thetazerodegr{i}$ due to Lemma~\ref{lem:hthetad}, and $\thetazerodegr{i}$ is bounded independent of $i$ due to Assumption~\ref{assumpt:regularity5}. Moreover, we have $\normone{\phi(\Si,\hetaIkcdegr{i}) - \phi(\Si,\etazero)}=o_{\PP}(1)$ due to~\eqref{eq:bound-var-Ai} and~\citet[Lemma 6.1]{Chernozhukov2018}. Consequently, we have
\begin{displaymath}
	\normonebigg{\frac{2}{\NN} \sum_{i=1}^{\NN} \hthetadegr{i}\big( \phi(\Si,\hetaIkcdegr{i}) - \phi(\Si,\etazero) \big)} = o_{\PP}(1)
\end{displaymath}
due to H{\"o}lder's inequality. Hence, the first set of summands in~\eqref{eq:toBound2} is of order $o_{\PP}(1)$. 
The second set of summand in~\eqref{eq:toBound2} can be decomposed as
\begin{displaymath}	
	\begin{array}{cl}
		&\frac{1}{\NN}\sum_{i=1}^{\NN} \big(\psi^2(\Si,\hthetadegr{i},\etazero) - \psi^2(\Si,\thetazerodegr{i},\etazero) \big)\\
		=& \frac{1}{\NN}\sum_{i=1}^{\NN}(\hthetadegr{i}^2 - (\thetazerodegr{i})^2)
		- \frac{2}{\NN}\sum_{i=1}^{\NN} (\hthetadegr{i} - \thetazerodegr{i})\phi(\Si,\etazero). 
	\end{array}
\end{displaymath}
We have $\normone{\frac{1}{\NN}\sum_{i=1}^{\NN}(\hthetadegr{i}^2 - (\thetazerodegr{i})^2)}=o_{\PP}(1)$ due to Lemma~\ref{lem:hthetad}. Lemma~\ref{lem:boundLpfournorm} bounds $\phi^2(\Si,\etazero)$ in probability.  Due to H{\"o}lder's inequality, we obtain
\begin{displaymath}
	\normonebigg{\frac{2}{\NN}\sum_{i=1}^{\NN} (\hthetadegr{i} - \thetazerodegr{i})\phi(\Si,\etazero)}
	= o_{\PP}(1).
\end{displaymath}
Consequently, the second set of summands in~\eqref{eq:toBound2} is of order $o_{\PP}(1)$. Last, we bound the third set of summands in~\eqref{eq:toBound2}. Let $\eps > 0$. We have
\begin{displaymath}
	\begin{array}{cl}
		& \PP\Big( \normoneBig{\frac{1}{\NN}\sum_{i=1}^{\NN} \big(\psi^2(\Si,\thetazerodegr{i},\etazero) - \E[\psi^2(\Si,\thetazerodegr{i},\etazero)]\big)}^2>\eps^2 \Big)\\
		\le & 
		\frac{1}{\eps^2\NN^2}\Big(
		\sum_{i=1}^{\NN}\Var\big(\psi^2(\Si,\thetazerodegr{i},\etazero)\big)
		+ \sum_{i,j\in\indset{\NN}, \{i,j\}\in \tildeE} \Cov\big(\psi^2(\Si,\thetazerodegr{i},\etazero), \psi^2(\Sj,\thetazerodegr{j},\etazero)\big)
		\Big)\\
		\le & \frac{1}{\eps^2\NN^2}(\NN O(1) + \NN\dmax O(1))\\
		=& o(1)
	\end{array}
\end{displaymath}
because $\Var(\psi^2(\Si,\thetazerodegr{i},\etazero))$ and $\Cov(\psi^2(\Si,\thetazerodegr{i},\etazero), \psi^2(\Sj,\thetazerodegr{j},\etazero))$ are bounded uniformly over $i$ and $j$ by Lemma~\ref{lem:boundLpfournorm}, because $\Cov(\psi^2(\Si,\thetazerodegr{i},\etazero), \psi^2(\Sj,\thetazerodegr{j},\etazero))$ does not vanish only if $\{i,j\}\in \tildeE$, and because $\dmax=o(\NN^{1/4})$ by Assumption~\ref{assumpt:degree}. 
Consequently, also the third set of summands in~\eqref{eq:toBound2} is of order $o_{\PP}(1)$, and we have established the statement of the present lemma. 
\end{proof}

\begin{lemma}[Consistent variance estimator part II]\label{lem:consistent-variance-part-II}
Assume the assumptions of Theorem~\ref{thm:var-est} hold.
	Denote by $\tildeE$ the edge set of the dependency graph. 
	We have
	\begin{displaymath}
		\normonebigg{
		\frac{1}{\NN}\sum_{i, j \in\indset{\NN}, \{i,j\}\in \tildeE} \big( 
		\psi(\Si, \hthetadegr{i}, \hetaIkcdegr{i})\psi(\Sj, \hthetadegr{j}, \hetaIkcdegr{j})
		- \E[\psi(\Si,\thetazerodegr{i},\etazero)\psi(\Sj,\thetazerodegr{j},\etazero)]
		\big)
		}
		= o_{\PP}(1). 
	\end{displaymath}
\end{lemma}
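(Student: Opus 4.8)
The plan is to follow the same three-step telescoping strategy as in the proof of Lemma~\ref{lem:consistent-variance-part-I}, but with every sum now ranging over the edge set $\tildeE$ of the dependency graph. Abbreviating $\psi^0_i = \psi(\Si,\thetazerodegr{i},\etazero)$, $\tilde\psi_i = \psi(\Si,\hthetadegr{i},\etazero)$ and $\hat\psi_i = \psi(\Si,\hthetadegr{i},\hetaIkcdegr{i})$, and recalling $\psi(\Si,\theta,\eta)=\phi(\Si,\eta)-\theta$, I would write, for each $\{i,j\}\in\tildeE$,
\begin{displaymath}
	\hat\psi_i\hat\psi_j - \E[\psi^0_i\psi^0_j]
	= \big(\hat\psi_i\hat\psi_j - \tilde\psi_i\tilde\psi_j\big)
	+ \big(\tilde\psi_i\tilde\psi_j - \psi^0_i\psi^0_j\big)
	+ \big(\psi^0_i\psi^0_j - \E[\psi^0_i\psi^0_j]\big),
\end{displaymath}
sum over $\tildeE$, divide by $\NN$, and show each of the three resulting averages is $o_{\PP}(1)$. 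As in Lemmata~\ref{lem:vanishing-cov} and~\ref{lem:consistent-variance-part-I}, I would work on the event that $\hetaIkc\in\TauN$ for all $\kk\in\indset{\KK}$, which has $\PP$-probability at least $1-o(1)$, and turn conditional-in-$\SIkc$ bounds into unconditional ones via~\citet[Lemma 6.1]{Chernozhukov2018}.

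For the first average, I would use $\hat\psi_i\hat\psi_j-\tilde\psi_i\tilde\psi_j=(\hat\psi_i-\tilde\psi_i)\hat\psi_j+\tilde\psi_i(\hat\psi_j-\tilde\psi_j)$ together with the identity $\hat\psi_i-\tilde\psi_i=\phi(\Si,\hetaIkcdegr{i})-\phi(\Si,\etazero)$, whose conditional second moment is $\lesssim\deltaN\NN^{-2\kappa}$ uniformly in $i$ by~\eqref{eq:bound-var-Ai}, while $\hat\psi_j$ and $\tilde\psi_i$ have second moments bounded uniformly in $i,j$ by Lemma~\ref{lem:boundLpfournorm} and Assumption~\ref{assumpt:regularity5}. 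Since every vertex of $\tildeG$ lies in at most $\dmax$ edges, a Cauchy--Schwarz step over $\tildeE$ bounds the first average by
\begin{displaymath}
	\frac{1}{\NN}\Big(\dmax\sum_{i=1}^{\NN}\normone{\hat\psi_i-\tilde\psi_i}^2\Big)^{1/2}\Big(\dmax\sum_{j=1}^{\NN}\normone{\hat\psi_j}^2\Big)^{1/2}
	= o_{\PP}\big(\dmax\NN^{-\kappa}\big) = o_{\PP}(1),
\end{displaymath}
using $\kappa\ge 1/4$ (Assumption~\ref{assumpt:kappa}) and $\dmax=o(\NN^{1/4})$ (Assumption~\ref{assumpt:degree}). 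The second average is treated identically after noting $\tilde\psi_i-\psi^0_i=\thetazerodegr{i}-\hthetadegr{i}$: here one bounds $\sum_{i=1}^{\NN}\normone{\hthetadegr{i}-\thetazerodegr{i}}^2=\sum_{d\ge 0}\normone{\Acald}\normone{\hthetad-\thetazerod}^2$ in second moment, exploiting that there are only $\dmax+1$ distinct degree-specific effects and re-using the per-degree variance estimates from the proof of Lemma~\ref{lem:hthetad} (the stochastic part contributes $O(\dmax^2)$ and the nuisance part $O(\dmax^2\deltaN\NN^{-2\kappa})$, both $o(\NN^{1/2})$ since $\dmax=o(\NN^{1/4})$ and $\kappa\ge 1/4$); this yields $\sum_i\normone{\hthetadegr{i}-\thetazerodegr{i}}^2=o_{\PP}(\NN^{1/2})$, and the same Cauchy--Schwarz-over-$\tildeE$ estimate again produces a bound of order $o_{\PP}(\dmax\NN^{-1/4})=o_{\PP}(1)$.

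The crux is the third, purely stochastic average $\frac{1}{\NN}\sum_{\{i,j\}\in\tildeE}\big(\psi^0_i\psi^0_j-\E[\psi^0_i\psi^0_j]\big)$, which I would control via Chebyshev's inequality: it suffices to show that
\begin{displaymath}
	\frac{1}{\NN^2}\sum_{\{i,j\}\in\tildeE}\ \sum_{\{m,r\}\in\tildeE}\Cov\big(\psi^0_i\psi^0_j,\,\psi^0_m\psi^0_r\big)\longrightarrow 0.
\end{displaymath}
By the defining property of the dependency graph, $\Cov(\psi^0_i\psi^0_j,\psi^0_m\psi^0_r)$ vanishes unless $(\Si,\Sj)$ and $(\Sm,\Sr)$ fail to be independent, which forces at least one of $m,r$ to lie in the closed $\tildeG$-neighborhood of $\{i,j\}$; that neighborhood contains at most $2(\dmax+1)$ vertices, each of which lies in at most $\dmax$ edges, so for each fixed $\{i,j\}$ at most $O(\dmax^2)$ edges $\{m,r\}$ contribute, while $\normone{\tildeE}\le\tfrac{1}{2}\NN\dmax$. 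Since each such covariance is bounded by a finite constant uniformly in $i,j,m,r$ by Lemma~\ref{lem:boundLpfournorm}, the double sum has $O(\NN\dmax^3)$ nonzero summands, so the variance is $O(\dmax^3/\NN)=o(1)$ because $\dmax=o(\NN^{1/4})$ forces $\dmax^3=o(\NN^{3/4})$. The main obstacle is exactly this combinatorial bookkeeping over \emph{pairs} of edges — verifying that only $O(\dmax^2)$ partner edges contribute per edge — and observing that the degree bound $\dmax=o(\NN^{1/4})$, originally imposed for Stein's method, is precisely strong enough to absorb the resulting $\dmax^3/\NN$, whereas the diagonal sum in Lemma~\ref{lem:consistent-variance-part-I} needed only $\dmax=o(\NN)$.
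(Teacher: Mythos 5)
Your proof is correct and follows essentially the same route as the paper's: a three-term telescoping of $\psi(\Si,\hthetadegr{i},\hetaIkcdegr{i})\psi(\Sj,\hthetadegr{j},\hetaIkcdegr{j})$ down to $\psi(\Si,\thetazerodegr{i},\etazero)\psi(\Sj,\thetazerodegr{j},\etazero)$ (merely in the opposite order of substitution), with the nuisance replacement controlled by~\eqref{eq:bound-var-Ai}, the $\hthetad$ replacement by Lemma~\ref{lem:hthetad}, H\"older/Cauchy--Schwarz over $\normone{\tildeE}\le\tfrac{1}{2}\NN\dmax$ edges, and a Chebyshev bound over pairs of dependent edges for the centered stochastic term. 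Your count of $O(\dmax^2)$ partner edges per edge (hence $O(\NN\dmax^3)$ nonzero covariance terms) is in fact the more careful bookkeeping — the paper's bound $\tildedmax\le 2\dmax$ on the line graph $G'$ appears to undercount by a factor of $\dmax$ — but both counts yield $o(1)$ under $\dmax=o(\NN^{1/4})$.
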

\begin{proof}[Proof of Lemma~\ref{lem:consistent-variance-part-II}]
We have the decomposition
\begin{equation}\label{eq:toBound4}
	\begin{array}{cl}
		&\frac{1}{\NN}\sum_{i, j \in\indset{\NN}, \{i,j\}\in \tildeE} \big( 
		\psi(\Si, \hthetadegr{i}, \hetaIkcdegr{i})\psi(\Sj, \hthetadegr{j}, \hetaIkcdegr{j})
		- \E[\psi(\Si,\thetazerodegr{i},\etazero)\psi(\Sj,\thetazerodegr{j},\etazero)]
		\big)\\
		=& \frac{2}{\NN}\sum_{\{i,j\}\in \tildeE} \big( 
		\psi(\Si, \hthetadegr{i}, \hetaIkcdegr{i})\psi(\Sj, \hthetadegr{j}, \hetaIkcdegr{j})
		- \psi(\Si, \thetazerodegr{i}, \hetaIkcdegr{i})\psi(\Sj, \thetazerodegr{j}, \hetaIkcdegr{j})
		\big)\\
		&\quad + 
		\frac{2}{\NN}\sum_{\{i,j\}\in \tildeE} \big( 
		\psi(\Si, \thetazerodegr{i}, \hetaIkcdegr{i})\psi(\Sj, \thetazerodegr{j}, \hetaIkcdegr{j})
		- \psi(\Si, \thetazerodegr{i}, \etazero)\psi(\Sj, \thetazerodegr{j}, \etazero)
		\big)\\
		&\quad + 
		\frac{2}{\NN}\sum_{\{i,j\}\in \tildeE} \big( 
		\psi(\Si, \thetazerodegr{i}, \etazero)\psi(\Sj, \thetazerodegr{j}, \etazero)
		- \E[\psi(\Si, \thetazerodegr{i}, \etazero)\psi(\Sj, \thetazerodegr{j}, \etazero)]
		\big). 
	\end{array}
\end{equation}
Subsequently, we bound the three sets of summands in~\eqref{eq:toBound4} individually. We start by bounding the first set of summands. 
We have
\begin{displaymath}	
	\begin{array}{cl}
		& \frac{1}{\NN}\sum_{\{i,j\}\in \tildeE} \big( 
		\psi(\Si, \hthetadegr{i}, \hetaIkcdegr{i})\psi(\Sj, \hthetadegr{j}, \hetaIkcdegr{j})
		- \psi(\Si, \thetazerodegr{i}, \hetaIkcdegr{i})\psi(\Sj, \thetazerodegr{j}, \hetaIkcdegr{j})
		\big)\\
		=& \frac{2}{\NN}\sum_{\{i, j\}\in \tildeE}(\thetazerodegr{i} - \hthetadegr{i}) \psi(\Sj, \thetazerodegr{j}, \hetaIkcdegr{j})
		+ \frac{1}{\NN}\sum_{\{i, j\}\in \tildeE}(\thetazerodegr{i} - \hthetadegr{i})(\thetazerodegr{j} - \hthetadegr{j}). 
	\end{array}
\end{displaymath}
We have
\begin{displaymath}
	\begin{array}{cl}
	& \normonebigg{\frac{1}{\NN}\sum_{\{i, j\}\in \tildeE}(\thetazerodegr{i} - \hthetadegr{i}) \psi(\Sj, \thetazerodegr{j}, \hetaIkcdegr{j})}\\
	\le& \sqrt{\frac{1}{\NN}\sum_{\{i, j\}\in \tildeE}(\thetazerodegr{i} - \hthetadegr{i})^2}
	\sqrt{\frac{1}{\NN}\sum_{\{i, j\}\in \tildeE} \psi(\Sj, \thetazerodegr{j}, \hetaIkcdegr{j})}\\
	=& \frac{1}{\NN}\normone{\tildeE}o_{\PP}(\NN^{-1/4}) \\
	=& \dmax o_{\PP}(\NN^{-1/4})\\
	=& o_{\PP}(1)
	\end{array}
\end{displaymath}
due to H{\"o}lder's inequality, Lemma~\ref{lem:hthetad}, Lemma~\ref{lem:boundLpfournorm}, and Assumption~\ref{assumpt:degree}. 
Moreover, we have 
\begin{displaymath}
	\normonebigg{\frac{1}{\NN}\sum_{\{i, j\}\in \tildeE}(\thetazerodegr{i} - \hthetadegr{i})(\thetazerodegr{j} - \hthetadegr{j})}
	= \frac{1}{\NN}\normone{\tildeE} o_{\PP}(\NN^{-1/2}) = o_{\PP}(1)
\end{displaymath}
due to H{\"o}lder's inequality, Lemma~\ref{lem:hthetad}, and Assumption~\ref{assumpt:degree}. 
Consequently, the first set of summands in~\eqref{eq:toBound4} is of order $o_{\PP}(1)$. We proceed to bound the second set of summands in~\eqref{eq:toBound4}. 
Let $\{i,j\}\in \tildeE$. Due to the construction of $\SIkdegr{i}$ and $\SIkcdegr{i}$, we have $\Si = (\Wi, \CXZi, \Yi)\in\SIkdegr{i}$, and none of $\Wi$, $\Ci$, $\Yi$, or the variables used to compute $\Xi$ belong to $\SIkcdegr{i}$. Moreover, the variables $\Wi$, $\Ci$, $\Yi$, and the variables used to compute  $\Xi$ also cannot belong to $\SIkcdegr{j}$ as otherwise we would have $\Si\independent\Sj$, and consequently $\{i,j\}\not\in \tildeE$. 
Therefore, we have
\begin{displaymath}
	\begin{array}{cl}
		& \E\big[ \normone{\psi(\Si, \thetazerodegr{i}, \hetaIkcdegr{i})\psi(\Sj, \thetazerodegr{j}, \hetaIkcdegr{j})
		- \psi(\Si, \thetazerodegr{i}, \etazero)\psi(\Sj, \thetazerodegr{j}, \etazero)}
		\big|\SIkcdegr{i}, \SIkcdegr{j} \big]\\
		\le & \sup_{\eta_1,\eta_2\in\TauN} 
		\E\big[ \normone{\psi(\Si, \thetazerodegr{i}, \eta_1)\psi(\Sj, \thetazerodegr{j}, \eta_2)
		- \psi(\Si, \thetazerodegr{i}, \etazero)\psi(\Sj, \thetazerodegr{j}, \etazero)}
		\big]\\
		\le & 
		\sup_{\eta_1\in\TauN}\normP{\phi(\Si,\eta_1) - \phi(\Si,\etazero)}{2}\normP{\psi(\Sj, \thetazerodegr{j}, \etazero)}{2}\\
		&\quad 
		+ \sup_{\eta_2\in\TauN} \normP{\psi(\Si, \thetazerodegr{i}, \etazero)}{2} \normP{\phi(\Sj,\eta_2) - \phi(\Sj,\etazero)}{2}\\
		&\quad 
		+ \sup_{\eta_1,\eta_2\in\TauN}\normP{\phi(\Si,\eta_1) - \phi(\Si,\etazero)}{2} \normP{\phi(\Sj,\eta_2) - \phi(\Sj,\etazero)}{2}

	\end{array}
\end{displaymath}
with $\PP$-probability at least $1-\DeltaN$ due to H{\"o}lder's inequality. Because all terms above are uniformly bounded due to Lemma~\ref{lem:boundLpfournorm}, we infer that the second set of summands in~\eqref{eq:toBound4} is of order $o_{\PP}(1)$ due to~\citet[Lemma 6.1]{Chernozhukov2018}. Finally, we bound the third set of summands in~\eqref{eq:toBound4}. Let $\eps>0$. We have
\begin{equation}\label{eq:toBound5}
	\begin{array}{cl}
		& \PP\Big( \normoneBig{\frac{1}{\NN}\sum_{\{i,j\}\in \tildeE} \big( 
		\psi(\Si, \thetazerodegr{i}, \etazero)\psi(\Sj, \thetazerodegr{j}, \etazero)
		- \E[\psi(\Si, \thetazerodegr{i}, \etazero)\psi(\Sj, \thetazerodegr{j}, \etazero)]
		\big)}^2 > \eps^2 \Big)\\
		\le& 
		\frac{1}{\eps^2\NN^2} \Big(
		\sum_{\{i,j\}\in \tildeE}\Var\big( \psi(\Si,\thetazerodegr{i}, \etazero)\psi(\Si,\thetazerodegr{j}, \etazero) \big)\\
		&\quad + 
		\sum_{\{i,j\}, \{m, r\} \in \tildeE, \mathrm{unequal}}
		\Cov\big(
		\psi(\Si,\thetazerodegr{i}, \etazero)\psi(\Sj,\thetazerodegr{j}, \etazero), 
		\psi(\Sm,\thetazerodegr{m}, \etazero)\psi(\Sr,\thetazerodegr{r}, \etazero)
		\big)
		\Big). 
	\end{array}
\end{equation}
Due to Lemma~\ref{lem:boundLpfournorm}, 
the variance and covariance terms in~\eqref{eq:toBound5}
are uniformly bounded by constants. Furthermore, the covariance terms do only not equal $0$ if $\Si$ depends on $\Sm$ or $\Sr$, or if $\Sj$ depends on $\Sm$ or $\Sr$. In order to better describe these dependency relationships, we build a graph on the edge set of the dependency graph. We consider the graph $ G'=( V',  E')$ with $ V' = \tildeE$ and such that an edge $\{\{i,j\},\{m,r\}\}\in E'$ if and only if at least one of $\{i,m\}$, $\{i,r\}$, $\{j,m\}$, $\{j,r\}$ 
belongs to $\tildeE$. Consequently, $\{\{i,j\},\{m,r\}\}\in E'$ if and only if $(\Si, \Sj) \not\independent (\Sm, \Sr)$, in which case the covariance term in~\eqref{eq:toBound5} corresponding to $\{i,j\}$ and $\{m,r\}$ does not vanish. Furthermore, we have $\normone{ E'} = 1/2\normone{\tildeE} \tildedmax$, where $\tildedmax$ denotes the maximal degree of a node in $ G'$. We have $\tildedmax\le 2\dmax$. Consequently, we have 
\begin{displaymath}
	\begin{array}{cl}
		& \PP\Big( \normoneBig{\frac{1}{\NN}\sum_{\{i,j\}\in \tildeE} \big( 
		\psi(\Si, \thetazerodegr{i}, \etazero)\psi(\Sj, \thetazerodegr{j}, \etazero)
		- \E[\psi(\Si, \thetazerodegr{i}, \etazero)\psi(\Sj, \thetazerodegr{j}, \etazero)]
		\big)}^2 > \eps^2 \Big)\\
		\le& 
		\frac{1}{\eps^2\NN^2} (
		\normone{\tildeE} + \normone{ E'}
		) O(1) \\
		\le& \frac{1}{\eps^2\NN^2} (N\dmax + N\dmax^2) O(1)\\
		=& \frac{1}{\eps^2\NN} (o(\NN^{1/4}) + o(\NN^{1/2})) O(1)\\
		=& o(1)
	\end{array}
\end{displaymath}
due to Assumption~\ref{assumpt:degree}. Therefore, we have established the statement of the present lemma because we have verified that all three sets of summands in~\eqref{eq:toBound4} are of order $o_{\PP}(1)$. 
\end{proof} 

\begin{proof}[Proof of Theorem~\ref{thm:var-est}]
The proof follows from Lemma~\ref{lem:consistent-variance-part-I} and~\ref{lem:consistent-variance-part-II}. 
\end{proof}

%%%%%%%%%%%%%%%%%%%%%%
\section{Extension to Estimate Global Effects}\label{sect:extension}
%%%%%%%%%%%%%%%%%%%%%%

So far, we focused on the EATE.  
We intervened on each individual unit and left the treatment selections of the other units as they were. 

Subsequently, we consider another type of treatment effect where we assess the effect of a single intervention that intervenes on all subjects simultaneously. Instead of the EATE in~\eqref{eq:thetaN}, we subsequently consider the global average treatment effect (GATE) with respect to the binary vector $\pi\in\{0, 1\}^{\NN}$ of treatment selections 
\begin{equation}\label{eq:thetaNglobal}
	\thetazeroNGlobal(\pi)
	= \frac{1}{\NN}\sum_{i=1}^{\NN} \E\Big[\Yi^{do(\W = \pi)} - \Yi^{do(\W = 1-\pi)}\Big],
\end{equation}
where $\W=(W_1,\ldots, W_{\NN})$ denotes the complete vector of treatment selections of all units. In practice, the most common choice is where all components of $\pi$ equal $1$. That is, the treatment effect comes from comparing the situation where all units are assigned to the treatment versus where no-one gets the treatment. 

We use the same definition for $\Si$, $i\in\indset{\NN}$ as before and denote the dependency graph on $\Si$, $i\in\indset{\NN}$ by $\tildeG=(V, \tildeE)$. Furthermore, we let $\alpha(i) = \{j\in\indset{\NN}\colon \{i, j\}\in\tildeE\}\cup\{i\}$ for $i\in\indset{\NN}$ denote the nodes that share an edge with $i$ in the dependency graph together with $i$ itself. For some real number $\xi\in\R$ and a nuisance function triple $\eta=(\gone,\gzero,\h)$, consider the score function
\begin{equation}\label{eq:scoreXi}
	\begin{array}{rcl}
	\psi(\Si, \theta, \xi) 
	&=&\gone(\CXi) - \gzero(\CXi)	+\Big(\prod_{j\in\alpha(i)} \frac{\Wj}{\h(\CZj)}\Big) \big(\Yi - \gone(\CXi)\big)\\
	&&\quad
	- \Big(\prod_{j\in\alpha(i)}\frac{1 - \Wj}{1 - \h(\CZj)}\Big) \big(\Yi - \gzero(\CXi) \big)
	- \xi. 
\end{array}
\end{equation}

In contrast to the score
that we used for the EATE, this score includes additional factors $\frac{\Wj}{\h(\CZj)}$ and $\frac{1 - \Wj}{1 - \h(\CZj)}$ for units $j$ that share an edge with $i$ in the dependency graph. With the GATE, when we globally intervene on all treatment selections at the same time, this also influences the $\Xi$ that are present in $\gone$ and $\gzero$. 
In the score~\eqref{eq:scoreXi}, the ``correction terms'' $(\prod_{j\in\alpha(i)} \frac{\Wj}{\h(\CZj)}) (\Yi - \gone(\CXi))$ and $(\prod_{j\in\alpha(i)}\frac{1 - \Wj}{1 - \h(\CZj)}) (\Yi - \gzero(\CXi) )$ are only active if $i$ and the units from which it receives spillover effects have the same observed treatment selection. 

Let us denote by 
\begin{displaymath}
	\thetazeroiGlobal =  
	\E\Big[\Yi^{do(\W = \pi)} - \Yi^{do(\W = 1-\pi)}\Big] = 
	\E\big[\gonezero(\CXione) - \gzerozero(\CXizero)\big]
\end{displaymath} 
the $i$th contribution in~\eqref{eq:thetaNglobal}. 
Here, 
\begin{displaymath}
	\Xipi = \big(f^1_x(\pi_{-i}, \Cminusi, A), \ldots, 
	f^r_x(\pi_{-i}, \Cminusi, A) 
	\big)
\end{displaymath}
denotes the feature vector where $\Wj$ is replaced by $\pi_j$, and 
\begin{displaymath}
	\Ximinuspi = \big(
 f^1_x(1-\pi_{-i}, \Cminusi, A), \ldots,
	f^r_x(1-\pi_{-i}, \Cminusi, A) 
	\big)
\end{displaymath}
denotes the feature vector where $\Wj$ is replaced by $1 - \pi_j$. The features $\Zipi$ and $\Ziminuspi$ are defined analogously. 
Similarly to Lemma~\ref{lem:identifiability}, it can be shown that $\E[\psi(\Si,\thetazeroiGlobal,\etazero)] = 0$ holds, which lets us identify the global treatment effect $\thetazeroNGlobal$ by
\begin{displaymath}
	\thetazeroNGlobal = \frac{1}{\NN}\sum_{i=1}^{\NN} \E[\phi(\Si,\etazero)], 
\end{displaymath} 
where 
\begin{displaymath}
	\begin{array}{rcl}
	\phi(\Si, \eta) 
	&=&\gone(\CXi) - \gzero(\CXi)	+\Big(\prod_{j\in\alpha(i)} \frac{\Wj}{\h(\CZj)}\Big) \big(\Yi - \gone(\CXi)\big)\\
	&&\quad
	- \Big(\prod_{j\in\alpha(i)}\frac{1 - \Wj}{1 - \h(\CZj)}\Big) \big(\Yi - \gzero(\CXi) \big). 
\end{array}
\end{displaymath}

To estimate $\thetazeroNGlobal$, we  apply the same procedure as for the ATE. The only difference is that when we evaluate the machine learning estimates, we do not use the observed treatment selections, but instead insert the respective components of $\pi$ and $1-\pi$. However, we insert the actually observed treatment selections in the product terms $\prod_{j\in\alpha(i)} \frac{\Wj}{\h(\CZj)}$ and $\prod_{j\in\alpha(i)}\frac{1 - \Wj}{1 - \h(\CZj)}$. 
This gives the estimator $\hthetaGlobal$. Analogously to Theorem~\ref{thm:Gaussian} for the EATE, also the GATE with respect to $\pi$ converges at the parametric rate and follows a Gaussian distribution asymptotically. 

\begin{theorem}[Asymptotic distribution of $\hthetaGlobal$]\label{thm:GaussianGlobal}
	Assume Assumption~\ref{assumpt:regularity} (with $\theta$ replaced by $\xi$), 
	\ref{assumpt:degree}, and~\ref{assumpt:DML} in the appendix in Section~\ref{sect:AssumptionsDefinitions} hold. Furthermore, assume that there exists a finite real constant $L$ such that $\normone{\alpha(i)}\le L$ holds for all $i\in\indset{\NN}$. 
	
	Then, the estimator $\hthetaGlobal$ of the GATE with respect to $\pi\in\{0,1\}^{\NN}$, $\thetazeroNGlobal$, satisfies
	\begin{displaymath}\label{eq:GaussThm}
		\sqrt{\NN}(\hthetaGlobal-\thetazeroNGlobal) \stackrel{d}{\rightarrow} \mathcal{N}(0, \sigmainfty), 
	\end{displaymath}
	where $\sigmainfty$ is characterized in Assumption~\ref{assumpt:variance} with the $\psi$ in~\eqref{eq:scoreXi}. 
	The convergence in~\eqref{eq:GaussThm} is in fact uniformly over the law $P$ of the observations. 
\end{theorem}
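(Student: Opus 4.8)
The plan is to follow the proof of Theorem~\ref{thm:Gaussian} essentially line by line, the only genuinely new ingredient being the treatment of the products of inverse‑propensity weights $\prod_{j\in\alpha(i)}\Wj/\h(\CZj)$ and $\prod_{j\in\alpha(i)}(1-\Wj)/(1-\h(\CZj))$ that now appear in the score. Writing $\psi(\Si,\theta,\eta)=\phi(\Si,\eta)-\theta$ with $\phi$ the GATE score, the starting point is the decomposition
$$\sqrt{\NN}(\hthetaGlobal-\thetazeroNGlobal)=\frac{1}{\sqrt{\KK}}\sum_{\kk=1}^{\KK}\frac{1}{\sqrt{\nn}}\sum_{i\in\Ik}\big(\psi(\Si,\thetazeroiGlobal,\hetaIkc)-\psi(\Si,\thetazeroiGlobal,\etazero)\big)+\frac{1}{\sqrt{\NN}}\sum_{i=1}^{\NN}\psi(\Si,\thetazeroiGlobal,\etazero),$$
which uses $\NN=\nn\KK$ and the identification $\E[\psi(\Si,\thetazeroiGlobal,\etazero)]=0$ (the analogue of Lemma~\ref{lem:identifiability}, asserted in the text; it is proved by the same nested‑conditioning argument, now peeling the factors $\Wj/\hzero(\CZj)$ off one unit at a time and using $\E[\epsYi|\pa(\Yi)]=0$ at the end). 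It then remains to show the first term is $o_{\PP}(1)$ and the second converges to $\mathcal{N}(0,\sigmainfty^2)$.

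First I would re‑establish the two structural lemmas. For Neyman orthogonality (the analogue of Lemma~\ref{lem:Neyman-orth}), differentiate $\E[\phi(\Si,\etazero+r(\eta-\etazero))]$ in $r$ and apply the product rule to $\prod_{j\in\alpha(i)}\Wj/(\hzero(\CZj)+r(\h-\hzero)(\CZj))$; at $r=0$ the derivative splits into a sum over $m\in\alpha(i)$, each summand a bounded multiple of the residual $\eps_{\Wm}=\Wm-\hzero(C_m,Z_m)$ times a factor measurable with respect to $\pa(\Wm)$ and the remaining independent randomness — crucially, because the estimator (hence the population score) plugs the intervened features $\Xipi$, $\Ximinuspi$ into $\gone,\gzero$, that factor does not involve the observed $\Wm$. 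Conditioning on $\pa(\Wm)$ and invoking $\E[\eps_{\Wm}|\pa(\Wm)]=0$ together with Lemma~\ref{lem:lemma2} kills each term, exactly as in the EATE case. For the second‑order bound (the analogue of Lemma~\ref{lem:product-property}), the product rule applied twice produces $O(|\alpha(i)|^2)=O(L^2)$ terms; Hölder's inequality with Assumptions~\ref{assumpt:regularity3}, \ref{assumpt:regularity4}, \ref{assumpt:DML4} (the inverse propensities, and hence the whole product, bounded by $\Chzero^{-L}$) bounds each by a product of two of the $L^2$‑errors $\normP{\h-\hzero}{2}$, $\normP{\gone-\gonezero}{2}$, $\normP{\gzero-\gzerozero}{2}$, so the total is $\lesssim\deltaN\NN^{-1/2}$ by Assumption~\ref{assumpt:DML1}. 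The hypothesis $\normone{\alpha(i)}\le L$ is what keeps both the number of terms and the coefficients $\Chzero^{-L}$ bounded uniformly in $\NN$.

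With these in hand, the analogues of Lemma~\ref{lem:vanishing-cov} and Lemma~\ref{lem:Taylor} go through unchanged: the key conditional‑$L^2$ bound $\normP{\phi(\Si,\hetaIkc)-\phi(\Si,\etazero)}{2}\lesssim\sqrt{\deltaN}\NN^{-\kappa}$ follows by telescoping the difference of the two products into at most $\normone{\alpha(i)}\le L$ increments each controlled by Assumption~\ref{assumpt:DML3}, and the covariance‑cancellation step only needs that these differences are conditionally uncorrelated unless $i$ and $j$ are close in $\tildeG$; hence the first term of the decomposition is $o_{\PP}(1)$. For the second term I would apply Stein's method as in Lemma~\ref{lem:Stein}: since $\normone{\alpha(i)}\le L$ forces $\dmax\le L-1=O(1)$, and each $\psi(\Si,\thetazeroiGlobal,\etazero)$ is a function of the variables of units within $\tildeG$‑distance two of $i$ only, the dependency graph on $\{\psi(\Si,\thetazeroiGlobal,\etazero)\}_{i\in\indset{\NN}}$ has bounded maximal degree; the scores are uniformly bounded in $L^4$ (the product of inverse propensities by $\Chzero^{-L}$, the rest by Assumption~\ref{assumpt:regularity}, i.e.\ the analogue of Lemma~\ref{lem:boundLpfournorm}); and Assumption~\ref{assumpt:variance} for this $\psi$ gives $\sigmaN\to\sigmainfty>0$. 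Lemma~1 of \citet{chin2021} then gives a vanishing Wasserstein distance, so $\NN^{-1/2}\sum_{i=1}^{\NN}\psi(\Si,\thetazeroiGlobal,\etazero)\stackrel{d}{\rightarrow}\mathcal{N}(0,\sigmainfty^2)$; combining the two pieces yields the claim, and uniformity over $P$ follows exactly as in \citet{Chernozhukov2018} since all bounds hold uniformly over $\PcalN$.

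I expect the main obstacle to be the Neyman‑orthogonality step: making the term‑by‑term cancellation rigorous requires being precise about which random variables the peeled factor $\Wm/\hzero(C_m,Z_m)$ multiplies, and this is exactly where the (notationally understated) point that the population score uses the intervened features $\Xipi$, $\Ximinuspi$ rather than the observed $\Xi$ inside $\gone,\gzero$ is essential — otherwise $\gone$ would depend on the very $\Wm$ being conditioned out and the directional derivative would fail to vanish. Everything else is a longer but routine repetition of the EATE arguments, with the bounded‑neighborhood assumption $\normone{\alpha(i)}\le L$ absorbing the combinatorial growth the product over $\alpha(i)$ would otherwise create.
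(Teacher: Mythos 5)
Your proposal is correct and follows exactly the route the paper intends: the paper gives no written proof of Theorem~\ref{thm:GaussianGlobal}, only the remark that it is analogous to Theorem~\ref{thm:Gaussian}, and your reconstruction supplies precisely the right adaptations — the peeling/nested-conditioning argument for identification and Neyman orthogonality, the use of the intervened features $\Xipi$, $\Ximinuspi$ inside $\gone,\gzero$, the bound $\Chzero^{-L}$ on the weight products, and the observation that the scores' dependency graph (now reaching $\tildeG$-distance two) still has bounded degree because $\normone{\alpha(i)}\le L$. The only point worth flagging is that you (correctly) invoke Assumption~\ref{assumpt:variance} for the new $\psi$ even though the theorem's hypothesis list omits it; this is an omission in the paper's statement rather than a gap in your argument.
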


This theorem requires that the number of spillover effects a unit receives is bounded. 
Theorem~\ref{thm:Gaussian} that establishes the parametric convergence rate and asymptotic Gaussian distribution of the EATE estimator did not require such an assumption. 
The reason is that $\hzero(\Ci, \Zi)$ represents the conditional expectation of $\Wi$ given $\Ci$ and $\Zi$ and consequently a probability taking values in the interval $(0, 1)$. If we allowed $\normone{\alpha(i)}$ to grow with $\NN$, the products $\prod_{j\in\alpha(i)} \frac{\Wj}{\h(\CZj)}$ and $\prod_{j\in\alpha(i)}\frac{1 - \Wj}{1 - \h(\CZj)}$ would diverge. 

To estimate $\sigmainfty^2$ in Theorem~\ref{thm:GaussianGlobal}, we can apply the procedure described in Section~\ref{sect:var_est}, where we replace $\psi$, $\phi$, and the point estimators by the respective new quantities. 
Also an  analogue
of Theorem~\ref{thm:var-est} holds, but where we assume the setting of Theorem~\ref{thm:GaussianGlobal} holds and that $\normone{\Acald}\to\infty$ as $\NN\to\infty$ for all $d\ge 0$. In particular, we do not require Assumption~\ref{assumpt:Acald} and~\ref{assumpt:kappa} formulated in the appendix in Section~\ref{sect:AssumptionsDefinitions}.  Furthermore, to prove consistency of the variance estimator, it is sufficient to establish that the degree-specific causal effect estimators $\hthetadGlobal$, which are defined analogously to $\hthetad$, are consistent. In particular, they are not required to converge at a particular rate.  

Also~\citet{Laan2014}, \citet{laan2017}, and \citet{sofrygin2017} consider semiparametric estimation of the GATE using TMLE. They also require a uniform bound of the number of spillover effects a unit receives to achieve the parametric convergence rate of their estimator.

\end{appendices}

\end{document}